\documentclass[11pt]{article}

\usepackage[english]{babel}
\usepackage[utf8]{inputenc}
\usepackage[T1]{fontenc}
\usepackage[margin=1in]{geometry}

\usepackage{microtype}
\usepackage{lmodern}

\usepackage{amsmath,amssymb,amsthm,mathtools,mathrsfs}
\usepackage{bm}
\usepackage{bbm}

\usepackage{graphicx}
\usepackage{subcaption}
\usepackage{booktabs}
\usepackage{multirow}
\usepackage{tabularx}
\usepackage{float}
\usepackage{rotating}

\usepackage[round]{natbib}

\usepackage[linesnumbered,ruled,vlined]{algorithm2e}
\SetKwInput{KwInput}{Input}
\SetKwInput{KwOutput}{Output}
\SetKwInput{KwInit}{Initialization}

\usepackage[hidelinks]{hyperref}
\usepackage{url}

\usepackage[normalem]{ulem} 
\usepackage{xcolor}
\usepackage{nicefrac}
\usepackage{lipsum}

\definecolor{light-gray}{gray}{0.6}
\definecolor{darkgreen}{rgb}{0.1, 0.45, 0.15}
\definecolor{darkblue}{rgb}{0.1, 0.15, 0.45}

\def\boxit#1{\vbox{\hrule\hbox{\vrule\kern6pt
          \vbox{\kern6pt#1\kern6pt}\kern6pt\vrule}\hrule}}

\newcommand{\R}{\mathbb{R}}

\newcommand{\ee}{\mathbb{E}}

  \newcommand{\cD}{\mathcal{D}}

\newcommand{\cJ}{\mathcal{J}}  
\newcommand{\cL}{\mathcal{L}}  \newcommand{\cM}{\mathcal{M}}
  
\newcommand{\cS}{\mathcal{S}}  
  
\newcommand{\cT}{\mathcal{T}}

\newcommand{\reals}{\mathbb{R}}
\renewcommand{\v}[1]{\boldsymbol{#1}}

\newcommand{\lt}{\left}
\newcommand{\rt}{\right}
\newcommand{\wt}{\widetilde}
\newcommand{\wh}{\widehat}

\newcommand{\conv}{\operatorname{conv}}
\newcommand{\ind}{\mathbb{I}}

\newcommand{\argmin}{\operatornamewithlimits{argmin}}

\DeclareMathOperator*{\minimize}{min\,}

\DeclareMathOperator*{\subjectto}{subject\,\, to\,}
\DeclareMathOperator*{\diag}{Diag\,}

\newcommand{\bbt}{\boldsymbol{\beta}}

\newcommand{\hbt}{\widehat{\bbt}}
\newcommand{\tbt}{\widetilde{\bbt}}

\newcolumntype{C}{>{\centering\arraybackslash}X}

\newtheorem{theorem}{Theorem}

\newtheorem{proposition}{Proposition}

\newtheorem{remark}{Remark}

\usepackage{authblk}

\title{Parsimonious Subset Selection for Generalized Linear Models with Biomedical Applications}

\author[1]{Anant Mathur\thanks{Corresponding author: \texttt{anant.mathur@unsw.edu.au}}}
\author[2,3]{Benoit Liquet}
\author[2]{Samuel Muller}
\author[1]{Sarat Moka}

\affil[1]{School of Mathematics and Statistics, University of New South Wales, NSW, Australia}
\affil[2]{School of Mathematical and Physical Sciences, Macquarie University, NSW, Australia}
\affil[3]{Laboratoire de Math\'ematiques et de leurs Applications, Universit\'e de Pau et des Pays de l'Adour, Pau, France}

\date{} 

\begin{document}

\maketitle

\begin{abstract}
High-dimensional biomedical studies require models that are
simultaneously accurate, sparse, and interpretable, yet exact best
subset selection for generalized linear models is computationally
intractable. We develop a scalable method that combines a continuous
Boolean relaxation of the subset problem with a Frank--Wolfe
algorithm driven by envelope gradients. The resulting method, which we refer to as COMBSS-GLM, is simple
to implement, requires one penalized generalized linear model fit per
iteration, and produces sparse models along a model-size path.
Theoretically, we identify a curvature-based parameter regime in which
the relaxed objective is concave in the selection weights, implying
that global minimizers occur at binary corners. Empirically, in
logistic and multinomial simulations across low- and high-dimensional
correlated settings, the proposed method consistently improves
variable-selection quality relative to established penalised likelihood
competitors while maintaining strong predictive performance. In biomedical applications, it recovers established loci in a binary-outcome rice genome-wide association study and achieves perfect multiclass test accuracy on the Khan SRBCT cancer dataset using a small subset of genes. Open-source implementations are available in R at \url{https://github.com/benoit-liquet/COMBSS-GLM-R} and in Python at \url{https://github.com/saratmoka/COMBSS-GLM-Python}.
\end{abstract}


\section{Introduction}
\label{s:intro}

Generalized linear models (GLMs), introduced by Nelder and Wedderburn
\citep{nelder1972generalized}, extend classical linear regression to
handle non-Gaussian outcomes, such as binary indicators, counts
(e.g., Poisson), and positive-valued responses (e.g., exponential),
by modeling the response through an exponential-family distribution
and linking its conditional mean to a linear predictor via an
appropriate link function \citep{neuhaus2011generalized,
hastie2017generalized}. This framework unifies many widely used
models, including linear, logistic, and Poisson regression. Common
link functions include the identity link (Gaussian), logit link
(Binomial), log link (Poisson), and reciprocal link (exponential).

Recent advances in data acquisition and storage have enabled the
collection of complex, high-dimensional datasets across many domains,
including biology, engineering, economics, finance, and the health
sciences \citep{FanLi2006, fan2014challenges}.
In biomedical and clinical research in particular, modern studies
routinely record thousands to millions of candidate predictors, for example, single nucleotide polymorphism (SNP) genotypes in genome-wide association studies \citep{mccouch2016open}, gene expression measurements in cancer profiling \citep{khan2001classification}, imaging-derived variables, and electronic
health records. In such settings, \emph{variable selection} is essential for both predictive
performance and scientific interpretability \citep{guyon2003introduction}.
By identifying a parsimonious set of relevant predictors, feature
selection mitigates overfitting and improves generalization, sharpens
estimation and inference by removing irrelevant or redundant
variables, and yields models that are easier to interpret and deploy
in practice \citep{hastie2015statistical}. 

A classical approach to variable selection is \emph{best subset
selection}: for a prescribed model size $k$, one seeks the subset of
$k$ predictors that yields the best fit under a chosen criterion
(typically the negative log-likelihood, possibly with a stabilizing
quadratic shrinkage term) \citep{beale1967discarding}. 
Unlike unconstrained GLM estimation, this
requires searching over an exponentially large collection of candidate
subsets, which makes the problem computationally challenging; in fact,
it is NP-hard even in the special case of linear regression
\citep{Natarajan1995}. In practice, the subset size $k$ (and any
accompanying shrinkage level) is treated as a tuning choice and is
selected to optimize predictive performance on unseen data \citep{hastie2009elements}. 

Addressing this computational challenge for generalized linear models is the focus of the present paper. Building on the continuous Boolean relaxation approach of \citet{moka2024combss}, which was developed for linear regression, and subsequently adapted to best
subset selection in linear dimension reduction models \citep{BSS_PCA_PLS}, we establish a general method for best subset selection in the GLM setting.

The key idea is to relax the intractable combinatorial search over binary subsets to a smooth optimization over the polytope $\{\v t \in [0, 1]^p : \sum_{j = 1}^p t_j = k\}$ for a prescribed model size $k$, where $t_j \in \{0,1\}$ encodes variable exclusion or inclusion. We solve this constrained problem using a variant of the Frank–Wolfe algorithm \citep{frank1956algorithm,jaggi2013revisiting}. In contrast to the original COMBSS method for linear regression, which replaces the constraint with a penalty and yields an unconstrained formulation, our approach enforces exact cardinality, thereby enabling systematic exploration of models at each prescribed $k$. We further equip the Frank--Wolfe algorithm with a homotopy scheme that progressively deforms the relaxed landscape until its concave, ensuring that the final solution converges to a binary corner of the polytope.
The resulting algorithm, which we refer to as COMBSS-GLM, is simple to implement, leverages standard
GLM solvers, such as \texttt{glmnet} \citep{friedman2010regularization}, as black-box inner routines, and scales to
high-dimensional problems with thousands of predictors. 

 Although COMBSS-GLM applies to any GLM with a convex negative log-likelihood, and the theoretical results are established at this level of generality, we conduct experiments primarily for logistic and multinomial regression, which are the most prevalent GLMs in high-dimensional biomedical classification. Our key contributions are as follows.

\begin{itemize}
  \item We formulate best subset selection in GLMs through a
    continuous Boolean relaxation that preserves the fixed-model-size
    structure and supports direct optimization on a simplex slice.
  \item We develop a homotopy Frank--Wolfe algorithm with envelope
    gradients; each iteration requires one penalized GLM fit and a
    $k$-smallest-component vertex update, with no projection step.
  \item We establish a curvature-based large-parameter regime under
    which the relaxed objective is concave in the selection weights,
    implying that global minimizers lie at binary corners; for
    logistic and baseline multinomial models, we provide explicit
    sufficient thresholds.
  \item Through simulations and two biomedical case studies ,a high-dimensional rice GWAS (p = 158{,}210 SNPs; \citep{mccouch2016open}) and a multiclass cancer gene-expression problem (p = 2{,}308, C=4 tumour types; \citep{khan2001classification}), we show that COMBSS-GLM improves sparse recovery while maintaining strong predictive performance, and yields highly parsimonious models in practice.
\end{itemize}

The remaining paper is organized as follows. In Section~\ref{s:methodology}, we introduce relevant notation, formulate best subset selection for GLMs, and develop the continuous Boolean relaxation underlying COMBSS-GLM.
In Section~\ref{sec:algorithm}, we present the COMBSS-GLM algorithm, establish the concavity threshold and monotonicity properties of the
relaxed objective, and discuss hyperparameter tuning including the default choice of the homotopy grid size, $N$.
In Section~\ref{s:simulations}, we report simulation studies comparing COMBSS-GLM against penalized regression competitors in the logistic
regression setting.
In Section~\ref{sec:applications}, we apply the COMBSS-GLM algorithm to two biomedical
datasets: a rice genome-wide association study (GWAS) study and the small round blue cell tumor (SRBCT) Khan cancer gene-expression
dataset. Concluding remarks are discussed in Section~\ref{sec:conclusion}.
Proofs and additional numerical results are collated in the
supplementary material.

\section{Methodology}
\label{s:methodology}


Throughout the paper, all vectors are understood to be column vectors.
For a scalar $a_0$ and a vector $\v a = (a_1, a_2, \dots, a_p)^{\top}$, we
write $(a_0, \v a)$ to denote the concatenated vector
$(a_0, a_1, a_2, \dots, a_p)^{\top}$.
For a binary vector $\v s = (s_1, \dots, s_p)^{\top} \in \{0,1\}^{p}$, we
denote by $|\v s|$ the number of nonzero (equal to one) entries in
$\v s$.
For vectors $\v a, \v b \in \reals^p$, the Hadamard (elementwise)
product is denoted by $\v a \odot \v b := (a_1 b_1, \dots, a_p b_p)^{\top}$.
For a vector $\v v \in \reals^p$, $\diag(\v v)$ denotes the
$p \times p$ diagonal matrix with diagonal entries given by $\v v$.
We denote by $\v 1 \in \reals^p$ the vector of all ones.
For a function~$f$, $\argmin_x f(x)$ denotes the set of minimizers of
$f$ over its domain.
All vector square roots are understood elementwise. For a set $A \subseteq \reals^p$, $\conv\{A\}$ denotes the convex hull of
$A$, namely the set of all convex combinations of points in $A$.

\subsection{Preliminaries}
\label{ss:prelim}

We now provide a mathematical formulation of GLMs and best subset selection for these models.
Let $\cD=(X,\v y)$, where $X\in\R^{n\times(p+1)}$ is a design matrix
whose rows have the form $(1,x_1,\dots,x_p)$ (the leading $1$
corresponding to the intercept), and $\v y=(y_1,\dots,y_n)^\top$ is
the response vector with entries drawn from an exponential-family
distribution.
In a GLM, an invertible link function $g$ relates the conditional mean
to a linear predictor via
\(
  g\!\left(\ee[Y\mid \v x]\right)=\beta_0+\v x^\top\beta,
\)
where $\beta_0\in\R$ is an intercept and
$\v \beta=(\beta_1,\dots,\beta_p)^\top\in\R^p$ is a coefficient vector
\citep{mccullagh2019generalized}.


The parameters $(\beta_0,\bbt)$ are estimated by maximum likelihood.
Writing $\ell(\beta_0,\bbt;\cD):=\log L(\beta_0,\bbt;\cD)$ for the
log-likelihood, the MLE solves
\begin{align}
  \label{eqn:mle-opt}
  \min_{\beta_0\in\reals,\;\bbt\in\reals^p} \; -\ell(\beta_0,\bbt;\cD).
\end{align}
For standard GLMs (Gaussian, Binomial, Poisson),
$-\ell(\beta_0,\bbt;\cD)$ is convex in $(\beta_0,\bbt)$
\citep{mccullagh2019generalized}. However, except in the case of linear regression the
minimizer has no closed form and is computed numerically
\citep{boyd2004convex}.


In this GLM setting, a common formulation of best subset selection is the cardinality-constrained regularized MLE               
\begin{align}
\begin{aligned}
  \minimize_{\beta_0 \in \reals,\; \bbt \in \reals^{p}} \quad
  &-\frac{1}{n}\ell(\beta_0,\bbt;\cD) + \lambda \|\bbt\|_2^2 \\
  \subjectto \quad &\|\bbt\|_0 = k,
\end{aligned}
\label{eqn:bss-l0}
\end{align}
for a prescribed model size $k$, where
$\|\bbt\|_0 := \sum_{j \in \cJ} \ind(\beta_j \neq 0)$ counts the
number of nonzero coefficients subjected to selection (with $\cJ$
typically excluding the intercept and any mandatory variables), and
$\lambda \ge 0$ is an optional ridge penalty.
In practice, $k$ and $\lambda$ are chosen to minimise an out-of-sample
criterion, such as validation or cross-validation error.

\subsection{Boolean Relaxation}
\label{eqn:continuous-relax}

In this section, we provide a continuous relaxation of the best subset
selection problem stated in~\eqref{eqn:bss-l0}.
Towards this, we first reformulate \eqref{eqn:bss-l0} as an equivalent
binary constrained optimization.
Since the design matrix $X$ has $p+1$ columns (including the
intercept), we number them using the index $j = 0, 1,2, \dots, p$.
To keep it general, we assume that $m$ of the variables, in addition to
the intercept, are mandatory and are always included in the model.
Without loss of generality, we assume they are the first $m$ variables
indexed from $1$ to $m$, while the index $0$ corresponds to the
intercept.

For any binary vector $\v s = (s_1, \dots, s_{p-m})^{\top} \in
\{0,1\}^{p-m}$, let $X_{[ \v s]} \in \reals^{n \times (m + 1 + |\v
s|)}$ denote the submatrix of $X$ obtained by retaining the intercept
column, mandatory columns and all other columns with indices $m+j$
where $s_j = 1$.
With this notation, define $\cD_{[\v s]} = (X_{[\v s]}, \v y)$, and
for every $\v s \in \{0,1\}^{p-m}$, let
$(\wh{\beta}_{\v s, 0}, \hbt_{\v s})$ be the low-dimensional MLE
solution of the problem
\begin{align}
  \min_{\beta_0 \in \reals, \, \bbt \in \reals^{|\v s|}}
  - \frac{1}{n}\ell(\beta_0, \bbt; \cD_{[\v s]}) +
  \lambda \|\bbt\|^2_2,
  \label{eqn:llhf_min}
\end{align}
where $\ell(\beta_0, \bbt; \cD_{[\v s]})$ is the log-likelihood
function of the GLM of interest with the dataset $\cD_{[\v s]}$.
Furthermore, define
\[
  \cS_k = \{\v s \in \{0, 1\}^{p-m} : |\v s| = k\},
\]
which is the set of all binary vectors of length $p-m$ with exactly
$k$ ones.
Then, solving \eqref{eqn:bss-l0} is equivalent to solving the binary
constrained problem
\begin{align}
  \begin{aligned}
    \minimize_{{\v s} \in \cS_k} \,\,
    -\frac{1}{n}\ell(\wh{\beta}_{\v s, 0}, \hbt_{\v s};
    \cD_{[\v s]}) + \lambda \|\hbt_{\v s}\|^2_2.
  \end{aligned}
  \label{eqn:dbss}
\end{align}

For any fixed $\v s$, finding a good approximation to the MLE
$(\wh{\beta}_{\v s, 0}, \hbt_{\v s})$ is relatively easy using a
standard software package such as {\sf glmnet} in {\sf R}.
In linear regression, we even have a closed-form expression, namely
the ridge estimator.
However, as mentioned earlier, solving the combinatorial problem
\eqref{eqn:dbss} exactly is known to be NP-hard, making best subset
selection a difficult task for larger values of $p$. For instance, the R package \texttt{bestglm}  \citep{bestglmR} based on the leaps-and-bounds method \citep{furnival2000regressions} can handle only data with dimension $p \leq 30$.

To overcome this challenge, we provide a Boolean relaxation
of~\eqref{eqn:dbss} by moving the constraints from the binary space
$\{0, 1\}^{p-m}$ to the hypercube $[0, 1]^{p-m}$.
For $\v t = (t_1, \dots, t_{p-m})^\top \in [0, 1]^{p-m}$, let
\[
  T_{\v t} = \diag(\underbrace{1, 1, \dots, 1}_{m\;\text{times}},
  t_1, t_2, \dots, t_{p-m})
  \quad\text{and}\quad
  \Gamma_{\v t} = \sqrt{I - T_{\v t}^2},
\]
where $I$ denotes an identity matrix of appropriate dimension.
Note that $\Gamma_{\v t}$ is a square diagonal matrix of size
$p \times p$ with zeros in the first $m$ locations and
$\sqrt{1 - t_j^2}$ for the next $m+j$ locations.
Now take $\delta > 0$ and define
\begin{align}
  h_{\delta, \lambda}(\v t, \beta_0, \bbt) :=
  -\frac{1}{n}\ell(\beta_0, T_{\v t} \bbt; \cD)
  + \lambda \|\bbt\|_2^2
  + \delta \|\Gamma_{\v t}\bbt\|_2^2,
  \label{eqn:h-def}
\end{align}
where $\ell$ is the log-likelihood function of the GLM under
consideration.
Note that $\ell(\beta_0, T_{\v t}\bbt; \cD)$ is obtained by replacing
the coefficient vector $\bbt$ with
\[
  T_{\v t} \bbt =
  (\beta_1, \dots, \beta_m, t_1 \beta_{m+1}, \dots, t_{p-m} \beta_p)^{\top}
\]
in the log-likelihood function~$\ell$.
For each $\v t \in [0, 1]^{p-m}$, let
\begin{align}
  (\wt{\beta}_{\v t, 0}, \tbt_{\v t})
  \in \argmin_{\beta_0, \bbt} \;
  h_{\delta,\lambda}(\v t,\beta_0, \bbt).
  \label{eqn:mle-solution}
\end{align}
Since $\delta>0$, the additional ridge penalty
$\delta \|\Gamma_{\v t}\bbt\|_2^2
= \delta \sum_{j=1}^{p-m} (1-t_j^2)\beta_{m+j}^2$
in $h_{\delta,\lambda}(\v t,\beta_0, \bbt)$ penalizes each coefficient
$\beta_{m+j}$ separately so that we enforce the support implied by
$\v t$: if $t_j=0$ then $(\tbt_{\v t})_{m+j}=0$, whereas when
$t_j=1$ the coefficient $(\tbt_{\v t})_{m+j}$ incurs no penalization.
By defining
\[
  \cT_k := \{\v t \in [0, 1]^{p-m} : \v 1^\top \v t = k\},
\]
which is a relaxation of $\cS_k$, and by using a solution
$(\wt{\beta}_{\v t, 0}, \tbt_{\v t})$ of \eqref{eqn:mle-solution},
we define
\begin{align}
  f_{\delta, \lambda}(\v t) =
  h_{\delta, \lambda} (\v t, \wt{\beta}_{\v t, 0}, \tbt_{\v t}),
  \label{eqn:f_delta}
\end{align}
and write a Boolean relaxation of \eqref{eqn:dbss} as
\begin{align}
  \minimize_{{\v t} \in \cT_{k}} \, f_{\delta, \lambda}(\v t).
  \label{eqn:cbss}
\end{align}
Here, $f_{\delta, \lambda}(\v t)$ depends on $\delta$ through
$(\wt{\beta}_{\v t, 0}, \tbt_{\v t})$.

In the sequel, we develop a gradient-based algorithm for
\eqref{eqn:cbss} that iteratively updates $\v t$ using the gradient
$\nabla f_{\delta,\lambda}(\v t)$ where $\v t \in (0, 1)^{p-m}$.
In the special case of linear regression, the inner minimizer
$(\wt{\beta}_{\v t,0},\tbt_{\v t})$ admits a closed-form expression
for interior points $\v t\in(0,1)^{p-m}$, and one may compute
$\nabla f_{\delta,\lambda}(\v t)$ by differentiating this closed form
with respect to $\v t$; see \citep{moka2024combss}.
For general GLMs, however, $(\wt{\beta}_{\v t,0},\tbt_{\v t})$
typically has no closed-form expression, and differentiating the
optimizer is undesirable.
Fortunately, Danskin's envelope theorem yields an exact expression for
$\nabla f_{\delta,\lambda}(\v t)$ on the interior that does not
require differentiating $(\wt{\beta}_{\v t,0},\tbt_{\v t})$; see
Section~\ref{sec:inner-solvers} for details.

\begin{figure}[t]
  \centering
  \includegraphics[width=0.5\textwidth]{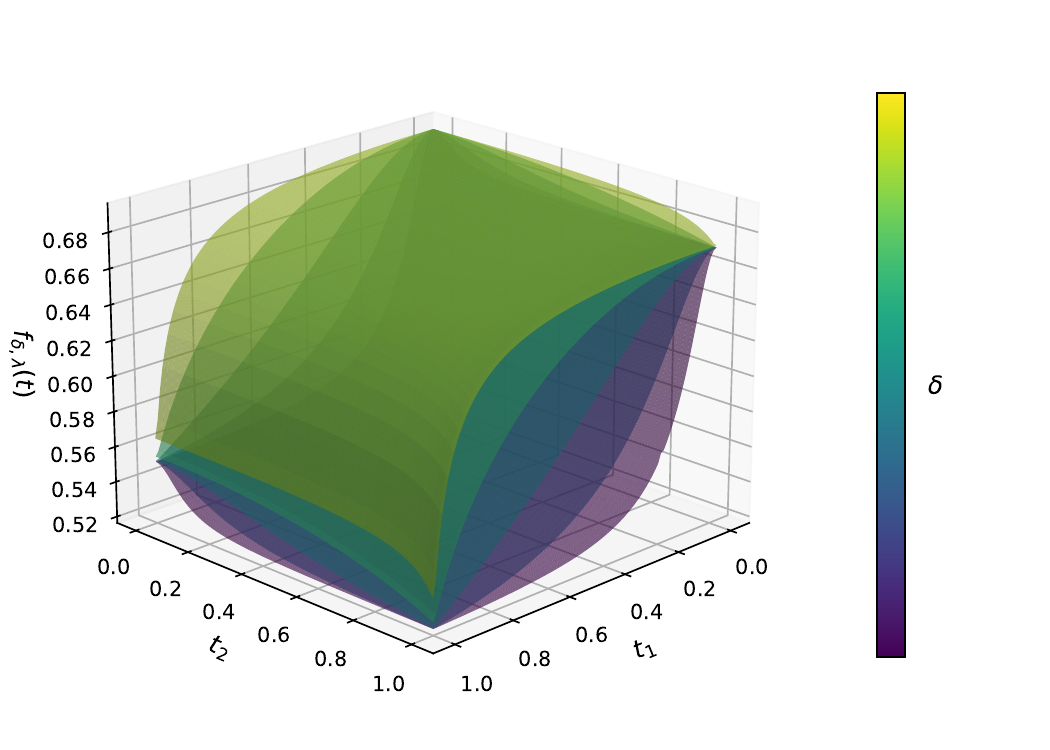}
  \caption{Boolean relaxation surface $f_{\delta,\lambda}(\v t)$ for
    logistic regression with $\lambda = 0$ and varying curvature
    parameter $\delta$. The surfaces are plotted over the domain
    $\v t = (t_1, t_2) \in [0,1]^2$ with five geometrically
    increasing values of $\delta$ shown in different colours according
    to the colour bar. As $\delta$ increases, the relaxed objective
    becomes more peaked, driving the solution toward binary corners.}
  \label{fig:logistic-surfaces}
\end{figure}

\section{Algorithm}
\label{sec:algorithm}

We now present a simple homotopy algorithm, adapted from
\citet{jaggi2013revisiting} and \citet{moka2025scalable}, for
minimizing the relaxed objective in \eqref{eqn:cbss}.
The method increases the curvature parameter $\delta$ geometrically and
performs a Frank--Wolfe (FW) update on $\v t$ at each stage using an
\emph{envelope (sub)gradient} of the value function
\[
  f_{\delta,\lambda}(\v t)
  := \min_{\beta_0,\bbt}\,h_{\delta,\lambda}(\v t,\beta_0,\bbt).
\]
Let
$\cM_{\delta,\lambda}(\v t)
:=\argmin_{\beta_0,\bbt}\,h_{\delta,\lambda}(\v t,\beta_0,\bbt)$,
and define the envelope-gradient set
\[
  \partial^{\mathrm{env}} f_{\delta,\lambda}(\v t)
  :=
  \operatorname{conv}\Bigl\{
  \nabla_{\v t}h_{\delta,\lambda}(\v t,\beta_0,\bbt)
  :(\beta_0,\bbt)\in \cM_{\delta,\lambda}(\v t)
  \Bigr\}.
\]
Given any $(\wt{\beta}_{\v t,0},\tbt_{\v t})\in\cM_{\delta,\lambda}(\v t)$
(see \eqref{eqn:mle-solution}), we take
\[
  \v g_\delta(\v t)
  :=\nabla_{\v t} h_{\delta,\lambda}
  \bigl(\v t,\wt{\beta}_{\v t,0},\tbt_{\v t}\bigr)
  \in \partial^{\mathrm{env}} f_{\delta,\lambda}(\v t),
\]
which coincides with the gradient $\nabla f_{\delta,\lambda}(\v t)$
whenever the minimum is attained and the minimizer is unique (see
Section~\ref{sec:uniqueness-inner-sol}).
Each iteration requires only (i)~solving the inner penalized GLM once
to obtain $\v g(\v t)$, and (ii)~selecting the $k$ smallest
components of $\v g(\v t)$ to form a vertex $\v s\in\{0,1\}^{p-m}$.
No projections are needed and the update is an explicit convex
combination, making the method straightforward to implement.
Details on how to employ existing GLM solvers to compute
$\v g_\delta(\v t)$ at a given $\v t$ are provided in
Section~\ref{sec:inner-solvers}.

\begin{algorithm}[h]
\caption{COMBSS-GLM}
\label{alg:fw-homotopy-joint}
\KwInput{Uniform penalty $\lambda\ge 0$,
  sparsity level $k$,
  initial curvature parameter $\delta_{\min}>0$, max curvature parameter $\delta_{\max} > \delta_{\min}$,
  grid size $N$, and learning rate $\alpha$}
\KwInit{$i\leftarrow 0$,
  $\v t\leftarrow \lt(\frac{k}{p-m}\rt)\v 1$, Set $r=(\delta_{\max}/\delta_{\min})^{1/N}$}
\For{$i = \{1, 2, \dots,  2N\}$}{
    $\delta \leftarrow \min\{\delta_{\min}r^i,\delta_{\max}\}$\;
  Compute $\v g_\delta(\v t)\in
    \partial^{\mathrm{env}} f_{\delta,\lambda}(\v t)$\;
  \label{step:grad}
  Let $\v s\in\{0,1\}^{p-m}$ select the $k$ smallest components of
    $\v g_\delta(\v t)$\;
  $\v t \leftarrow (1-\alpha)\v t + \alpha\,\v s$\;
}
Fit $(\beta_0,\v\beta)$ on the last vertex $\v s$\;
\KwOutput{$\v s$ and $\v\beta$.}
\end{algorithm}

The rationale for increasing $\delta$ gradually, rather than using a
fixed value, rests on three properties established in this paper.
First, by Proposition~\ref{prop:monotone-continuous-delta},
$f_{\delta,\lambda}(\v t)$ is continuous and non-decreasing in $\delta$
for each fixed $\v t$; in particular, the relaxed objective and its
minimisers change \emph{continuously} as $\delta$ varies.
Second, at $\delta=0$ with $\lambda=0$, the value function is constant
in $\v t$ (since the inner minimisation over $\bbt$ is equivalent, via
the substitution $\bbt_u\mapsto\v t\odot\bbt_u$, to an unconstrained
minimisation that is independent of $\v t$), hence trivially convex,
and the landscape admits effective gradient-based exploration.
Third, by Theorem~\ref{prop:concavity-threshold-unified}, once $\delta$
reaches $\delta_{\mathrm{conc}}$, $f_{\delta,\lambda}$ becomes concave
over $\cT_k$, guaranteeing that every minimiser is a binary corner
$\v s\in\cS_k$.
The geometric homotopy schedule therefore continuously deforms the
relaxed objective from this tractable small-$\delta$ regime to a
concave regime whose global minimisers are exactly the discrete subsets
sought (at $\delta_{\max}=\delta_{\mathrm{conc}}$), steering the
Frank--Wolfe iterates along this continuous path toward an optimal or
near-optimal binary corner.

Algorithm~\ref{alg:fw-homotopy-joint} initializes at the interior
point $\v t_0 := \frac{k}{p-m}\v 1$, the centroid of $\cT_k$, and
from a small curvature level $\delta_{\min}>0$, which is then increased
geometrically as $\delta_i=\delta_{\min}r^i$ with $r>1$.
At each iteration, we compute an envelope (sub)gradient
$\v g_{\delta}$ of the value function $f_{\delta,\lambda}$ at the
current $\v t$ and $\delta$.
The Frank--Wolfe update requires solving the \emph{linear minimization
oracle} (LMO) over the feasible set $\cT_k$.
Given the current envelope (sub)gradient $\v g_{\delta_i}(\v t)$, the
LMO is the linear program
\begin{equation}
  \label{eqn:lmo}
  \v s \in \argmin_{\v u\in \cT_k}\ \langle \v g_{\delta_i}(\v t),\,\v u\rangle.
\end{equation}
Because the objective in \eqref{eqn:lmo} is linear and $\cT_k$ is a
polytope, an optimal solution is attained at an extreme point of
$\cT_k$.
As noted earlier, these extreme points are precisely the binary
vertices in $\cS_k$; hence the LMO admits a binary solution
$\v s\in\cS_k$.
Moreover, for any $\v s\in\cS_k$ we have
$\langle \v g, \v s\rangle=\sum_{j:s_j=1} g_j$, so minimizing
$\langle \v g,\v s\rangle$ reduces to selecting the $k$ smallest
entries of $\v g_{\delta_i,\lambda}(\v t)$ (breaking ties
arbitrarily) and setting the corresponding components of $\v s$ to
$1$, with the rest set to $0$.
We then update $\v t$ by the explicit convex combination
$\v t\leftarrow (1-\alpha)\v t+\alpha\v s$, which moves $\v t$
toward this sparse vertex without requiring any projection.
The procedure terminates either after $N$ iterations or once $\v t$
is within $\varepsilon$ (in max-norm) of the current vertex,
indicating that $\v t$ is close to a binary corner.
Finally, we refit the GLM on the selected support $\v s$ using 
$X_{[\v s]}$ to obtain the
reported coefficients.
\subsection{Column Normalization and Invariance of the Selection Problem}
\label{sec:col-normalization}

For numerical stability, we apply a column-wise normalization to the
non-intercept columns of the design matrix.
This preprocessing does not change the subset-selection problem, since
it corresponds to a simple reparameterization.

Let $X_{:j}$ denote the $j$th column of $X$ for $j=0,1,\dots,p$,
where $X_{:0}=\v 1$ is the intercept column.
For $j=1,\dots,p$, define the column lengths $v_j := \|X_{:j}\|_2$,
and assume $v_j>0$ (columns with $v_j=0$ can be removed).
Set $v_0:=1$ and let $\v v:=(v_0,\dots,v_p)^\top$.
Define the normalized design
\begin{equation}
  \label{eqn:scaling}
  \wh X := X\,\diag(1/\v v),
\end{equation}
so that $\wh X_{:0}=X_{:0}$ and $\wh X_{:j}=X_{:j}/v_j$ for $j\ge 1$.

To see the invariance, recall that the log-likelihood depends on
predictors only through the linear predictor
$\beta_0+\sum_{j=1}^p x_j\beta_j$.
Since $v_j>0$, we have
\[
  \sum_{j=1}^p x_j\beta_j
  =
  \sum_{j=1}^p \Bigl(\frac{x_j}{v_j}\Bigr)\,(v_j\beta_j).
\]
Thus, defining $\widehat\beta_j:=v_j\beta_j$ (for $j\ge 1$) and
$\widehat x_j:=x_j/v_j$ leaves the linear predictor unchanged:
$\beta_0+\v x^\top\bbt=\beta_0+\widehat{\v x}^{\,\top}\widehat{\bbt}$.
Consequently, any likelihood-based subset-selection formulation
(including \eqref{eqn:dbss} and its relaxation) can be equivalently
posed using $\wh X$ in place of $X$, with a one-to-one correspondence
between coefficient vectors.
After the algorithm outputs a support $\v s$, we refit the
corresponding low-dimensional GLM on the \emph{original} design
matrix $X$ to report coefficients on the original scale.

\subsection{Fast Inner Fits and Gradients via Ridge Solvers}
\label{sec:inner-solvers}

Due to the Frank--Wolfe step updates, we compute $\v g(\v t)$ only at
interior points $\v t\in(0,1)^{p-m}$.
At such points, it is convenient to rewrite the inner objective so
that $\v t$ appears only through an $\cL_2$-penalty, enabling the use
of standard ridge-penalized GLM solvers.

Recall that the inner objective is
\[
  h_{\delta,\lambda}(\v t,\beta_0,\v\beta)
  =
  -\frac{1}{n}\ell(\beta_0,T_{\v t}\v\beta;\cD)
  +\lambda\|\v\beta\|_2^2
  +\delta\|\Gamma_{\v t}\v\beta\|_2^2.
\]
For an interior point $\v t\in(0,1)^{p-m}$, the change of variables
$\xi_0:=\beta_0$ and $\v\xi := T_{\v t}\v\beta$ gives
$\v\beta=T_{\v t}^{-1}\v\xi$ and
\begin{align}
  \phi_{\delta,\lambda}(\v t,\xi_0,\v\xi)
  &:=
  -\frac{1}{n}\ell(\xi_0,\v\xi;\cD)
  +\lambda\|T_{\v t}^{-1}\v\xi\|_2^2
  +\delta\|\Gamma_{\v t}T_{\v t}^{-1}\v\xi\|_2^2 \nonumber\\
  &=
  -\frac{1}{n}\ell(\xi_0,\v\xi;\cD)
  +\Bigl\|\bigl(\lambda I+\delta\,\Gamma_{\v t}^2\bigr)^{1/2}
  T_{\v t}^{-1}\v\xi\Bigr\|_2^2,
  \label{eqn:phi-def}
\end{align}
so that $f_{\delta,\lambda}(\v t)
=\min_{\xi_0\in\reals,\;\v\xi\in\reals^p}
\phi_{\delta,\lambda}(\v t,\xi_0,\v\xi)$.
Using the envelope theorem from \citet{ruszczynski2011nonlinear},
\[
  \nabla f_{\delta,\lambda}(\v t)
  =\nabla_{\v t}\phi_{\delta,\lambda}(\v t,\wt\xi_0,\wt{\v\xi}),
\]
where $(\wt\xi_0,\wt{\v\xi})$ is any minimizer.
Since the negative log-likelihood term does not depend on $\v t$, and
the penalty is coordinate-separable,
\begin{align}
  \Bigl\|\bigl(\lambda I+\delta\,\Gamma_{\v t}^2\bigr)^{1/2}
  T_{\v t}^{-1}\v\xi\Bigr\|_2^2
  &=
  \lambda\sum_{j=1}^{m}\xi_j^2
  +\sum_{j=1}^{p-m}\left(\frac{\lambda+\delta}{t_j^{2}}-\delta\right)
  \xi_{m+j}^2,
  \label{eqn:penalty-expansion-vec}
\end{align}
it follows that for each $j=1,\dots,p-m$,
\[
  \frac{\partial}{\partial t_j}
  \phi_{\delta,\lambda}(\v t,\xi_0,\v\xi)
  =
  -2(\lambda+\delta)\,\frac{\xi_{m+j}^2}{t_j^{3}},
\]
and hence
\[
  \nabla_{\v t} \phi_{\delta,\lambda}(\v t,\xi_0,\v\xi)
  =
  -2(\lambda+\delta)\,
  \begin{pmatrix}
    \xi_{m+1}^2/t_1^3\\ \vdots\\ \xi_{p}^2/t_{p-m}^3
  \end{pmatrix}.
\]

\noindent{\bf Using solvers without coefficient-wise penalties.}
While packages such as \texttt{glmnet} in {\sf R} allow
coefficient-wise ridge weights, many solvers expose only a uniform
$\cL_2$-penalty.
To take advantage of such solvers, fix an interior $\v t$ and define
\[
  \omega_j(\v t):=
  \begin{cases}
    \lambda, & j=1,\dots,m,\\[2pt]
    \displaystyle \frac{\lambda+\delta}{t_{j-m}^{2}}-\delta,
    & j=m+1,\dots,p.
  \end{cases}
\]
Introduce the reparameterization
$\v\theta=\sqrt{\v\omega(\v t)}\odot\v\xi$,
i.e., $\theta_j=\sqrt{\omega_j(\v t)}\,\xi_j$, and scale the design
columns accordingly using
$\widetilde X_{\v t}:=X\,\diag\!\bigl(\v\omega(\v t)^{-1/2}\bigr)$.
Then $X\v\xi=\widetilde X_{\v t}\v\theta$ and
$\sum_{j=1}^p \omega_j(\v t)\,\xi_j^2=\|\v\theta\|_2^2$, so the
inner problem can be solved with any ridge GLM solver on
$(\widetilde X_{\v t},\v y)$ using a uniform $\cL_2$-penalty.
After solving for $\wt{\v\theta}$, recover
$\wt{\v\xi}=\v\omega(\v t)^{-1/2}\odot \wt{\v\theta}$.

\subsection{Uniqueness of the Inner Minimizer on the Interior}
\label{sec:uniqueness-inner-sol}

Fix an interior point $\v t\in(0,1)^{p-m}$.

\paragraph{No mandatory (unpenalized) variables}
If $m=0$, then for any interior point $\v t\in(0,1)^p$ we have
$\Gamma_{\v t}^2=I-T_{\v t}^2$ with strictly positive diagonal, and
\[
  \lambda\|\bbt\|_2^2+\delta\|\Gamma_{\v t}\bbt\|_2^2
  =
  \sum_{j=1}^{p}\bigl(\lambda+\delta(1-t_j^2)\bigr)\beta_j^2,
\]
so $h_{\delta,\lambda}(\v t,\beta_0,\bbt)$ is \emph{strongly convex}
in $\bbt$ (even when $\lambda=0$, since $\delta>0$ and
$t_j\in(0,1)$).
Consequently, whenever $-\ell(\beta_0,\bbt;\cD)$ is convex in
$(\beta_0,\bbt)$ and the minimum is attained, the coefficient
minimizer is unique.
Under standard additional conditions ensuring strict convexity in the
intercept direction, the pair $(\wt\beta_{\v t,0},\tbt_{\v t})$ is
unique.
The same conclusions hold for the reparameterized inner objective
$\phi_{\delta,\lambda}(\v t,\xi_0,\v\xi)$, since for interior $\v t$
the change of variables $(\xi_0,\v\xi)=(\beta_0,T_{\v t}\bbt)$ is
one-to-one.

\paragraph{Intercept and $m$ mandatory variables}
When an intercept and $m$ mandatory coefficients are left unscaled by
$\v t$, the quadratic term splits as
\[
  \lambda\|\bbt\|_2^2+\delta\|\Gamma_{\v t}\bbt\|_2^2
  =
  \lambda\sum_{j=1}^{m}\beta_j^2
  +\sum_{j=1}^{p-m}\bigl(\lambda+\delta(1-t_j^2)\bigr)\beta_{m+j}^2,
\]
so the non-mandatory coefficients always carry a strictly positive
quadratic weight on the interior.
Existence of a minimizer follows, for example, if
$-\ell(\beta_0,\bbt;\cD)$ is convex and coercive in the unpenalized
directions (intercept and any mandatory coefficients); see, e.g.,
\citep[Chapters~2--3]{mccullagh2019generalized}.
Uniqueness holds whenever the overall inner objective is strictly
convex in $(\beta_0,\bbt)$; a sufficient condition is strict
convexity of $-\ell(\beta_0,\bbt;\cD)$, while taking $\lambda>0$
ensures strong convexity in all mandatory coefficients (and in all
coefficients if $m=0$).

\paragraph{Non-unique inner minimizer}
Even if the inner minimizer is not unique, the algorithm remains
well-defined.
Since for every fixed $(\xi_0,\v\xi)$ the map
$\v t\mapsto \phi_{\delta,\lambda}(\v t,\xi_0,\v\xi)$ is continuously
differentiable on $(0,1)^{p-m}$, Danskin--Clarke envelope results
apply to the value function
$f_{\delta,\lambda}(\v t)
=\min_{\xi_0,\v\xi}\phi_{\delta,\lambda}(\v t,\xi_0,\v\xi)$,
and the subdifferential satisfies
\[
  \partial f_{\delta,\lambda}(\v t)
  =
  \operatorname{conv}\Bigl\{
  \nabla_{\v t}\phi_{\delta,\lambda}(\v t,\xi_0,\v\xi)
  :(\xi_0,\v\xi)\in\cM(\v t)\Bigr\}.
\]
Hence, for any $(\wt\xi_0,\wt{\v\xi})\in\cM(\v t)$, the vector
$\v g(\v t) =\nabla_{\v t}\phi_{\delta,\lambda}(\v t,\wt\xi_0,\wt{\v\xi})$
is a valid generalized subgradient for use in Step~\ref{step:grad}.
When $\cM(\v t)$ is a singleton, this reduces to the ordinary gradient
$\v g(\v t)=\nabla f_{\delta,\lambda}(\v t)$.

\subsection{Properties of the Objective Function}
\label{sec:fw-convergence}
We now establish continuity and large-$\delta$ concavity properties of the relaxed objective $f_{\delta,\lambda}(\v t)$. Proofs of all the results reported here are in the supplemetary material.

\begin{proposition}[Monotonicity and continuity in $\delta$]
\label{prop:monotone-continuous-delta}
For every fixed $\v t\in(0,1)^{p-m}$, the map
$\delta\mapsto f_{\delta,\lambda}(\v t)$ is 
monotone non-decreasing 
and continuous in $\delta>0$.
\end{proposition}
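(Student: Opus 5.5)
Fix $\v t\in(0,1)^{p-m}$ and $\lambda\ge 0$. We view $f_{\delta,\lambda}(\v t)$ as the infimum over $(\beta_0,\bbt)$ of the family
\[
  \delta\mapsto h_{\delta,\lambda}(\v t,\beta_0,\bbt)
  = A(\beta_0,\bbt)+\delta\,B(\bbt),
\]
where $A(\beta_0,\bbt):=-\frac1n\ell(\beta_0,T_{\v t}\bbt;\cD)+\lambda\|\bbt\|_2^2$ and $B(\bbt):=\|\Gamma_{\v t}\bbt\|_2^2\ge 0$. Each member of the family is affine in $\delta$ with nonnegative slope. Monotonicity follows at once. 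If $\delta_1<\delta_2$, then $h_{\delta_1,\lambda}\le h_{\delta_2,\lambda}$ pointwise in $(\beta_0,\bbt)$, so taking infima gives $f_{\delta_1,\lambda}(\v t)\le f_{\delta_2,\lambda}(\v t)$. This step needs neither attainment nor uniqueness of the minimizer.

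For continuity we use concavity. An infimum of affine functions of $\delta$ is concave in $\delta$. We will check that $f_{\delta,\lambda}(\v t)$ is finite for every $\delta>0$. It is bounded above by $h_{\delta,\lambda}(\v t,0,\v 0)$, which is finite. It is bounded below because the negative log-likelihood is bounded below for the GLMs considered: for Bernoulli/multinomial it is $\ge 0$, and in general we assume the inner problem has a finite value, as was implicitly assumed when defining $f_{\delta,\lambda}$ via \eqref{eqn:mle-solution}. A finite concave function on the open interval $(0,\infty)$ is continuous there, which gives continuity on $\delta>0$.

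The main obstacle is the finiteness, i.e.\ the lower bound, in general exponential families. If we want to avoid appealing to concavity, we can give a direct one-sided argument instead. Right-continuity follows from monotonicity together with the estimate $f_{\delta',\lambda}\le h_{\delta',\lambda}(\v t,\wt\beta_{\v t,0},\tbt_{\v t})=f_{\delta,\lambda}+(\delta'-\delta)B(\tbt_{\v t})$, which uses the minimizer at $\delta$. Left-continuity uses the symmetric bound with minimizers at $\delta'<\delta$. That bound requires $B(\tbt_{\v t,\delta'})$ to stay bounded as $\delta'\uparrow\delta$, which can be obtained from $\delta' B\le f_{\delta',\lambda}-\inf A\le f_{\delta,\lambda}-\inf A$. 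The concavity route already covers this, so we will present it and mention the direct bound as a remark.
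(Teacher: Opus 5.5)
Your monotonicity argument is the same as the paper's: $h_{\delta,\lambda}$ is ordered pointwise in $\delta$ because $1-t_j^2\ge 0$, and the ordering passes to the infimum.

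For continuity you take a genuinely different route. The paper invokes Berge's maximum theorem. It uses joint continuity of $h_{\delta,\lambda}$ in $(\delta,\beta_0,\bbt)$ together with coercivity, which it attributes to $\delta\|\Gamma_{\v t}\bbt\|_2^2$, to obtain a compact-valued minimizing correspondence and hence a continuous value function. You instead note that $f_{\delta,\lambda}(\v t)$ is an infimum of functions affine in $\delta$, so it is concave, and a finite concave function on the open interval $(0,\infty)$ is continuous.

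Your route is more elementary and needs strictly less: only finiteness of the value, with no attainment, coercivity or compactness. Finiteness requires a negative log-likelihood bounded below and one point with finite objective. For a reciprocal link, $(0,\v 0)$ itself has infinite objective, so use any feasible point there.

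This is a real advantage. The term $\delta\|\Gamma_{\v t}\bbt\|_2^2$ does not act on the intercept or the mandatory coefficients, so with $\lambda=0$ the paper's coercivity claim needs an extra data condition (e.g.\ no separation in logistic regression). Berge's theorem also has to be localized to a single compact sublevel set valid for all $\delta$ in a neighbourhood, since the feasible set is all of $\reals^{p+1}$. Concavity further gives you local Lipschitz continuity and one-sided derivatives in $\delta$ for free.

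What the Berge route buys, when its hypotheses hold, is upper hemicontinuity of the inner minimizer set as $\delta$ varies, which your argument does not address. The direct one-sided bounds in your closing remark are also correct, and they can be run with $\varepsilon$-minimizers if attainment is in doubt.
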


Let $X_u\in\reals^{n\times (p-m)}$ denote the submatrix of
non-mandatory columns.
For fixed $(\beta_0,\bbt)$ and interior $\v t$, write
$\bbt_u\in\reals^{p-m}$ for the non-mandatory coefficients and
$D_{\bbt_u}:=\diag(\bbt_u)$.
Define $H_\eta(\eta):=\nabla_{\eta\eta}^2\!\left(-\frac{1}{n}\ell\right)(\eta)$,
the Hessian of $-\frac{1}{n}\ell$ with respect to the linear
predictor $\eta$.
Let
$\eta(\v t,\beta_0,\bbt):=\eta_0(\beta_0,\bbt)+X_u(\v t\odot \bbt_u)$,
where $\eta_0(\beta_0,\bbt)$ collects the intercept and
mandatory-feature contribution.
Then
\[
  \nabla_{\v t\v t}^2 h_{\delta,\lambda}(\v t,\beta_0,\bbt)
  =
  D_{\bbt_u}\,X_u^\top H_\eta\!\big(\eta(\v t,\beta_0,\bbt)\big)
  X_u\,D_{\bbt_u}
  -2\delta D_{\bbt_u}^2,
\]
where the ridge term $\lambda\|\bbt\|_2^2$ does not contribute as it
is constant in $\v t$.

\begin{theorem}[Concavity threshold in $\v t$]
\label{prop:concavity-threshold-unified}
Assume $X_u^\top H_\eta(\eta) X_u \preceq 2\delta I$ for all $\eta$.
Then, for every fixed $(\beta_0,\bbt)$, the map
$\v t\mapsto h_{\delta,\lambda}(\v t,\beta_0,\bbt)$ is concave on
$(0,1)^{p-m}$.
Consequently, $f_{\delta,\lambda}(\v t)
=\inf_{\beta_0,\bbt} h_{\delta,\lambda}(\v t,\beta_0,\bbt)$
is concave on $(0,1)^{p-m}$ (and extends concavely to $\cT_k$), so
every minimizer of $\min_{\v t\in\cT_k} f_{\delta,\lambda}(\v t)$
can be chosen at a corner $\v s\in\cS_k$.
\end{theorem}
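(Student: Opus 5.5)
The proof rests on the Hessian formula displayed just before the statement, plus two standard facts: a pointwise infimum of concave functions is concave, and a concave function on a polytope attains its minimum at an extreme point.

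\textbf{Step 1: concavity of $h$ in $\v t$.} Fix $(\beta_0,\bbt)$. The map $\v t\mapsto h_{\delta,\lambda}(\v t,\beta_0,\bbt)$ is twice continuously differentiable on $(0,1)^{p-m}$, because $\ell$ is smooth in the linear predictor $\eta$ and $\eta$ is affine in $\v t$. I first re-derive the Hessian. By the chain rule, the likelihood term contributes $D_{\bbt_u}X_u^\top H_\eta X_u D_{\bbt_u}$. The penalty term is
\[
\delta\sum_j(1-t_j^2)\beta_{m+j}^2,
\]
so it contributes $-2\delta D_{\bbt_u}^2$. For any $\v v\in\reals^{p-m}$, set $\v w=D_{\bbt_u}\v v$. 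Then
\[
\v v^\top\nabla^2_{\v t\v t}h\,\v v=\v w^\top X_u^\top H_\eta X_u\v w-2\delta\|\v w\|_2^2\le 0,
\]
using the assumption $X_u^\top H_\eta(\eta)X_u\preceq 2\delta I$ at the relevant $\eta$. The Hessian is therefore negative semidefinite on the open convex set $(0,1)^{p-m}$, so $h$ is concave there.

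\textbf{Step 2: concavity of $f$.} For any interior $\v t,\v t'$ and $\theta\in[0,1]$, and every $(\beta_0,\bbt)$, concavity of $h$ gives
\[
h(\theta\v t+(1-\theta)\v t',\beta_0,\bbt)\ge\theta h(\v t,\cdot)+(1-\theta)h(\v t',\cdot)\ge\theta f(\v t)+(1-\theta)f(\v t').
\]
Taking the infimum over $(\beta_0,\bbt)$ on the left shows $f$ is concave on $(0,1)^{p-m}$. This needs only that $f$ is the infimum, not that a minimizer exists.

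\textbf{Step 3: extension to the boundary.} This is the main obstacle, since $\cT_k$ contains boundary points, including the corners, where some $t_j\in\{0,1\}$. The fact I need is that $h(\cdot,\beta_0,\bbt)$ is continuous on the closed cube $[0,1]^{p-m}$, because it is a polynomial-composed smooth function of $\v t$. Step 1 applies verbatim on the closed cube as well, since the Hessian formula is valid for every $\v t\in\reals^{p-m}$. So each $h(\cdot,\beta_0,\bbt)$ is concave on $[0,1]^{p-m}$, and the infimum argument of Step 2 gives concavity of $f$ directly on the whole cube, hence on $\cT_k$, where $f$ is defined at boundary points through \eqref{eqn:f_delta}. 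If one prefers to work only with the interior, one can instead note that $f$ is upper semicontinuous as an infimum of continuous functions, and that the concave upper semicontinuous extension coincides with $f$ on the boundary. I would use the direct argument to avoid this.

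\textbf{Step 4: corner minimizers.} $f$ is concave on the compact polytope $\cT_k$, whose extreme points are exactly $\cS_k$. Any $\v t\in\cT_k$ can be written as $\v t=\sum_i\theta_i\v s_i$, a convex combination with $\v s_i\in\cS_k$. Concavity gives
\[
f(\v t)\ge\sum_i\theta_i f(\v s_i)\ge\min_i f(\v s_i).
\]
Hence the minimum over $\cT_k$ equals the minimum over the finite set $\cS_k$. It is attained there because $\cS_k$ is finite, and therefore a minimizer can always be chosen at a corner.
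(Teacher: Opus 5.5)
Your proof is correct and follows the same route as the paper's. Both use negative semidefiniteness of $\nabla^2_{\v t\v t}h_{\delta,\lambda}$ from the curvature assumption, then concavity of $f_{\delta,\lambda}$ as a pointwise infimum of concave functions, then minimization of a concave function over the polytope $\cT_k$ at an extreme point. Your Step~3 (concavity of each $h_{\delta,\lambda}(\cdot,\beta_0,\bbt)$ on the closed cube via the polynomial form $\delta\sum_j(1-t_j^2)\beta_{m+j}^2$ of the penalty) and your vertex-decomposition argument in Step~4 (attainment at a corner without any continuity of $f_{\delta,\lambda}$) just make explicit two points the paper's proof passes over.
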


For logistic and baseline multinomial regression, the standard
curvature bounds
\[
  H_\eta(\eta)\preceq
  \begin{cases}
    \tfrac{1}{4n}I, & \text{logistic},\\[3pt]
    \tfrac{1}{2n}I, & \text{multinomial},
  \end{cases}
\]
yield the explicit sufficient thresholds.  Writing
$\nu_{\max}$ for the largest eigenvalue of $X_u^\top X_u$,
\begin{align}
  \delta_{\mathrm{conc}}=
  \begin{cases}
    \nu_{\max}/(8n), & \text{logistic},\\
    \nu_{\max}/(4n), & \text{multinomial}.
  \end{cases}
  \label{eqn:delta_conc}
\end{align}

\begin{remark}
The Hessian formula above is stated for the binary case in which
$\bbt_u \in \reals^{p-m}$ is a vector.
In the multinomial setting, $B_u \in \reals^{(p-m)\times(C-1)}$ is a
matrix and the Hessian of $h$ with respect to $\v t$ sums contributions
from each of the $C-1$ class-specific coefficient vectors.
The bound $H_\eta(\eta) \preceq \frac{1}{2n}I$ follows from the fact that
the per-observation multinomial curvature matrix
$\Sigma_i = \diag(\pi_i) - \pi_i\pi_i^\top \in \reals^{(C-1)\times(C-1)}$
has largest eigenvalue at most $\tfrac{1}{2}$, as can be shown
via the Gershgorin circle theorem (replacing the logistic bound
$\pi_i(1-\pi_i) \le \tfrac{1}{4}$).
Theorem~\ref{prop:concavity-threshold-unified} and the threshold
$\nu_{\max}/(4n)$ hold for the multinomial case by the same proof.
\end{remark}

\begin{remark}[Computational complexity]
\label{rem:complexity}
Algorithm~\ref{alg:fw-homotopy-joint} executes $2N$ iterations, each
requiring one call to a penalised GLM solver plus $O(p)$ overhead
(gradient extraction via Danskin's theorem and a partial sort to identify
the $k$ smallest components).  The total cost is thus
$2N \times T_{\mathrm{solver}}(n, p, C)$, where
$T_{\mathrm{solver}}(n,p,C)$ is the cost of a single penalised GLM fit.
With a modest budget such as $N = 25$, this is comparable to fitting a
regularisation path with $50$ penalty values.
\end{remark}

\section{Simulation Studies}
\label{s:simulations}

We evaluate the performance of the proposed best subset selection
method for logistic regression through a series of simulation
experiments in both low-dimensional and high-dimensional settings.
The simulation design closely follows the linear model framework
established in \citet{hastie2020best} and \citet{moka2024combss}, adapted here to the
binary response setting.

\subsection{Simulation Design}
\label{ss:sim-design}

Data are generated from a logistic regression model.
Specifically, the binary response $y_i \in \{0,1\}$,
$i = 1,\ldots,n$, is drawn from a Bernoulli distribution with success
probability
\begin{equation}
  \label{eq:logistic_model}
  P(y_i = 1 \mid \v x_i) =
  \frac{\exp(\beta_0+\v x_i^\top \boldsymbol{\beta}^*)}%
       {1 + \exp(\beta_0+\v x_i^\top \boldsymbol{\beta}^*)},
\end{equation}
where $\boldsymbol{\beta}^* \in \mathbb{R}^p$ is the true coefficient
vector.
The intercept is fixed at $\beta_0 = 0.2$ to ensure an approximately
balanced proportion of zeros and ones in the simulated responses.
Each row $\v x_i$ of the predictor matrix $X$ is generated
independently from a multivariate normal distribution with zero mean
and covariance matrix $\boldsymbol{\Sigma}$ with diagonal elements
$\Sigma_{j,j} = 1$ and off-diagonal elements
$\Sigma_{i,j} = \rho^{|i-j|}$, $i \neq j$.
We vary the correlation parameter over
$\rho \in \{0,\, 0.2,\, 0.4,\, 0.6\}$ to examine the effect of
predictor correlation on the performance of the methods. In these simulations, no mandatory variables are kept $(m = 0)$.

We consider two configurations for the true coefficient vector
$\boldsymbol{\beta}^*$:
\begin{itemize}
  \item Case~1: The first $k_0 = 10$ components of
    $\boldsymbol{\beta}^*$ are equal to $1$ and all other components
    are equal to $0$.
  \item Case~2: The first $k_0 = 10$ components of
    $\boldsymbol{\beta}^*$ decay as $\beta_i^* = 0.5^{\,i-1}$ for
    $i = 1,\ldots,k_0$, while all other components are equal to $0$.
\end{itemize}
For the low-dimensional setting we take $n = 200$ and $p = 30$, and
for the high-dimensional setting $n = 200$ and $p = 1000$.
For each combination of case, dimension, and correlation level, we
perform 50 independent replications.

\subsection{Competing Methods and Tuning}
\label{ss:sim-methods}

We compare the proposed COMBSS-GLM method
(Algorithm~\ref{alg:fw-homotopy-joint}) against three penalized
likelihood methods: the Lasso \citep{Tibshirani96}, the smoothly
clipped absolute deviation penalty (SCAD) \citep{fan2001scad}, and
the minimax concave penalty (MCP) \citep{zhang2010mcp}, all available
through the \texttt{ncvreg} package in {\sf R}
\citep{breheny2011ncvreg}.

Algorithm~\ref{alg:fw-homotopy-joint} is executed for each candidate
model size $k = 1, \ldots, 20$ using $N=25$ iterations and a learning rate $\alpha=0.01$.
The optimal model size $k_{\mathrm{opt}}$ is then selected using an
independent test set of $10{,}000$ observations generated from the
same data-generation mechanism, by choosing the model size that yields
the lowest misclassification error, where predicted class labels are
obtained by thresholding the estimated probabilities at $0.5$.

For the Lasso, SCAD, and MCP, we fit each method over a grid of 50
regularization parameters $\lambda$, log-spaced from $\lambda_{\max}$
down to a small fraction of $\lambda_{\max}$
as implemented in \texttt{glmnet} package.
For each method, we choose $\lambda$ to maximize classification
accuracy on the same independent test set of $10{,}000$ observations,
and report the corresponding selected variable subset as the final
model.

\subsection{Performance Metrics}
\label{ss:sim-metrics}

We assess the methods using several performance metrics.
To evaluate \emph{variable selection quality}, we compare the
selected set $\hat{S}$ to the true active set
$S_0 = \{j : \beta_j^* \neq 0\}$ and report:
\begin{itemize}
  \item Sensitivity (true positive rate):
    $|\hat{S} \cap S_0| / |S_0|$.
  \item Specificity (true negative rate):
    $|\hat{S}^c \cap S_0^c| / |S_0^c|$, where
    $S^c = \{1,\ldots,p\}\setminus S$.
  \item Selection accuracy: proportion of predictors whose
    inclusion status matches the truth,
    $(|\hat{S} \cap S_0| + |\hat{S}^c \cap S_0^c|) / p$.
  \item F1 score: harmonic mean of precision and sensitivity.
  \item Matthews correlation coefficient (MCC): a balanced
    measure of binary classification quality for the selected variable
    set, which accounts for all four entries of the confusion matrix
    and remains informative even when class sizes are unequal
    \citep{chicco2020mcc}.
\end{itemize}
In addition, we report \emph{prediction accuracy} on the test set,
defined as the proportion of test observations for which the predicted
class label $\hat{y}_i = \mathbb{I}(\hat{p}_i > 0.5)$ agrees with
the true response $y_i$.
All metrics are averaged over the 50 replications, and standard errors
are reported.

\subsection{Results}
\label{ss:sim-results}

\subsubsection{Low-Dimensional Setting ($n = 200$, $p = 30$)}

Figure~\ref{fig:sim_low} presents the results for Case~1 (top row)
and Case~2 (bottom row) in the low-dimensional setting across the four
levels of predictor correlation. In Case~1, where all active predictors share the same coefficient
magnitude, COMBSS-GLM achieves the highest MCC and F1 scores across all
correlation levels, demonstrating its ability to accurately recover
the true sparse model.
SCAD and MCP exhibit competitive performance at low correlation
($\rho = 0$ and $\rho = 0.2$), with MCC and F1 scores close to those
of COMBSS-GLM.
However, as the correlation increases to $\rho = 0.4$ and
$\rho = 0.6$, the performance of all penalized methods degrades more
sharply than that of COMBSS-GLM, which maintains relatively stable
variable selection accuracy.
The Lasso consistently yields the lowest MCC and F1 scores among all
methods, reflecting its well-known tendency to select overly dense
models in the presence of correlated predictors.
In terms of prediction accuracy, all four methods perform comparably,
though COMBSS-GLM achieves a slight edge at higher correlation levels.

In Case~2, where the true coefficients decay exponentially and thus
present a more challenging signal structure, the advantage of COMBSS-GLM
becomes more pronounced.
COMBSS-GLM consistently achieves the highest MCC and F1 scores across all
values of $\rho$, with the gap widening as the correlation increases.
Notably, the MCC of the Lasso, SCAD, and MCP drop substantially for
$\rho \geq 0.4$, whereas COMBSS-GLM deteriorates more gradually.
The prediction accuracy of all the methods in Case~2 is lower than in Case~1, as
expected given the weaker signal among the active predictors.
Nevertheless, COMBSS-GLM maintains a consistent advantage, particularly
at higher correlation levels.

Additional variable selection metrics are presented in
Figure~\ref{fig:supp_sim_low} in the supplementary material.
These results reveal that the Lasso achieves high sensitivity across
all settings but at the cost of substantially lower specificity,
confirming its tendency to include many false positives.
SCAD and MCP offer a better balance between sensitivity and specificity
than the Lasso, but COMBSS-GLM achieves the best overall trade-off, as
reflected in its superior selection accuracy across both cases.

\begin{figure}[t]
  \centering
  \begin{subfigure}[b]{0.32\textwidth}
    \centering
    \includegraphics[width=\textwidth]{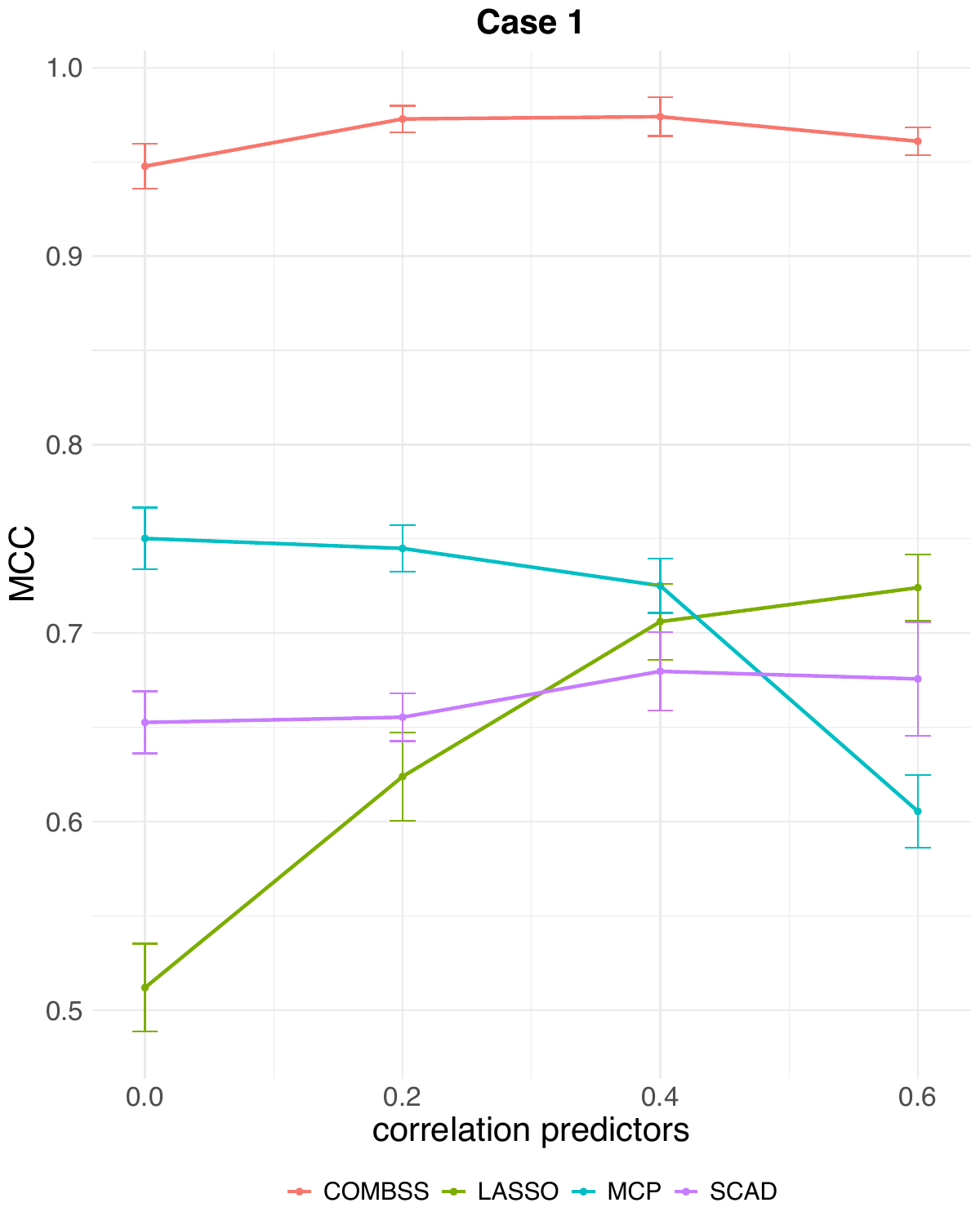}
    \caption{MCC -- Case 1}
    \label{fig:low_case1_mcc}
  \end{subfigure}
  \hfill
  \begin{subfigure}[b]{0.32\textwidth}
    \centering
    \includegraphics[width=\textwidth]{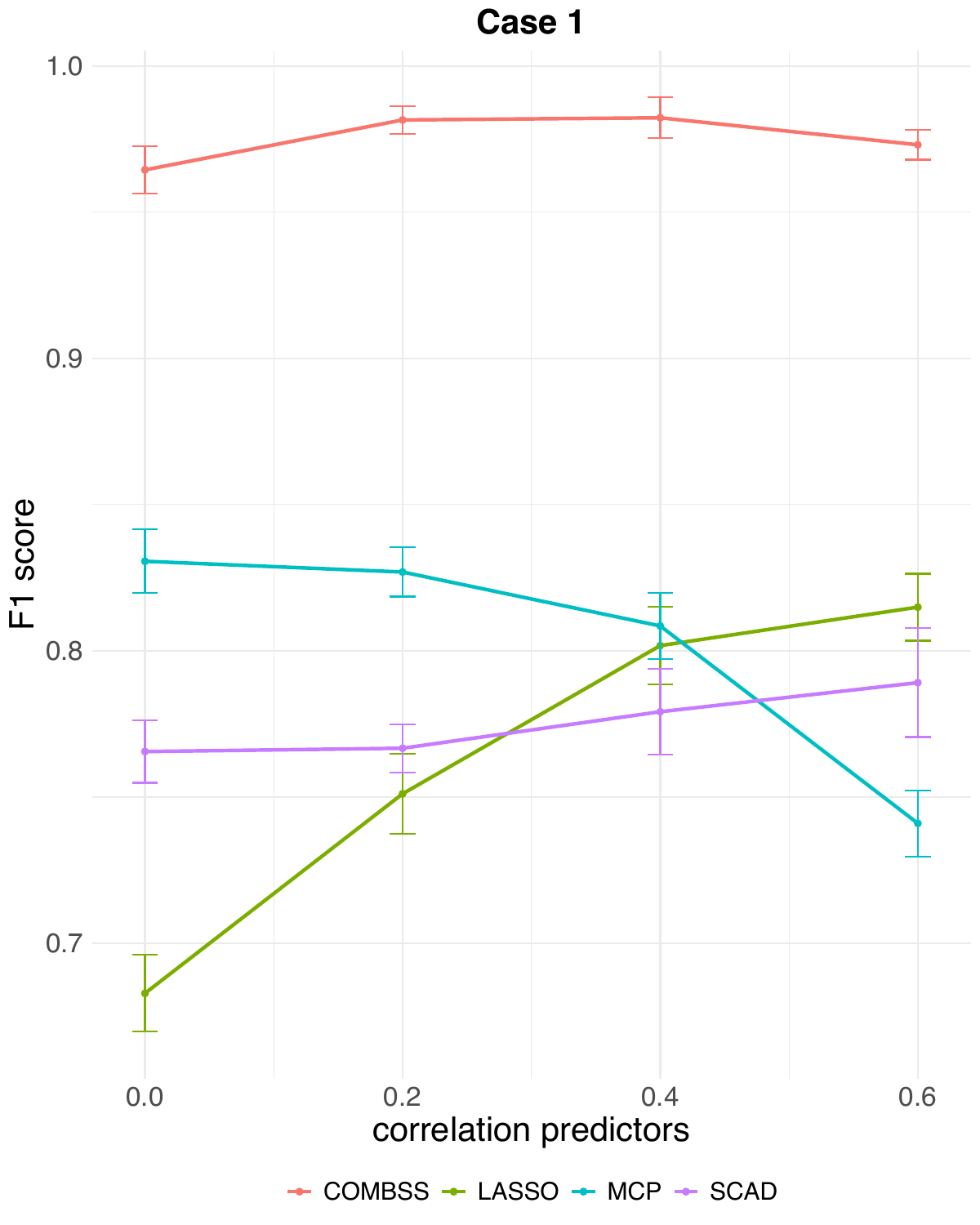}
    \caption{F1 score -- Case 1}
    \label{fig:low_case1_f1}
  \end{subfigure}
  \hfill
  \begin{subfigure}[b]{0.32\textwidth}
    \centering
    \includegraphics[width=\textwidth]{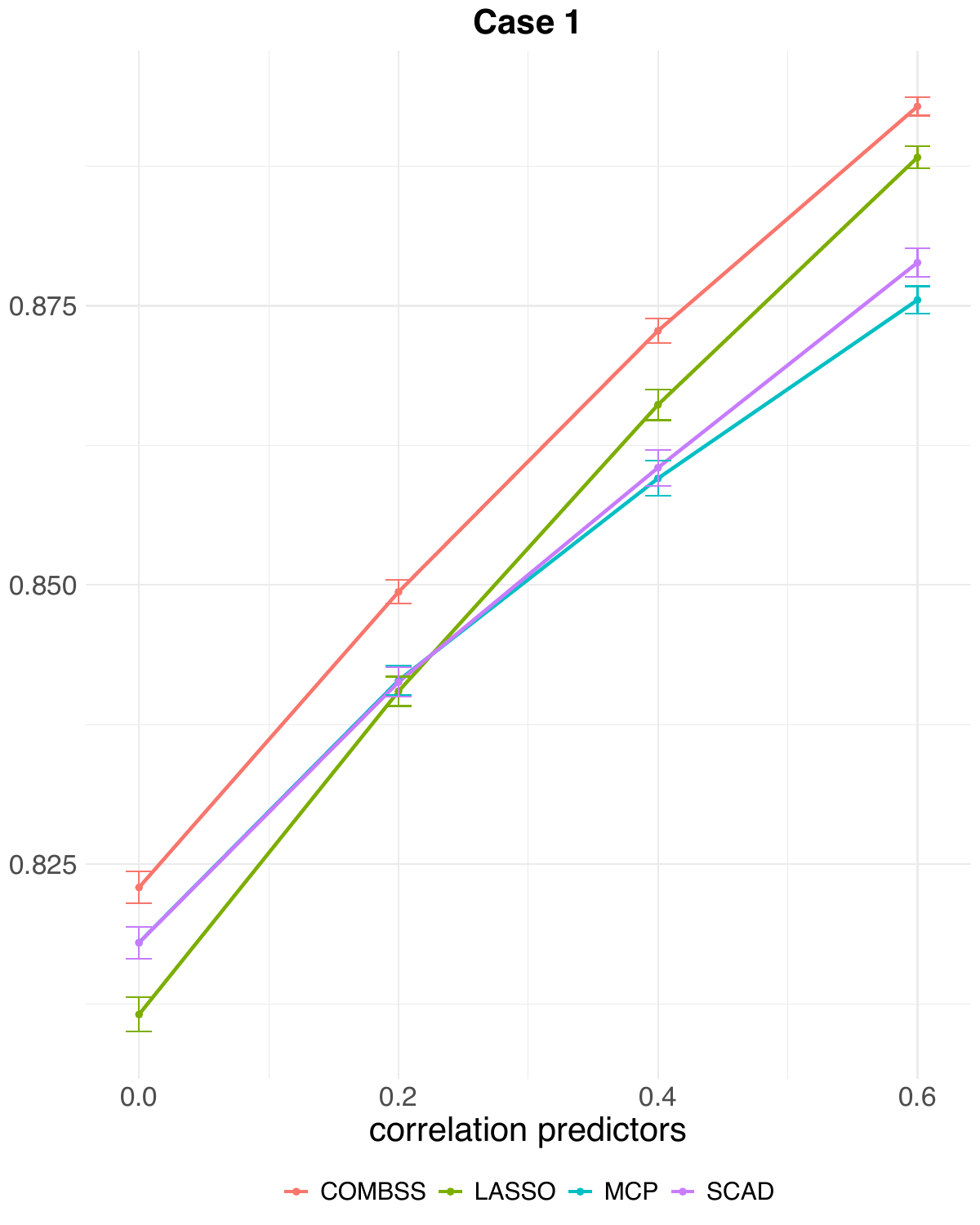}
    \caption{Prediction accuracy -- Case 1}
    \label{fig:low_case1_acc}
  \end{subfigure}

  \vspace{0.4cm}

  \begin{subfigure}[b]{0.32\textwidth}
    \centering
    \includegraphics[width=\textwidth]{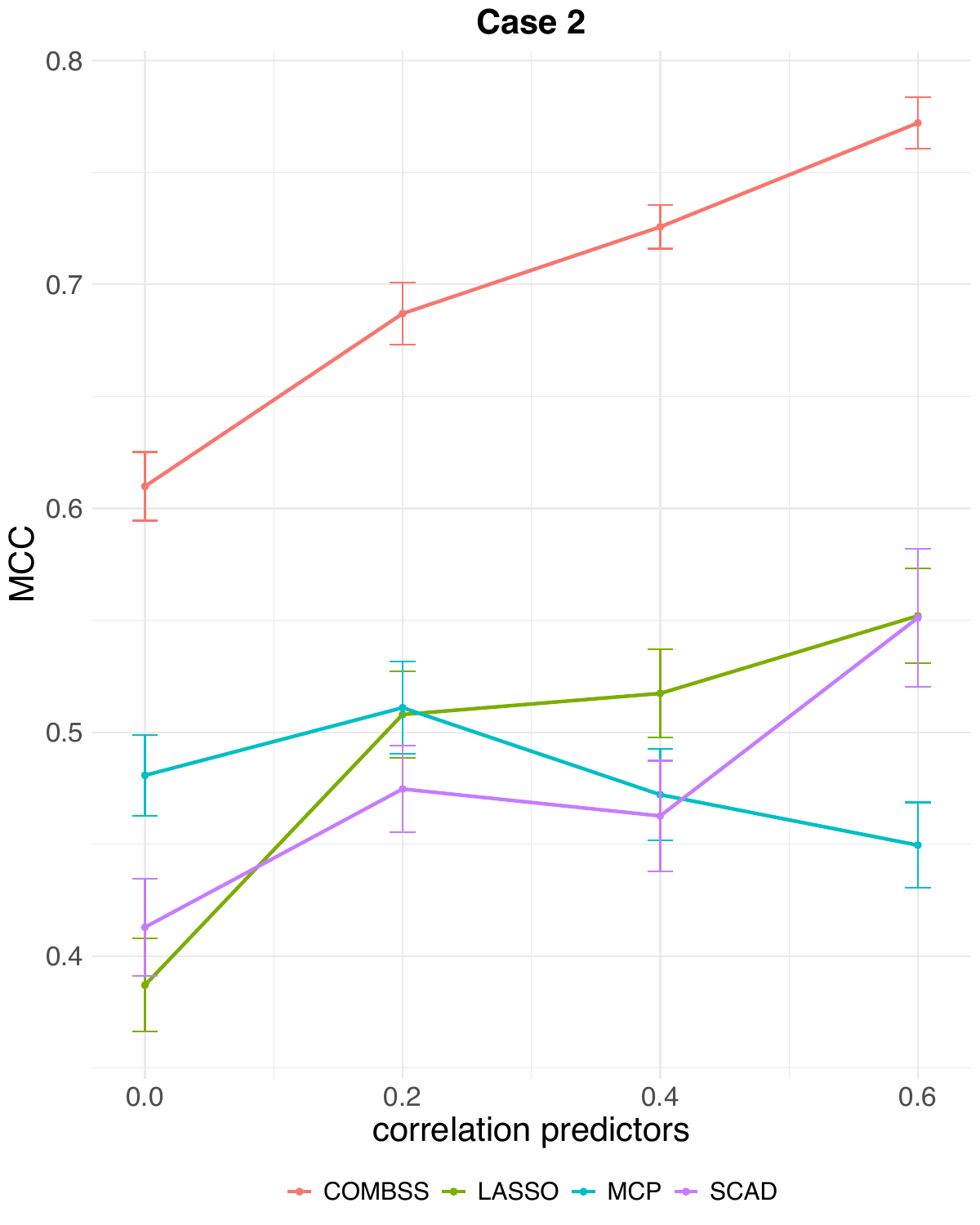}
    \caption{MCC -- Case 2}
    \label{fig:low_case2_mcc}
  \end{subfigure}
  \hfill
  \begin{subfigure}[b]{0.32\textwidth}
    \centering
    \includegraphics[width=\textwidth]{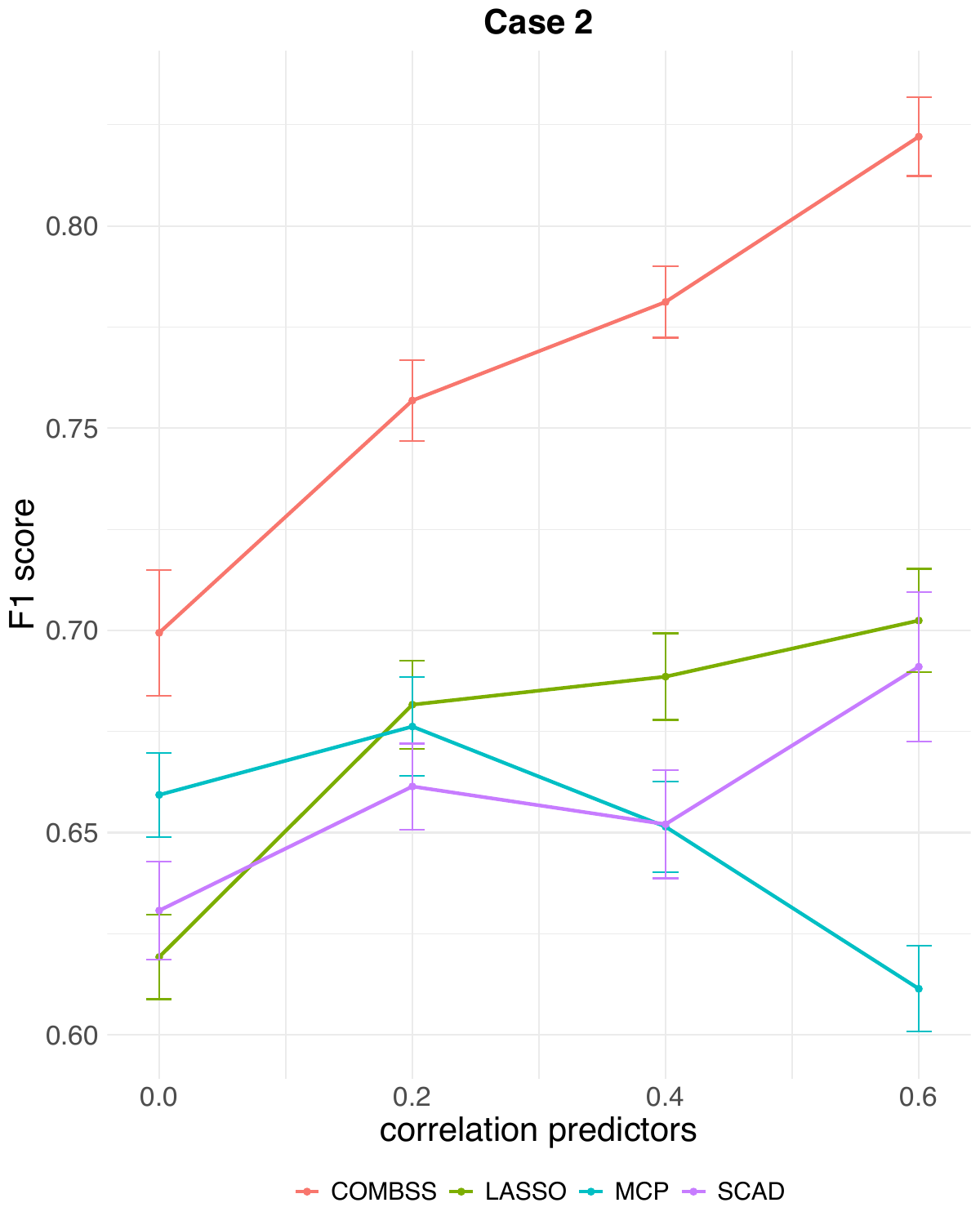}
    \caption{F1 score -- Case 2}
    \label{fig:low_case2_f1}
  \end{subfigure}
  \hfill
  \begin{subfigure}[b]{0.32\textwidth}
    \centering
    \includegraphics[width=\textwidth]{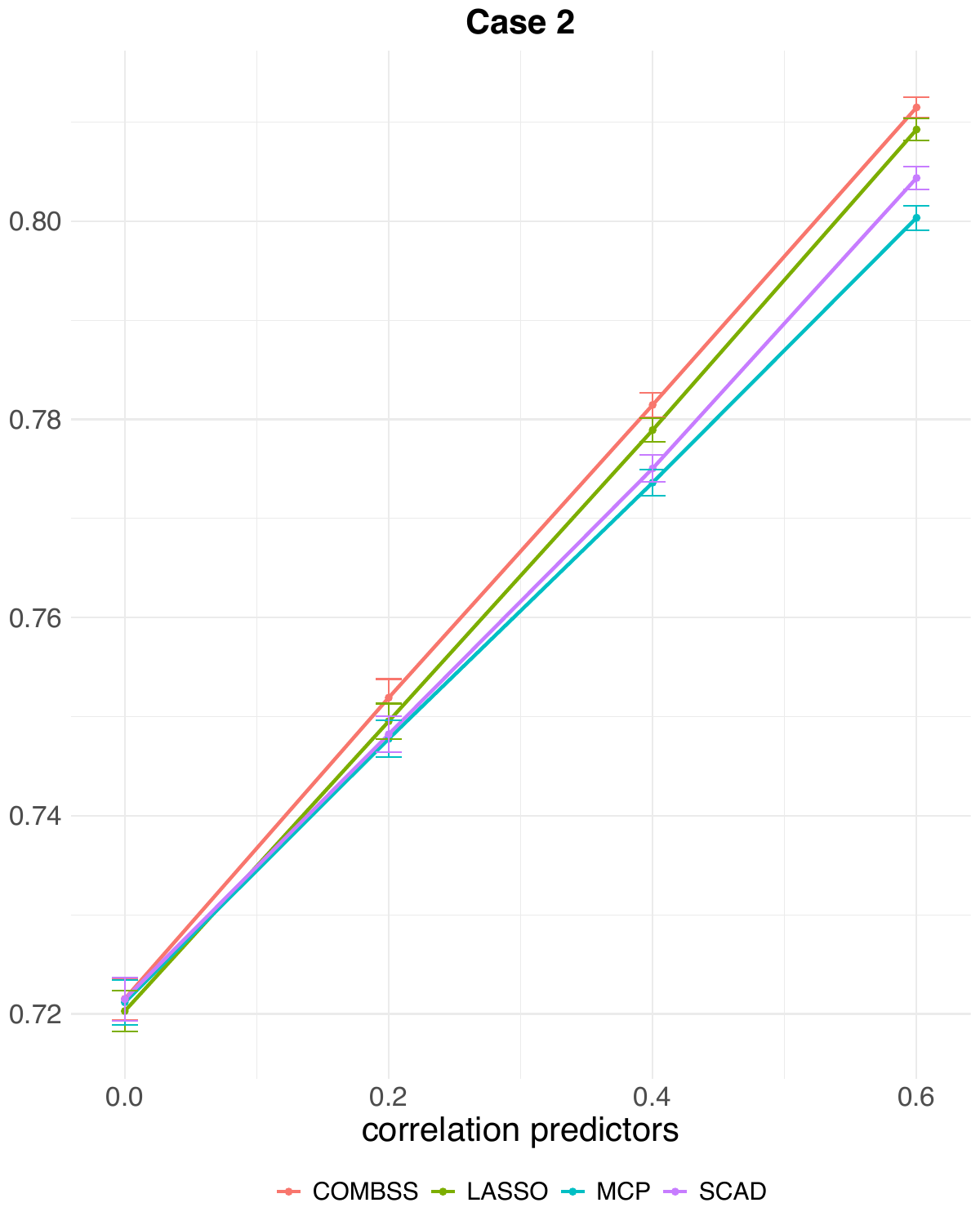}
    \caption{Prediction accuracy -- Case 2}
    \label{fig:low_case2_acc}
  \end{subfigure}
  \caption{Performance results in the low-dimensional setting
    ($n = 200$, $p = 30$) for Case~1 (top row) and Case~2 (bottom
    row). Each panel displays the average over 50 replications as a
    function of predictor correlation $\rho \in \{0, 0.2, 0.4, 0.6\}$,
    with vertical bars denoting one standard error.}
  \label{fig:sim_low}
\end{figure}

\subsubsection{High-Dimensional Setting ($n = 200$, $p = 1000$)}

Figure~\ref{fig:sim_high} presents the corresponding results in the
high-dimensional setting, where the number of predictors greatly
exceeds the sample size. In Case~1, COMBSS-GLM clearly outperforms all three penalized methods in
terms of MCC and F1 score across all values of $\rho$.
The relative advantage of COMBSS-GLM is substantially larger than in the
low-dimensional setting, highlighting the benefit of the best subset
selection approach when the model space is vast.
SCAD and MCP perform similarly to each other and offer an improvement
over the Lasso in terms of MCC and F1, but neither matches COMBSS-GLM.
Prediction accuracy remains comparable across all methods for low
correlation, but COMBSS-GLM demonstrates a more consistent advantage as
$\rho$ increases.

In Case~2, the performance advantage of COMBSS-GLM over its competitors is even more pronounced.
The MCC and F1 scores of the Lasso, SCAD, and MCP all decline
substantially with increasing correlation, while COMBSS-GLM maintains
notably higher values.
This is consistent with the finding from the linear model setting
\citep{moka2024combss}, where COMBSS-GLM was found to be particularly
effective at recovering models with heterogeneous signal strengths.
Prediction accuracy follows a similar pattern, with COMBSS-GLM achieving
the highest values across all correlation levels.

The supplementary metrics in Figure~\ref{fig:supp_sim_high} further
confirm these findings.
In the high-dimensional setting, the Lasso exhibits markedly lower
specificity than the other methods, as it tends to include a large
number of inactive predictors.
COMBSS-GLM achieves the highest sensitivity while maintaining high
specificity, resulting in the best overall selection accuracy.
SCAD and MCP demonstrate high specificity but at the cost of lower
sensitivity compared to COMBSS-GLM, particularly in Case~2 where the
weaker signals are more difficult to detect.

Overall, across both dimensional settings and both cases, COMBSS-GLM
consistently provides the best variable selection performance as
measured by MCC and F1 score, while achieving competitive or superior
prediction accuracy.
The advantage of COMBSS-GLM is most pronounced in the high-dimensional
setting and under stronger predictor correlation, precisely the
scenarios where variable selection is most challenging.
These results extend the findings of \citet{moka2024combss} for the
linear model to the logistic regression setting, and demonstrate the
practical value of COMBSS-GLM for
binary response data.

\begin{figure}[t]
  \centering
  \begin{subfigure}[b]{0.32\textwidth}
    \centering
    \includegraphics[width=\textwidth]{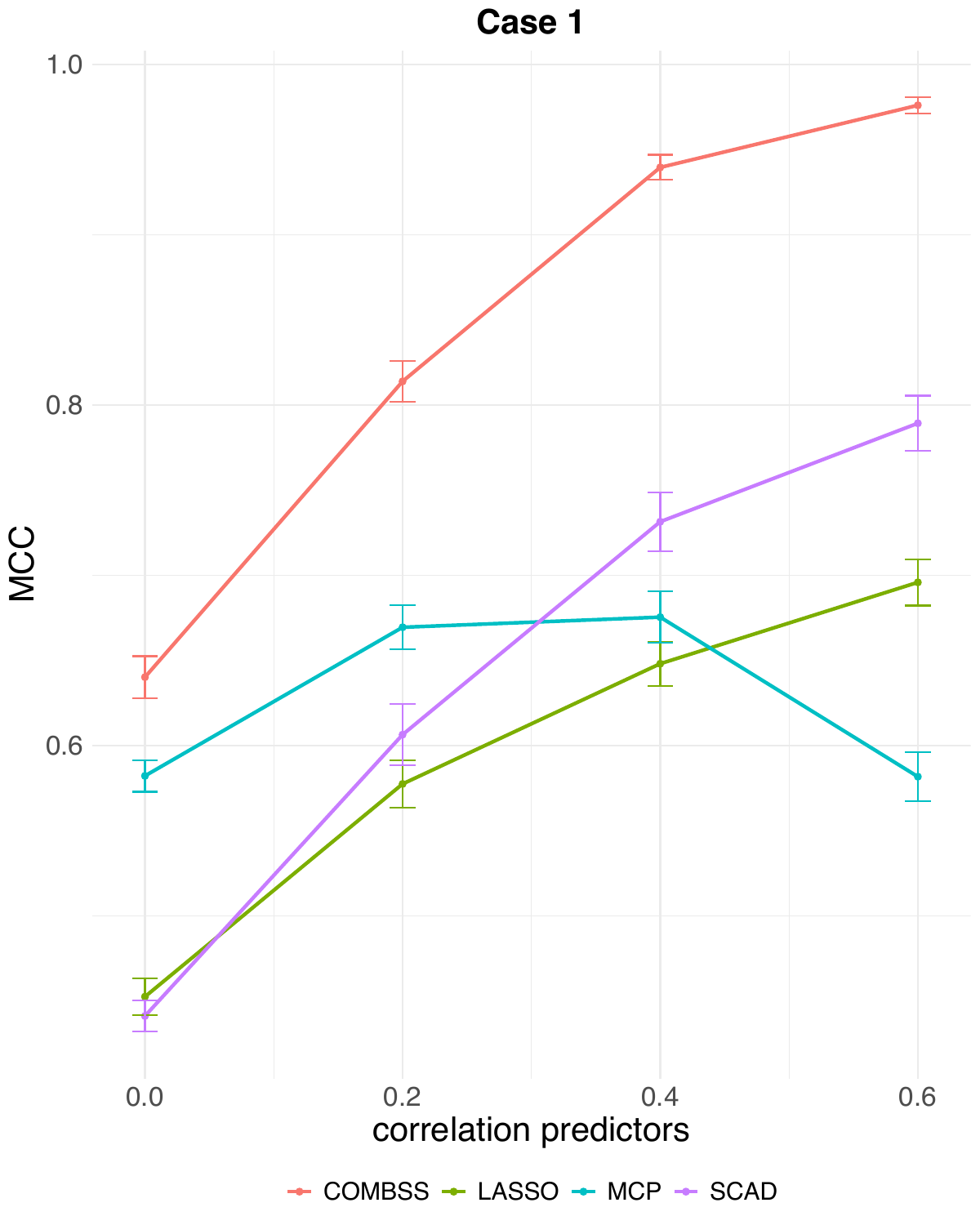}
    \caption{MCC -- Case 1}
    \label{fig:high_case1_mcc}
  \end{subfigure}
  \hfill
  \begin{subfigure}[b]{0.32\textwidth}
    \centering
    \includegraphics[width=\textwidth]{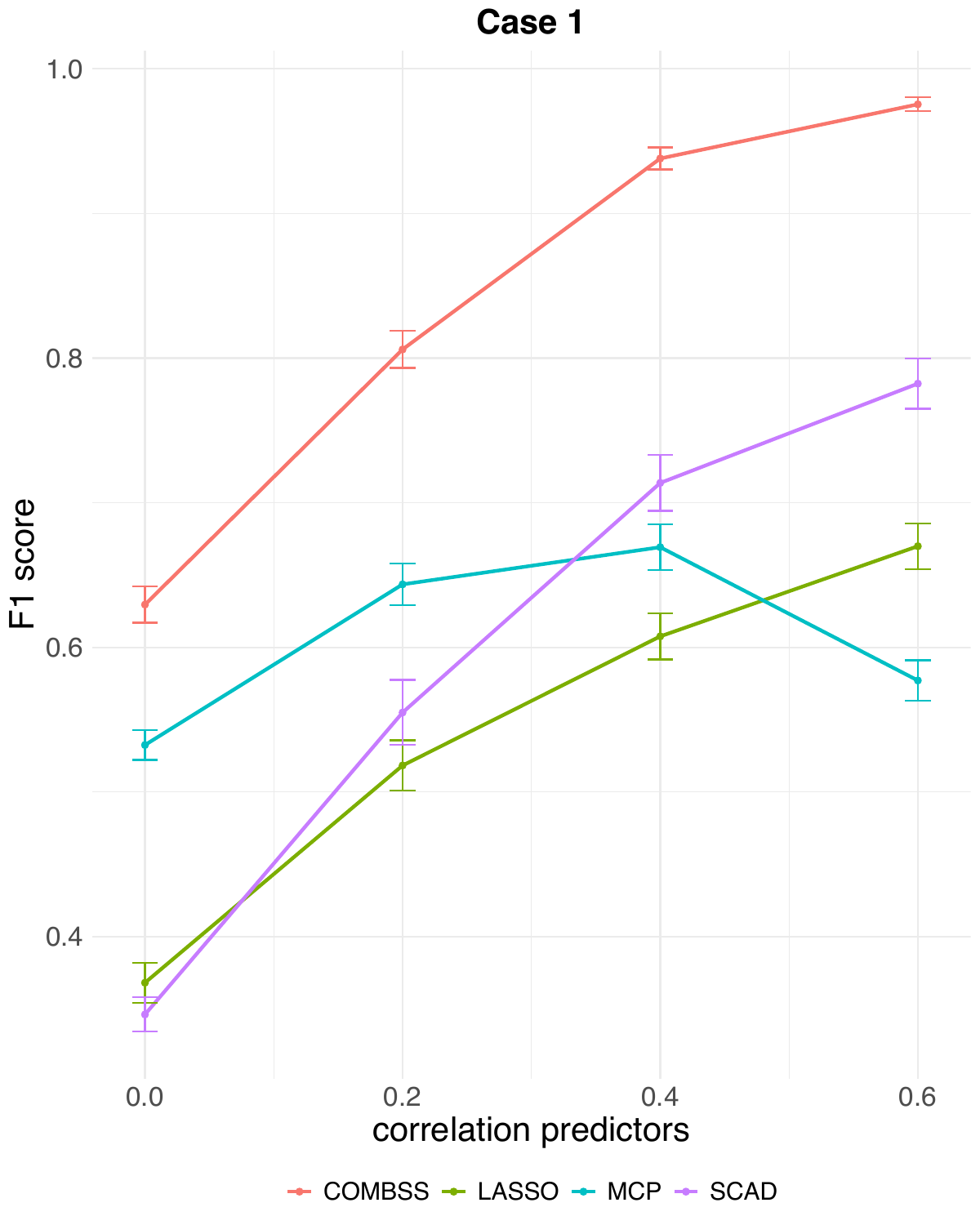}
    \caption{F1 score -- Case 1}
    \label{fig:high_case1_f1}
  \end{subfigure}
  \hfill
  \begin{subfigure}[b]{0.32\textwidth}
    \centering
    \includegraphics[width=\textwidth]{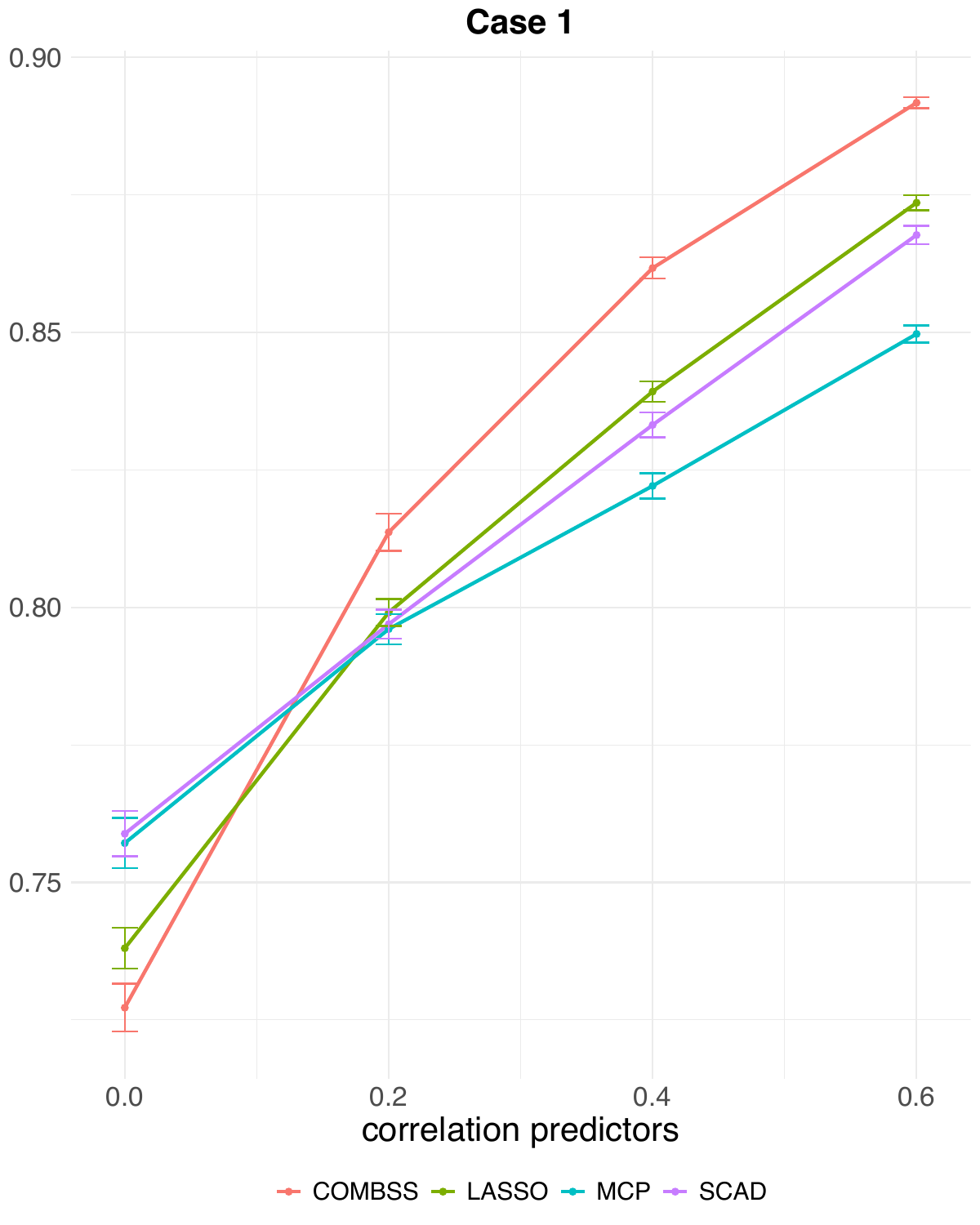}
    \caption{Prediction accuracy -- Case 1}
    \label{fig:high_case1_acc}
  \end{subfigure}

  \vspace{0.4cm}

  \begin{subfigure}[b]{0.32\textwidth}
    \centering
    \includegraphics[width=\textwidth]{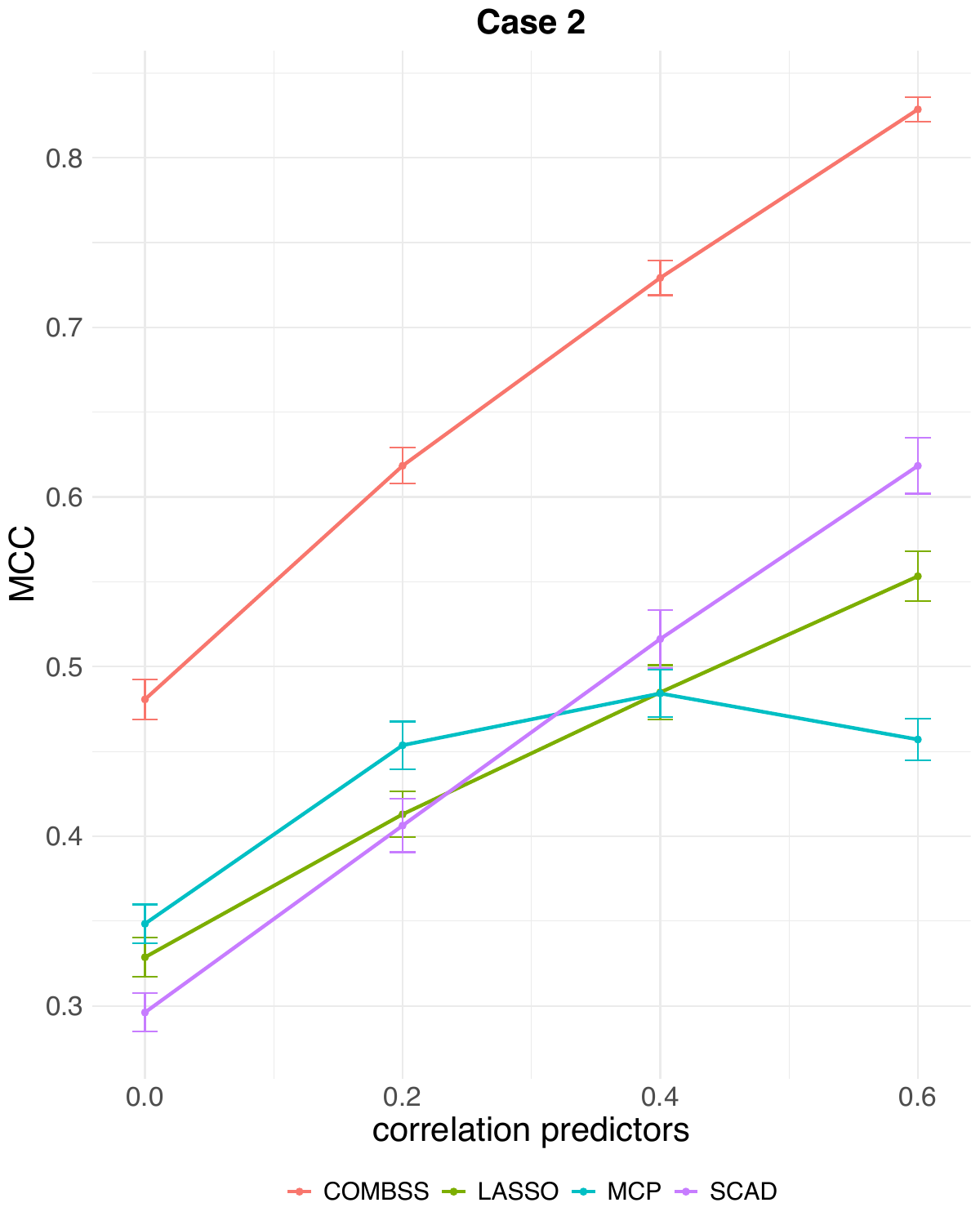}
    \caption{MCC -- Case 2}
    \label{fig:high_case2_mcc}
  \end{subfigure}
  \hfill
  \begin{subfigure}[b]{0.32\textwidth}
    \centering
    \includegraphics[width=\textwidth]{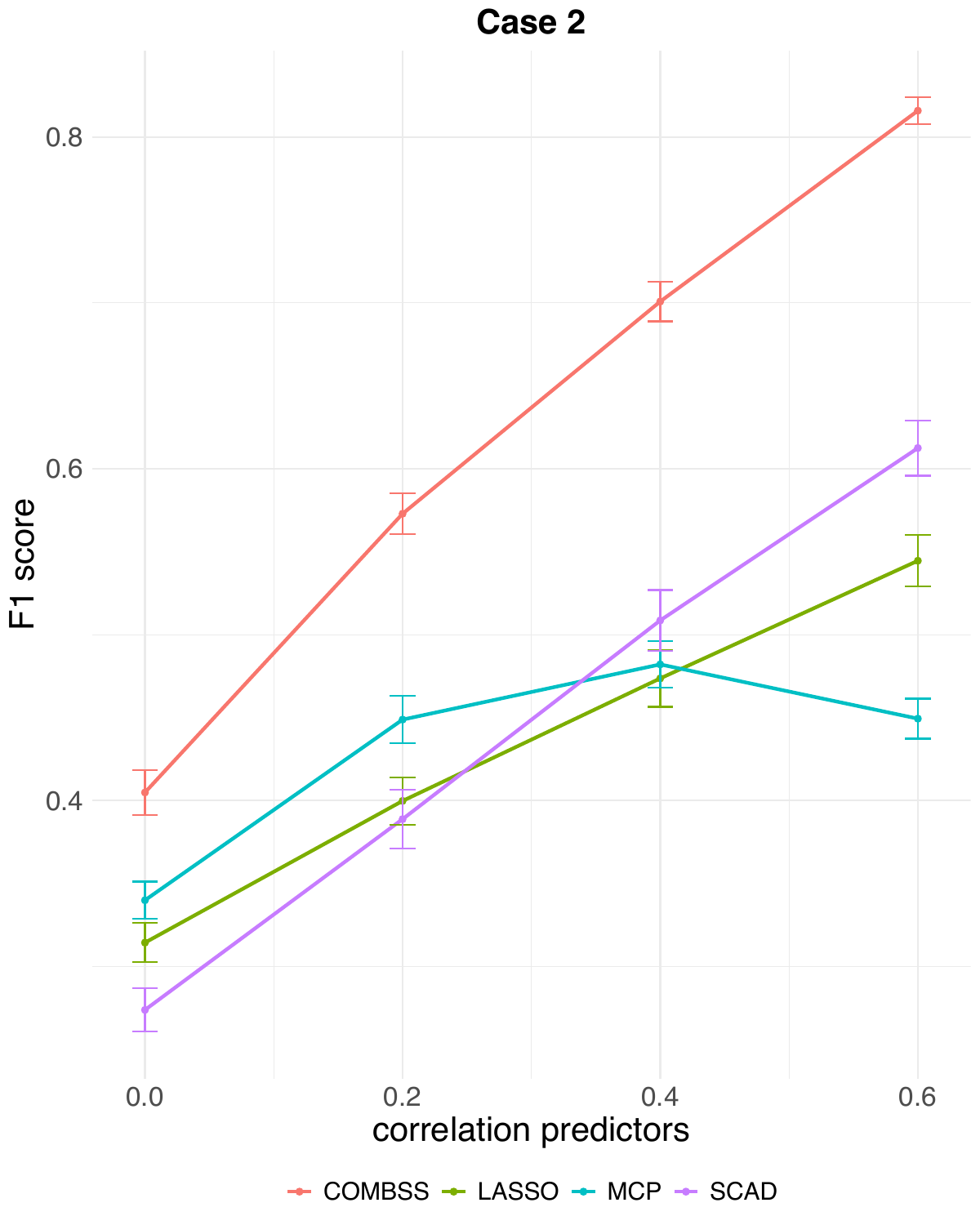}
    \caption{F1 score -- Case 2}
    \label{fig:high_case2_f1}
  \end{subfigure}
  \hfill
  \begin{subfigure}[b]{0.32\textwidth}
    \centering
    \includegraphics[width=\textwidth]{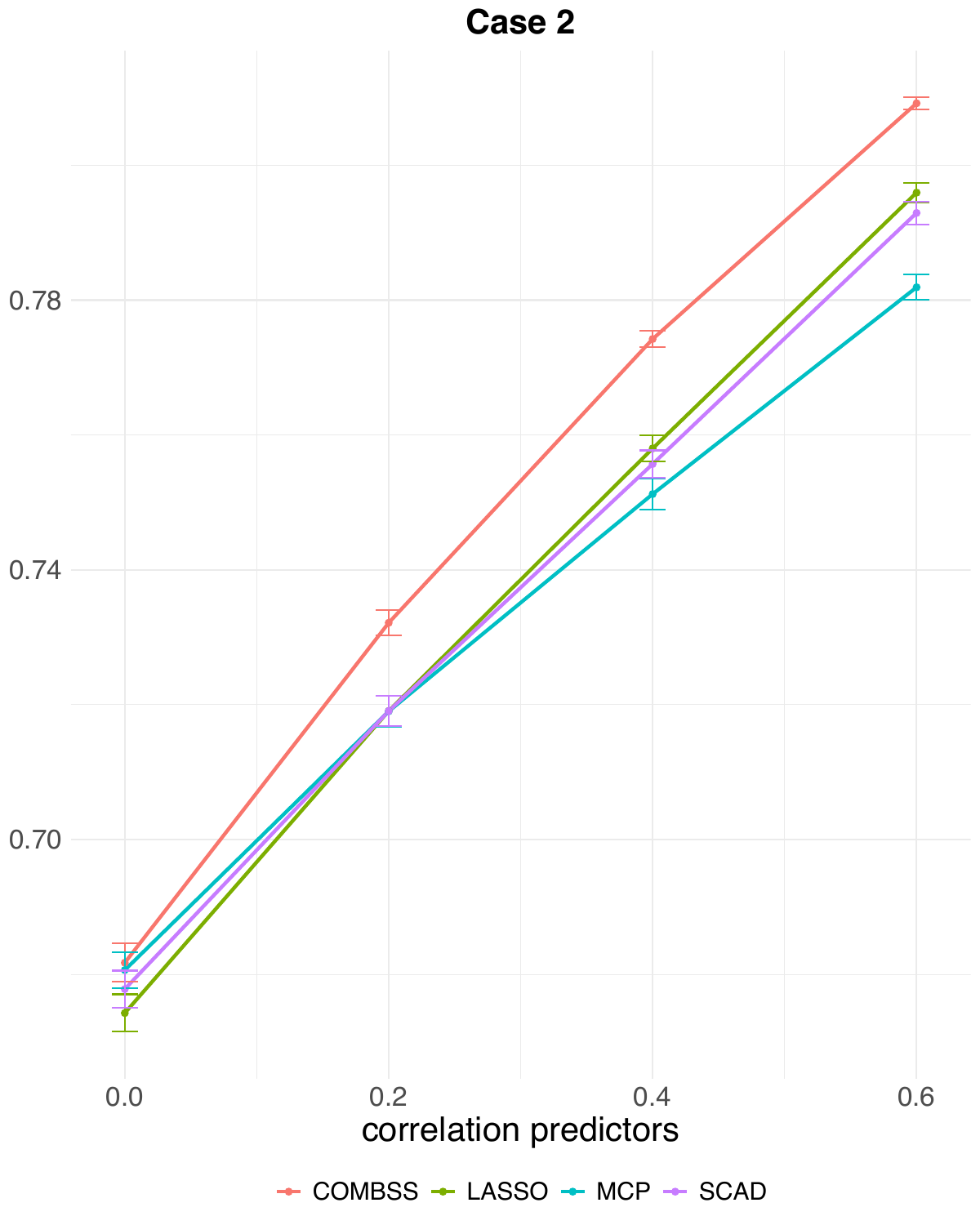}
    \caption{Prediction accuracy -- Case 2}
    \label{fig:high_case2_acc}
  \end{subfigure}
  \caption{Performance results in the high-dimensional setting
    ($n = 200$, $p = 1000$) for Case~1 (top row) and Case~2 (bottom
    row). Each panel displays the average over 50 replications as a
    function of predictor correlation $\rho \in \{0, 0.2, 0.4, 0.6\}$,
    with vertical bars denoting one standard error.}
  \label{fig:sim_high}
\end{figure}


\subsubsection{Running Time}

We record the wall-clock time for each replication of the simulation study and report the mean and standard deviation (in seconds) in Table~\ref{tbl:running_times}. The reported time corresponds to the total execution of Algorithm~\ref{alg:fw-homotopy-joint}, with $N = 25$, across model sizes $k = 1, \ldots, 20$; model evaluation steps are not included. For the low-dimensional setting ($p = 30$), COMBSS-GLM completes in fewer than 2~seconds across all designs and correlation levels. In the high-dimensional setting ($p = 1000$), runtimes range from approximately 3 to 5~seconds, demonstrating that the method scales moderately with the number of variables. All timings were obtained on an Intel Core i9-12900K with 32\,GB of RAM running Windows~10.
\begin{table}[htbp]
    \centering
    \caption{COMBSS-GLM wall-time in seconds: mean (std) over 50 replications for the simulated data with $n=200$.}
    \label{tab:wall_time}
    \begin{tabular}{cc rrrr}
    \toprule
    Design & $p$ & $\rho = 0$ & $\rho = 0.2$ & $\rho = 0.4$ & $\rho = 0.6$ \\
    \midrule
    \multirow{2}{*}{1} & 30 & 1.13 (0.03) & 1.24 (0.14) & 1.14 (0.02) & 1.25 (0.14) \\
      & 1000 & 3.54 (0.10) & 4.56 (1.43) & 3.72 (0.16) & 3.80 (0.12) \\
    \midrule
    \multirow{2}{*}{2} & 30 & 1.10 (0.04) & 1.20 (0.09) & 1.16 (0.09) & 1.24 (0.13) \\
      & 1000 & 3.74 (0.14) & 3.77 (0.11) & 3.75 (0.15) & 3.79 (0.08) \\
    \bottomrule
    \end{tabular}
    \label{tbl:running_times}
\end{table}

\section{Applications to Biomedical Data}
\label{sec:applications}

We now illustrate the COMBSS-GLM on two real biomedical datasets \citep{mccouch2016open, khan2001classification} that highlight
different aspects of the methodology.  The first application
(Section~\ref{sec:gwas}) applies the binary logistic formulation to a
high-dimensional genome-wide association study with $p = 158{,}210$ SNPs and
$n = 1{,}155$ samples.  The second application (Section~\ref{sec:khan}) applies the multinomial extension (check Section~S1 in the supplementary material) to a four-class cancer
classification problem with $p = 2{,}308$ genes and $n = 63$ samples.
Together, these two case studies demonstrate the generality of COMBSS-GLM across
different GLM families, numbers of response classes, and dimensional regimes.

\subsection{GWAS for Rice Grain Length Traits}
\label{sec:gwas}

To demonstrate the utility of efficient best subset selection for sparse generalized
linear models, we applied COMBSS-GLM to a high-dimensional dataset derived
from a publicly available genome-wide association study (GWAS) on rice.  This dataset
includes $1{,}155$ rice accessions genotyped with approximately $158{,}210$ single
nucleotide polymorphisms (SNPs) selected from the original High-Density Rice Array
(HDRA, 700k SNPs) \citep{mccouch2016open}.

We applied our proposed method after applying standard preprocessing procedures, detailed in \citet{Ullah2025}.  In summary, single nucleotide polymorphisms (SNP) loci were excluded if
their call rates were below $95\%$, their minor allele frequency (MAF) was less than
$5\%$, or if they deviated significantly from Hardy--Weinberg equilibrium.  Missing
genotype values for each SNP, which accounted for no more than $5\%$ of the values
for any single marker, were imputed using the mean of the available non-missing values
for that SNP.  After preprocessing, the resultant genotype data were scaled so that
each SNP had zero mean and unit variance, producing a design matrix of size $n = 1155$
and $p = 158{,}210$.  The phenotypic data consisted of continuous measurements of the average grain length of rice accessions, which exhibited a bimodal distribution with a
separation around $6$\,mm.  The response variable was then dichotomised into two categories according to whether the grain length exceeded $6$\,mm, allowing the resulting binary outcome to
serve as the dependent variable in a logistic regression setting.  Additionally, to account for
possible population stratification, the genotype data were preprocessed using the singular
value decomposition (SVD).  For further details on the data preparation
process, readers are referred to \citet{Ullah2025}.  We provide the final design
matrix alongside the corresponding binary outcome in the GitHub repository accompanying the paper.

The COMBSS-GLM methodology allowed for the identification of the best subset of predictors for model sizes, ranging from $k=1,\ldots,20$.  The best-subset inclusion
path for models of size $k = 1$ to $k = 10$ is shown in
Figure~\ref{fig:combined_figure}(a), while the full results for larger model sizes are
presented in the supplementary material.  Notably, the \textit{SNP-3.16732086}, a functional SNP
within the \textit{GS3} gene on chromosome~3 \citep{fan2006gs3,mao2010linking}, was
consistently selected across all model sizes.  This SNP has been previously reported
to be strongly associated with rice grain length and is known to play a crucial role in
regulating the trait \citep{mccouch2016open,Ullah2025}.  Additionally, a cluster of
highly correlated SNPs located on chromosome~3 was repeatedly selected across models
of varying sizes, a result that is consistent with existing findings.

\begin{figure}[t]
  \centering
  \begin{subfigure}[b]{0.48\textwidth}
    \centering
    \includegraphics[height=5cm]{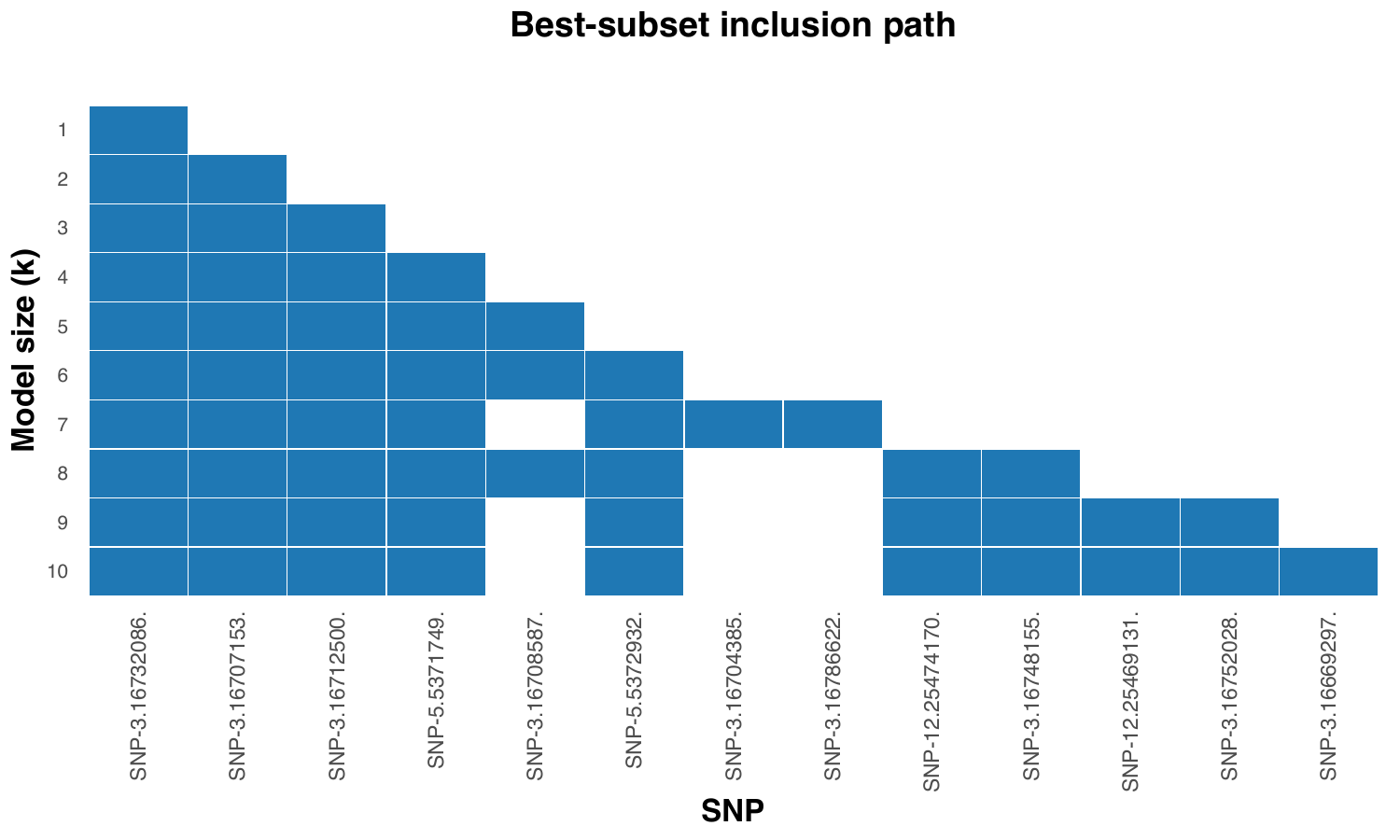}
    \caption{Best-subset inclusion path for models.}
    \label{fig:best_subset_inclusion_path}
  \end{subfigure}
  \hfill
  \begin{subfigure}[b]{0.48\textwidth}
    \centering
    \includegraphics[height=5cm]{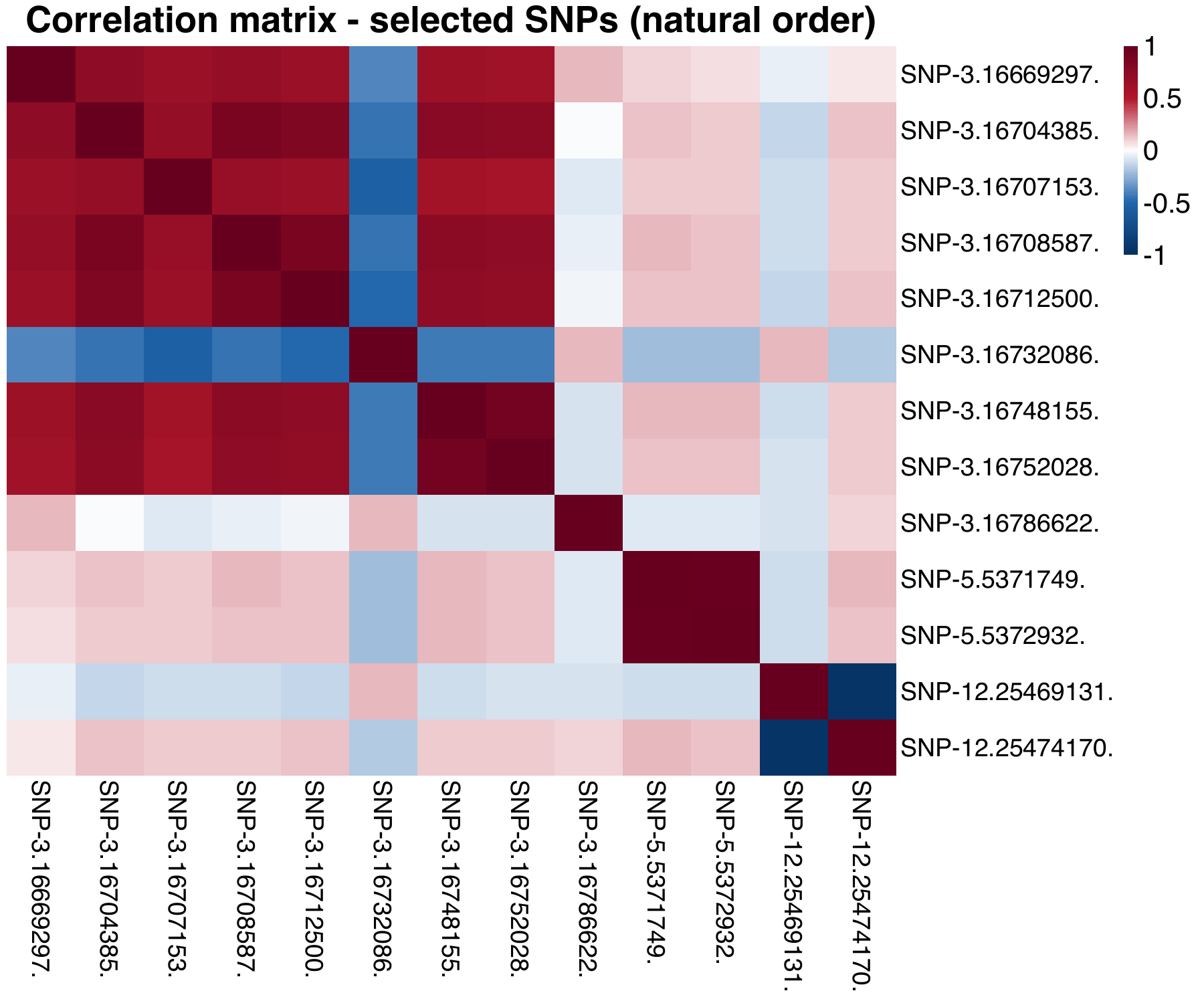}
    \caption{Correlation matrix of the selected SNPs.}
    \label{fig:correlation_matrix}
  \end{subfigure}
  \caption{(a) Best-subset inclusion path showing the selected SNPs for different model
    sizes $k = 1$ to $k = 10$.  (b) Correlation matrix for the selected SNPs
    showcasing relationships between the predictors.}
  \label{fig:combined_figure}
\end{figure}

In addition to chromosome~3, our analysis identified a SNP located on chromosome~5 as
a relevant predictor.  This SNP has also been identified in prior studies as associated
with grain size, including its mention as a potential locus associated with grain length
in \citet{li2011natural}.  Interestingly, SNPs on chromosome~12 appeared exclusively
in intermediate model sizes ($k = 8$, $9$, $10$), suggesting a potential role in
explaining grain length variation when considered alongside other loci.

Expanding the analysis to larger models ($k = 11$ to $k = 20$), as shown in
Figure~\ref{supplefig:combined_figure} in the supplementary material, we observed the
inclusion of SNPs on chromosome~4 within specific model sizes (e.g., $k = 11$--$14$
and $k = 18$--$20$).  Notably, these results have previously been reported in
\citet{mccouch2016open} and were also recently observed in \citet{Ullah2025}.  This
provides further support for a potential association of these loci with grain length
traits.

While the majority of selected SNPs were located on chromosome~3, consistent with the
dominance of this region, a very small number of loci were identified on chromosomes 1,
2, and~7 when larger model sizes were considered.  This aligns with the observation of
genome-wide quantitative trait loci (QTLs) for grain length and other related traits,
as documented in the literature \citep{huang2013genetic,zuo2014molecular}.

It is worth noting that the dichotomisation of the continuous grain length phenotype
into a binary trait, while simplifying the interpretation of our GLM-based model
selection method, may have introduced confounding effects.  Specifically, the
segmentation could obscure associations with other traits related to grain length or
amplify associations with markers that would otherwise exhibit weaker signals under a
continuous outcome.  Nonetheless, our method's robust performance in recovering
previously recognised loci, as well as identifying potential novel regions of interest
in a computationally efficient fashion, demonstrates its utility for high-dimensional
studies such as GWAS.

\subsection{Cancer Classification Using Gene Expression Data}
\label{sec:khan}

To further demonstrate the generality of the proposed methods, we appy the COMBSS-GLM using the multinomial likelihood(Section~S1 in the supplementary material) to a well-known cancer classification problem involving high-dimensional gene expression data.  This application illustrates the method's ability to perform best subset selection under a multinomial logistic model with $C > 2$ classes and $p \gg n$.

We consider the small round blue cell tumour (SRBCT) dataset of
\citet{khan2001classification}, which contains expression levels of $p = 2{,}308$ genes measured on tissue samples belonging to four distinct childhood tumour types:
Ewing sarcoma (EWS), Burkitt lymphoma (BL), neuroblastoma (NB), and rhabdomyosarcoma
(RMS).  The dataset is split into a training set of $n_{\mathrm{train}} = 63$ samples
and an independent test set of $n_{\mathrm{test}} = 20$ samples, as defined in the
original study.  This partition, which is preserved in the \textsf{R} package
\texttt{ISLR2} \citep{james2021introduction}, was used without modification throughout
our analysis.  The four-class structure ($C = 4$) and extreme dimensionality ratio
($p/n \approx 37$) make this dataset a challenging and widely studied benchmark for
high-dimensional classification and variable selection methods
\citep{khan2001classification,tibshirani2002diagnosis,zou2005regularization}.

We applied the multinomial formulation of COMBSS-GLM with $C = 4$ classes, given by~\eqref{eqn:h-general}:
\begin{equation*}
  h_{\delta,\lambda}(\boldsymbol{t}, \beta_0, B)
  = -\frac{1}{n}\,\ell(\beta_0, T_{\boldsymbol{t}} B;\, D)
    + \lambda\,\|B\|_F^2
    + \delta\,\|\Gamma_{\boldsymbol{t}} B\|_F^2,
\end{equation*}
with $B \in \mathbb{R}^{p \times (C-1)}$ denoting the coefficient matrix (one column
per non-baseline class) and $\beta_0 \in \mathbb{R}^{C-1}$ the intercept vector.
Algorithm~\ref{alg:fw-homotopy-joint} was applied to obtain subsets with sizes $k = 1, \ldots, 50$
 and with the ridge parameter set to $\lambda = 0$, so
that the homotopy curvature parameter~$\delta$ alone drives the relaxed variables toward binary
corners. The curvature schedule parameters $\delta_{\min}$, $\delta_{\max}$, and the
growth rate~$r$ were automatically calibrated from the data as described in
Section~\ref{sec:algorithm}, and a grid size of $N = 25$ was used. We use a learning rate $\alpha=0.01$ (default option).  Prior to
running the algorithm, the gene expression matrix was column-normalised to unit
$\ell_2$ norm as described in Section~\ref{sec:col-normalization}.  For each subset size~$k$, the selected genes were refit on the original training data using a ridge-penalised multinomial logistic model, with the ridge parameter chosen by cross-validation on the training set.  Classification performance was then evaluated on the held-out test set.  

As a benchmark, we applied the multinomial group Lasso using the \texttt{glmnet}
package in \textsf{R} \citep{friedman2010regularization} with
\texttt{family = "multinomial"} and \texttt{type.multinomial = "grouped"}, which
enforces a common sparsity pattern across the $C - 1 = 3$ class-specific coefficient
vectors.  The regularisation parameter $\lambda$ in the group Lasso was selected via cross-validation  with
misclassification error as the criterion.  We report three variants: the model at $\lambda_{\min}$ (the value minimising cross-validation error), the model at
$\lambda_{\text{1se}}$ (the largest $\lambda$ within one standard error of the
minimum), and the best model selected over the full $\lambda$ path by evaluating test accuracy directly.

Table~\ref{tab:khan_results} summarises the classification performance of COMBSS-GLM and
the multinomial group Lasso on the Khan test set.
\begin{table}[h]
  \centering
  \caption{Classification performance on the Khan SRBCT test set
    ($n_{\mathrm{test}} = 20$). ``Genes'' denotes the number of genes selected
    (out of $p = 2{,}308$). Test accuracy is the proportion of correctly classified
    test samples.}
  \label{tab:khan_results}
  \begin{tabular}{lcc}
    \toprule
    Method & \# genes selected & Test accuracy \\
    \midrule
    COMBSS-GLM ($k = 10$)                    & 5 & 0.85 \\
    COMBSS-GLM ($k = 13$)                    & 8 & 0.95 \\
    COMBSS-GLM ($k = 16$)                    & 12 & 1.00 \\
    Group Lasso ($\lambda_{\text{1se}}$) & 28 & 0.95 \\
    Group Lasso ($\lambda_{\min}$)       & 30 & 0.95 \\
    Group Lasso (best $\lambda$)         & 35 & 1.00 \\
    \bottomrule \\
  \end{tabular}
\end{table}

COMBSS-GLM achieved perfect classification accuracy ($20/20$) on the independent test set
using only $12$ genes.  By contrast, the multinomial group Lasso required $35$ genes
to reach the same performance, and its standard cross-validated models
($\lambda_{\min}$ and $\lambda_{\text{1se}}$) selected $30$ and $28$ genes
respectively while achieving only $95\%$ test accuracy.  Already at $k = 8$ genes, COMBSS-GLM matched this $95\%$ accuracy using
fewer than a third of the number of genes selected by either
cross-validated Lasso model.  This represents a reduction of more than
$70\%$ in model size relative to the group Lasso for equivalent
predictive performance, highlighting the ability of best subset
selection to identify highly parsimonious yet accurate multiclass
classifiers in the $p \gg n$ regime.

Figure~\ref{fig:khan_path} displays the best-subset inclusion path for model sizes
$k = 1$ to $k = 20$, together with the test classification accuracy for each subset
size.  Gene~$1954$ is the first to enter the model (at $k = 1$) and persists
throughout, joined by Gene~$1955$ at $k = 2$ and Gene~$246$ at $k = 3$.  A core set
of $8$ genes (Genes~$187$, $246$, $509$, $1389$, $1645$, $1954$, $1955$, and $2050$)
is established by $k = 8$, at which point test accuracy reaches $95\%$.  Accuracy
remains at $95\%$ as further genes are added and reaches perfect classification
($100\%$, highlighted in red in Figure~\ref{fig:khan_path}) at $k = 12$ with only
$12$ out of $2{,}308$ genes.  Table~\ref{tab:khan_genes} in
Appendix~\ref{app:khan_genes} provides the gene annotations for these $12$ genes,
obtained from the \texttt{srbct} dataset in the \textsf{R} package \texttt{mixOmics}
\citep{rohart2017mixomics}.

\begin{figure}[t]
  \centering
  \includegraphics[width=0.85\textwidth]{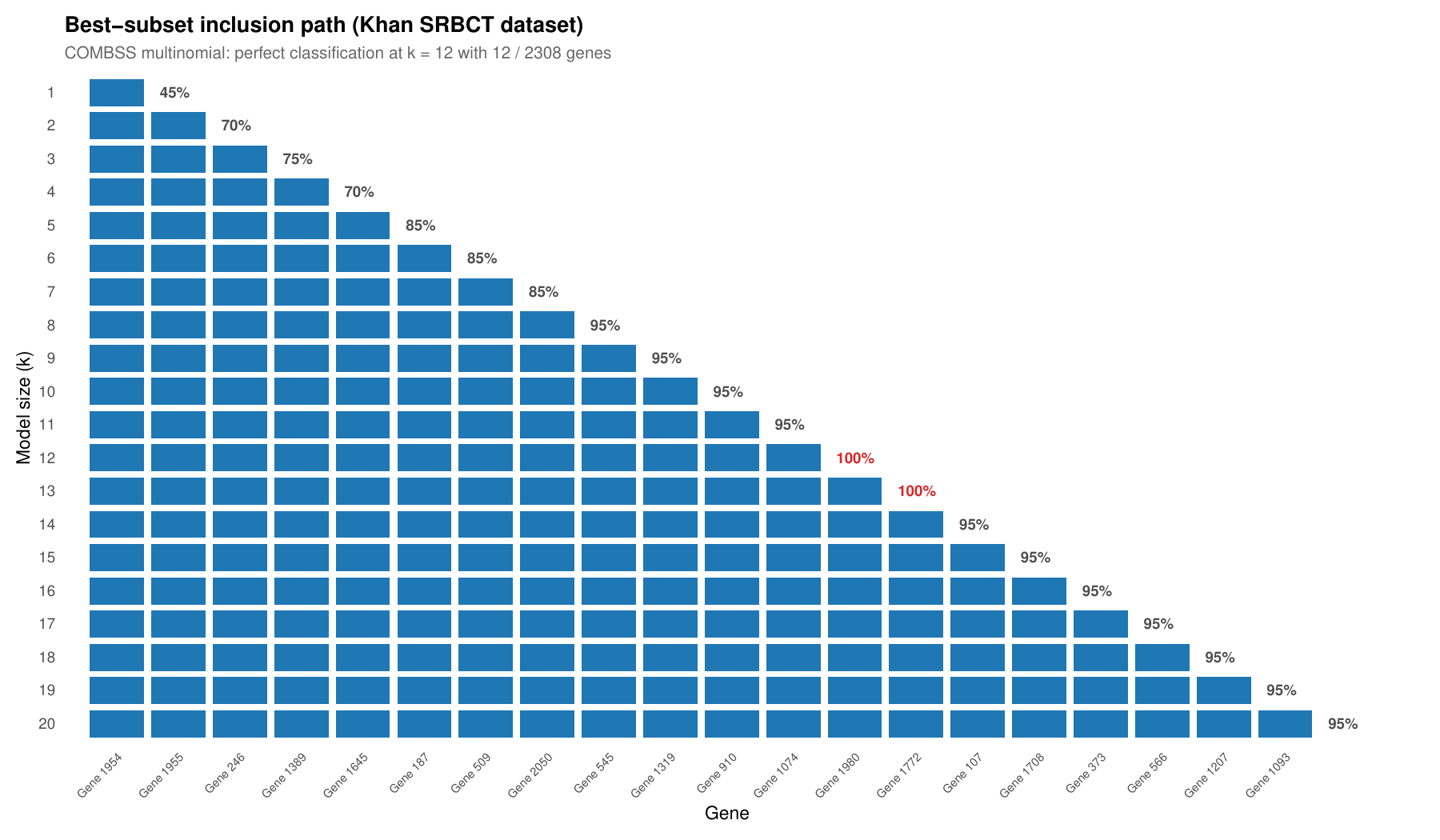}
  \caption{Best-subset inclusion path for the Khan SRBCT gene expression dataset.
    Each row corresponds to a model size $k = 1, \ldots, 20$, and each filled cell
    indicates that the corresponding gene is included in the best subset of that size.
    The test classification accuracy is displayed at the end of each row, with perfect
    accuracy ($100\%$) highlighted in red.  The path exhibits a monotone nesting
    property: once a gene enters the model, it remains selected at all larger model
    sizes.}
  \label{fig:khan_path}
\end{figure}

These results are consistent with and complementary to previous analyses of the Khan
dataset.  \citet{khan2001classification} employed artificial neural networks and
identified a set of $96$ genes achieving perfect classification, while
\citet{tibshirani2002diagnosis} applied the nearest shrunken centroid method (PAM) and
reduced this to $43$ genes with the same accuracy.  The proposed COMBSS-GLM method
achieves perfect classification with substantially fewer genes ($k = 12$),
demonstrating the advantage of direct best subset selection under a multinomial
logistic model over penalisation-based approaches that encourage but do not enforce
exact sparsity.

\section{Conclusion}
\label{sec:conclusion}

We have presented a continuous optimisation method for best subset
selection in generalised linear models, extending the COMBSS approach
of \citet{moka2024combss} from linear regression to the broader GLM
family.  The method replaces the combinatorial search over $\binom{p}{k}$
candidate subsets with a smooth optimisation on a simplex slice, solved
by a Frank--Wolfe homotopy that requires only one penalised GLM fit per
iteration.  We established sufficient conditions under which the relaxed
objective becomes concave in the selection weights, guaranteeing that
global minimisers correspond to valid subsets, and derived explicit
thresholds for logistic and multinomial models.

Simulations showed that the method consistently matches or outperforms
the Lasso, SCAD, and MCP in sparse recovery and
predictive accuracy, particularly when $p \gg n$.  In a rice GWAS with
$p = 158{,}210$ SNPs, the method recovered known trait-associated loci
within minutes on standard hardware.  In the Khan SRBCT problem
($C = 4$, $p = 2{,}308$, $n = 63$), COMBSS-GLM achieved perfect test
classification with only $12$ genes, more than $65\%$ fewer than the
multinomial group Lasso.

Natural extensions include other exponential-family members (e.g.\
Poisson, negative binomial), systematic comparison with mixed-integer
optimisation solvers \citep{BKM16}, and incorporation of
group-structured sparsity (in the same spirit as \cite{mathur2024group}) for applications such as pathway-based
genomic analyses.

\section*{Data and Code Availability}
Open-source implementations of the proposed method in both R and Python,
together with reproducible scripts for all simulation studies and
biomedical applications, are publicly available in R at
\url{https://github.com/benoit-liquet/COMBSS-GLM-R} and in Python at
\url{https://github.com/anantmathur44/COMBSS-GLM-Python}.



\bibliographystyle{plainnat}
\bibliography{Ref}

\clearpage
\appendix

\renewcommand{\thesection}{S\arabic{section}}
\renewcommand{\thetable}{S\arabic{table}}
\renewcommand{\thefigure}{S\arabic{figure}}
\renewcommand{\theequation}{S\arabic{equation}}
\setcounter{section}{0}
\setcounter{table}{0}
\setcounter{figure}{0}
\setcounter{equation}{0}

\section*{Supplementary Material}
\label{sec:supplementary}
\addcontentsline{toc}{section}{Supplementary Material}

\section{Multinomial Regression}
\label{app:multinomial}

We briefly specialize the Boolean-relaxation framework to multinomial
regression with $C$ classes, using class $C$ as the baseline.
Let $\cD = (X, \v y)$, where $X \in \R^{n \times p}$ is the design matrix
(excluding the intercept) and $y_i \in \{1, \ldots, C\}$.
We parameterise the model by an intercept vector
$\bbt_0 \in \R^{C-1}$ and a coefficient matrix $B \in \R^{p \times (C-1)}$
(column $c$ corresponds to class $c$).
For $i \in [n]$ and $c \in [C-1]$, define the scores
\[
  \eta_{ic}(\bbt_0, B) := \beta_{0,c} + x_i^\top B_{:c},
\]
and probabilities
\[
  \pi_{ic}(\bbt_0, B) :=
    \frac{\exp(\eta_{ic}(\bbt_0,B))}{1+\sum_{d=1}^{C-1}\exp(\eta_{id}(\bbt_0,B))}
    \quad (c \in [C-1]),
\]
and
\[
  \pi_{iC}(\bbt_0, B) :=
    \frac{1}{1+\sum_{d=1}^{C-1}\exp(\eta_{id}(\bbt_0,B))}.
\]
Let $Y \in \R^{n \times (C-1)}$ be the one-hot encoding,
$Y_{ic} = \ind(y_i = c)$ for $c \in [C-1]$.
The log-likelihood $\ell(\bbt_0, B;\, \cD)$ is
equal to 
\[
  \sum_{i=1}^n \left(
    \sum_{c=1}^{C-1} Y_{ic}\,\eta_{ic}(\bbt_0, B)
    - \log\Bigl(1 + \sum_{d=1}^{C-1}\exp(\eta_{id}(\bbt_0, B))\Bigr)
  \right).
\]

\paragraph{Boolean relaxation with mandatory features.}
Assume $m$ mandatory features (with indices $1, \ldots, m$ without loss
generality).  As in the binary case, for $\v t \in [0,1]^{p-m}$ define the
diagonal matrices
\[
  (T_{\v t})_{jj} =
  \begin{cases}
    1, & j \le m,\\
    t_{j-m}, & j > m,
  \end{cases}
  \qquad\text{and}\qquad
  \Gamma_{\v t} = \sqrt{I - T_{\v t}^2}.
\]
Given $\delta > 0$ and $\lambda \ge 0$, define the penalised objective
\begin{equation}
  \label{eqn:h-general}
  h_{\delta,\lambda}(\v t, \bbt_0, B)
  =
  -\frac{1}{n}\ell(\bbt_0, T_{\v t} B;\, \cD)
  + \lambda\|B\|_F^2
  + \delta\|\Gamma_{\v t} B\|_F^2,
\end{equation}
and the value function
\[
  f_{\delta,\lambda}(\v t) := \min_{\bbt_0, B}\, h_{\delta,\lambda}(\v t, \bbt_0, B).
\]
The relaxed subset-selection problem is
$\min_{\v t \in \cT_k} f_{\delta,\lambda}(\v t)$ with
$\cT_k = \{\v t \in [0,1]^{p-m} : \v 1^\top \v t = k\}$.

\paragraph{Inner solves and gradients on the interior.}
For interior $\v t \in (0,1)^{p-m}$, it is convenient to move $\v t$ entirely into a quadratic penalty.
Let $\v\xi_0 := \bbt_0$ and $\Xi := T_{\v t} B$ (so $B = T_{\v t}^{-1}\Xi$)
and define
\begin{align*}
  \phi_{\delta,\lambda}(\v t, \v\xi_0, \Xi)
  &:=
  -\frac{1}{n}\ell(\v\xi_0, \Xi;\, \cD)
  + \lambda\|T_{\v t}^{-1}\Xi\|_F^2
  + \delta\|\Gamma_{\v t} T_{\v t}^{-1}\Xi\|_F^2\\
  &=
  -\frac{1}{n}\ell(\v\xi_0, \Xi;\, \cD)
  + \Bigl\|(\lambda I + \delta\,\Gamma_{\v t}^2)^{1/2} T_{\v t}^{-1}\Xi\Bigr\|_F^2.
\end{align*}
Then $f_{\delta,\lambda}(\v t) = \min_{\v\xi_0, \Xi}\,\phi_{\delta,\lambda}(\v t, \v\xi_0, \Xi)$,
and by Danskin's theorem, for any minimizer $(\wt{\v\xi}_0, \wt\Xi)$,
\[
  \nabla f_{\delta,\lambda}(\v t)
  = \nabla_{\v t}\,\phi_{\delta,\lambda}(\v t, \wt{\v\xi}_0, \wt\Xi),
\]
or a valid subgradient if the minimizer is not unique.
Since the likelihood term does not depend on $\v t$, we differentiate only
the penalty.  Using $T_{\v t}$ and $\Gamma_{\v t}$,
\begin{align}
  \Bigl\|(\lambda I + \delta\,\Gamma_{\v t}^2)^{1/2} T_{\v t}^{-1}\Xi\Bigr\|_F^2
  &=
  \lambda\sum_{j=1}^{m}\|\Xi_{j:}\|_2^2\\
  &\hspace{5mm} + \sum_{j=1}^{p-m}\left(\frac{\lambda+\delta}{t_j^{2}}-\delta\right)
    \|\Xi_{m+j,:}\|_2^2,
  \label{eqn:penalty-expansion}
\end{align}
and hence for $j = 1, \ldots, p-m$,
\[
  \frac{\partial}{\partial t_j}\,\phi_{\delta,\lambda}(\v t, \v\xi_0, \Xi)
  =
  -2(\lambda+\delta)\,\frac{\|\Xi_{m+j,:}\|_2^2}{t_j^{3}},
\]
and
\[
  \nabla_{\v t}\,\phi_{\delta,\lambda}(\v t, \v\xi_0, \Xi)
  =
  -2(\lambda+\delta)
  \begin{pmatrix}
    \|\Xi_{m+1,:}\|_2^2/t_1^3\\
    \vdots\\
    \|\Xi_{p,:}\|_2^2/t_{p-m}^3
  \end{pmatrix}.
\]

\paragraph{Using solvers without coefficient-wise penalties.}
The expansion~\eqref{eqn:penalty-expansion} shows that, for fixed interior
$\v t$, the penalty is a row-weighted ridge term.
If coefficient-wise weights are unavailable, define
\[
  \omega_j(\v t) :=
  \begin{cases}
    \lambda, & j = 1, \ldots, m,\\[2pt]
    \displaystyle\frac{\lambda+\delta}{t_{j-m}^{2}} - \delta,
      & j = m+1, \ldots, p,
  \end{cases}
\]
reparameterise $\Theta := \diag(\sqrt{\v\omega(\v t)})\,\Xi$, and scale the
design as $\widetilde X_{\v t} := X\,\diag(\v\omega(\v t)^{-1/2})$.
Then $X\Xi = \widetilde X_{\v t}\Theta$ and
$\sum_{j=1}^p \omega_j(\v t)\|\Xi_{j,:}\|_2^2 = \|\Theta\|_F^2$, so the
inner problem reduces to a standard ridge-penalised multinomial fit on
$(\widetilde X_{\v t}, \v y)$ with a uniform Frobenius penalty, followed
by back-transformation $\Xi = \diag(\v\omega(\v t)^{-1/2})\,\Theta$.

\section{Proofs}
\label{app:conv-proofs}
\begin{proof}[Proof of Proposition~\ref{prop:monotone-continuous-delta}]
Property~(i) follows because $(1-t_j^2)\ge 0$ for all
$\v t\in[0,1]^{p-m}$, so $h_{\delta_2,\lambda}\ge h_{\delta_1,\lambda}$
pointwise in $(\beta_0,\bbt)$, and taking the infimum preserves the
inequality.
Property~(ii) is a direct consequence of Berge's maximum
theorem~\citep{sundaram1996first}: $h_{\delta,\lambda}(\v t,\beta_0,\bbt)$
is jointly continuous in $(\delta,\beta_0,\bbt)$, and for every
$\delta>0$ the inner objective is coercive in $(\beta_0,\bbt)$
(owing to the ridge term $\delta\|\Gamma_{\v t}\bbt\|_2^2$),
so the minimising correspondence is compact-valued and continuous,
and Berge's theorem implies continuity of the value function
$f_{\delta,\lambda}(\v t)$ in $\delta$.
\end{proof}

\begin{proof}[Proof of Theorem~\ref{prop:concavity-threshold-unified}]
The stated bound implies
\[
  X_u^\top H_\eta(\eta) X_u \preceq 2\delta I
  \qquad\text{for all }\eta.
\]
Hence
\[
  D_{\bbt_u} X_u^\top H_\eta(\eta) X_u D_{\bbt_u} \preceq 2\delta D_{\bbt_u}^2.
\]
Using
\[
  \nabla_{\v t\v t}^2\, h_{\delta,\lambda}(\v t, \beta_0, \bbt)
  =
  D_{\bbt_u} X_u^\top H_\eta(\eta) X_u D_{\bbt_u}
  - 2\delta D_{\bbt_u}^2,
\]
we obtain $\nabla_{\v t\v t}^2\, h_{\delta,\lambda} \preceq 0$, so
$\v t \mapsto h_{\delta,\lambda}(\v t, \beta_0, \bbt)$ is concave for each
fixed $(\beta_0, \bbt)$.
Now
\[
  f_{\delta,\lambda}(\v t) = \inf_{\beta_0, \bbt}\, h_{\delta,\lambda}(\v t, \beta_0, \bbt)
\]
is the pointwise infimum of concave functions, hence concave.
Finally, minimizing a concave function over the compact polytope $\cT_k$ attains a minimizer at an extreme point, and the extreme points of $\cT_k$
are exactly $\cS_k$.
\end{proof}


\clearpage
\section{Additional Simulations}
\label{app:add-sims}

\begin{figure}[H]
  \centering
  \begin{subfigure}[b]{0.32\textwidth}
    \centering
    \includegraphics[width=\textwidth]{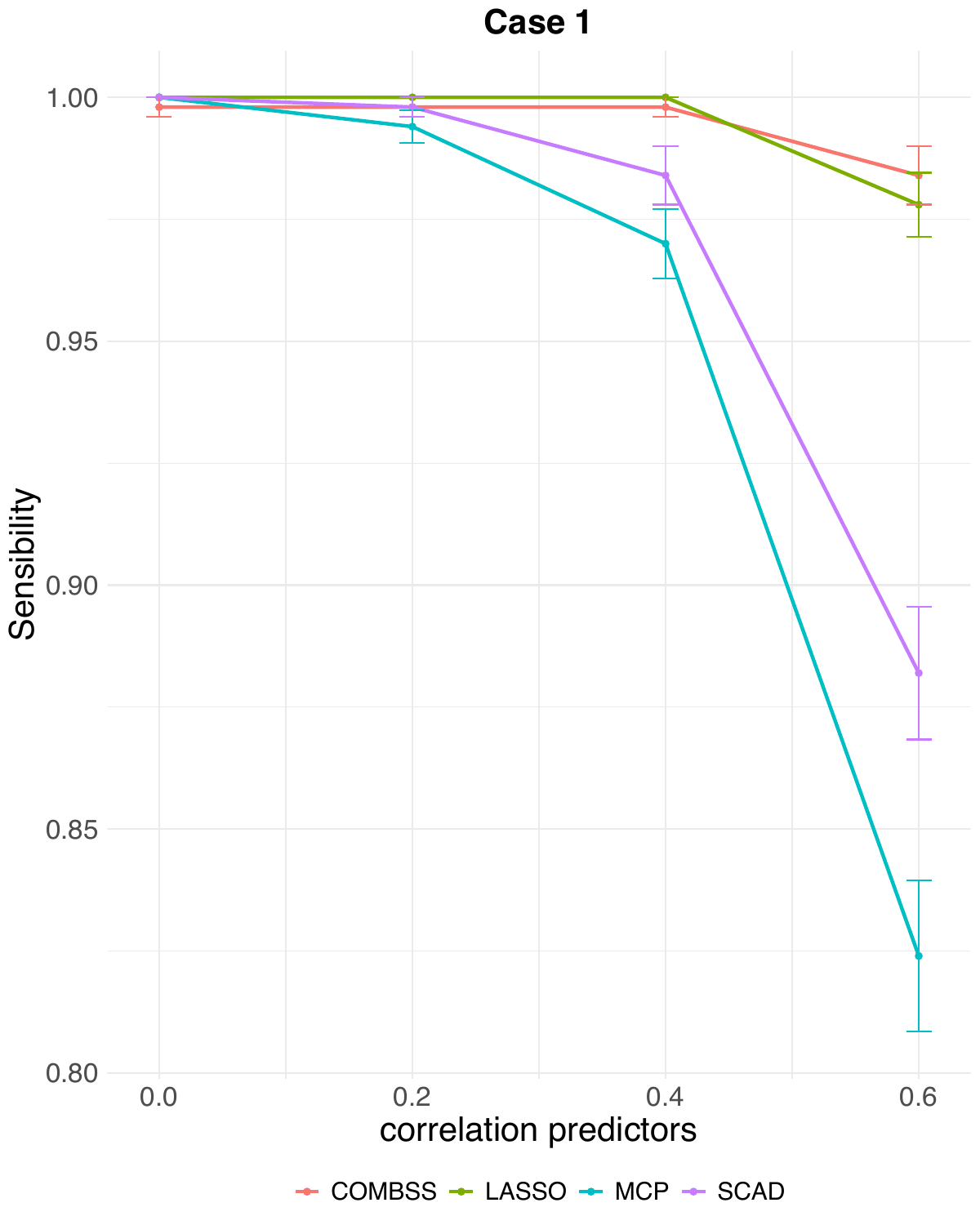}
    \caption{Sensitivity -- Case~1}
    \label{fig:low_case1_sens}
  \end{subfigure}
  \hfill
  \begin{subfigure}[b]{0.32\textwidth}
    \centering
    \includegraphics[width=\textwidth]{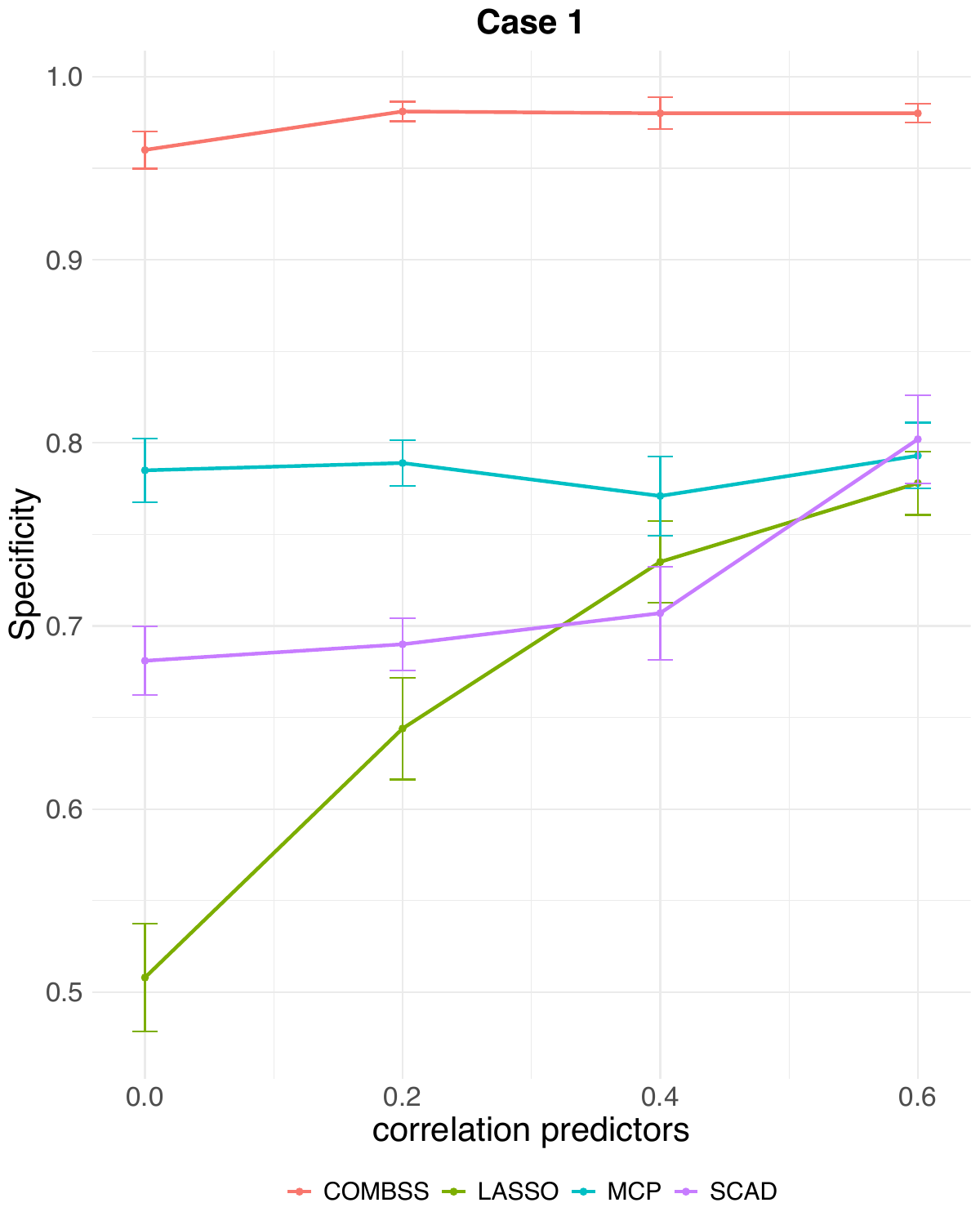}
    \caption{Specificity -- Case~1}
    \label{fig:low_case1_spec}
  \end{subfigure}
  \hfill
  \begin{subfigure}[b]{0.32\textwidth}
    \centering
    \includegraphics[width=\textwidth]{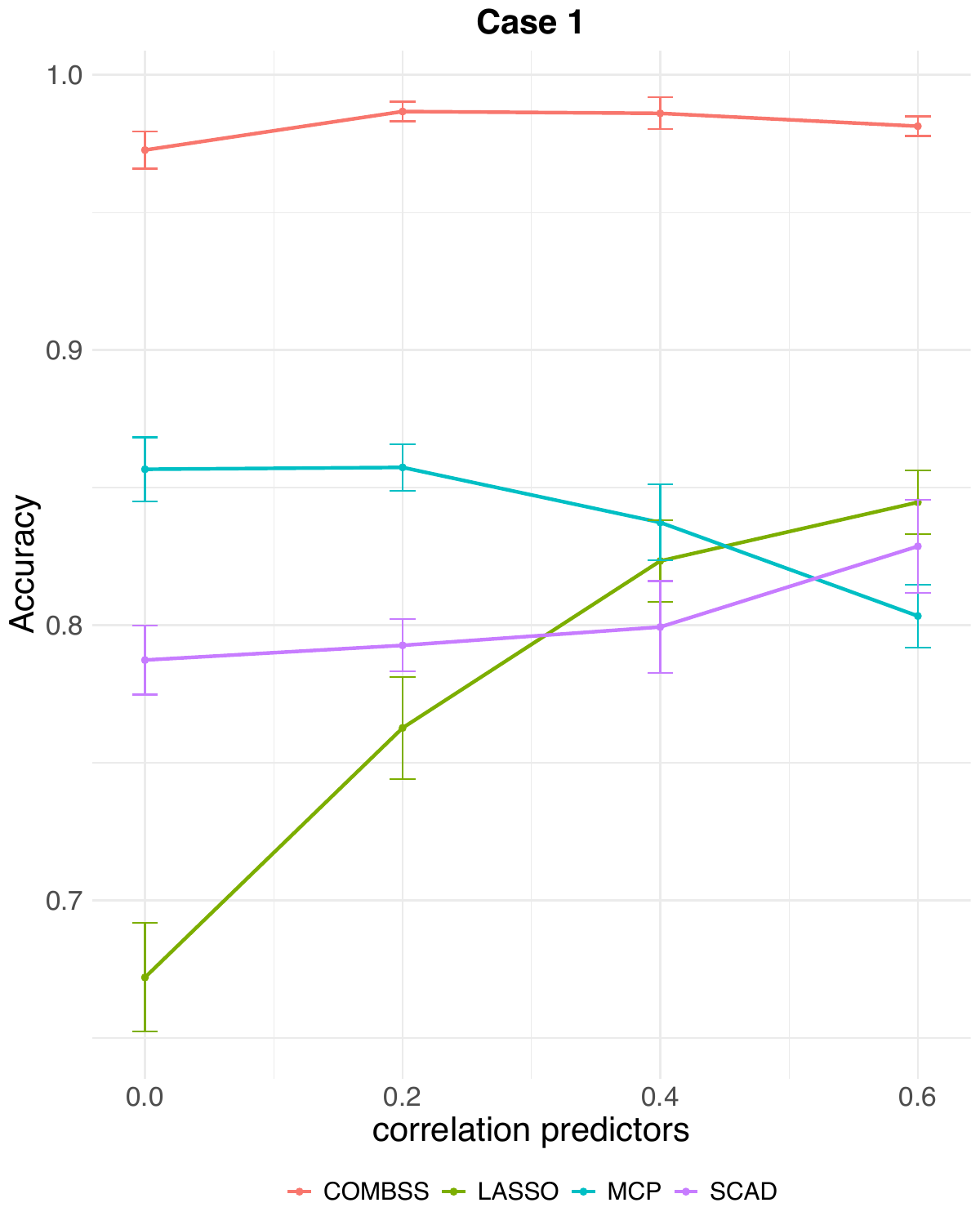}
    \caption{Selection accuracy -- Case~1}
    \label{fig:low_case1_selacc}
  \end{subfigure}

  \vspace{0.4cm}

  \begin{subfigure}[b]{0.32\textwidth}
    \centering
    \includegraphics[width=\textwidth]{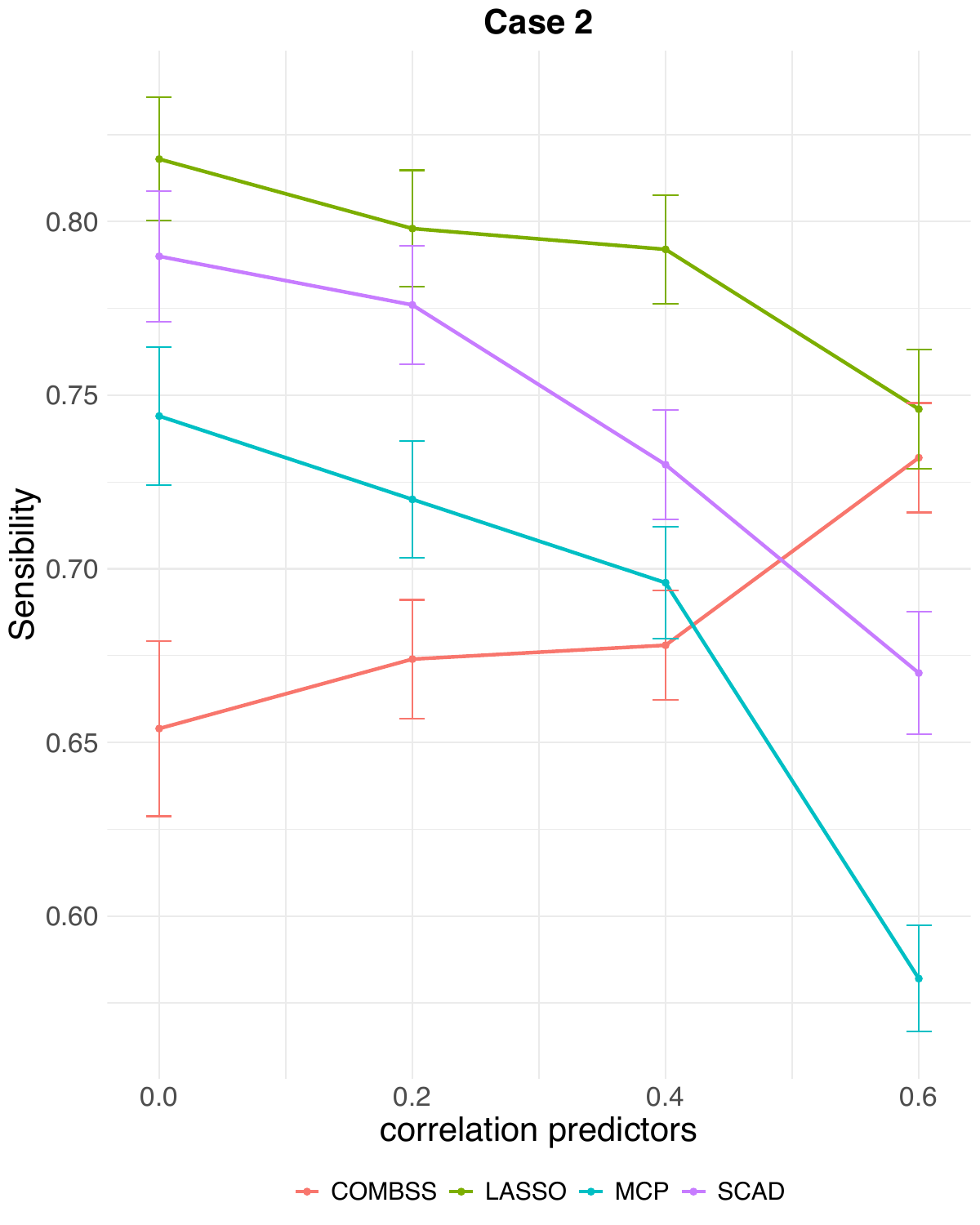}
    \caption{Sensitivity -- Case~2}
    \label{fig:low_case2_sens}
  \end{subfigure}
  \hfill
  \begin{subfigure}[b]{0.32\textwidth}
    \centering
    \includegraphics[width=\textwidth]{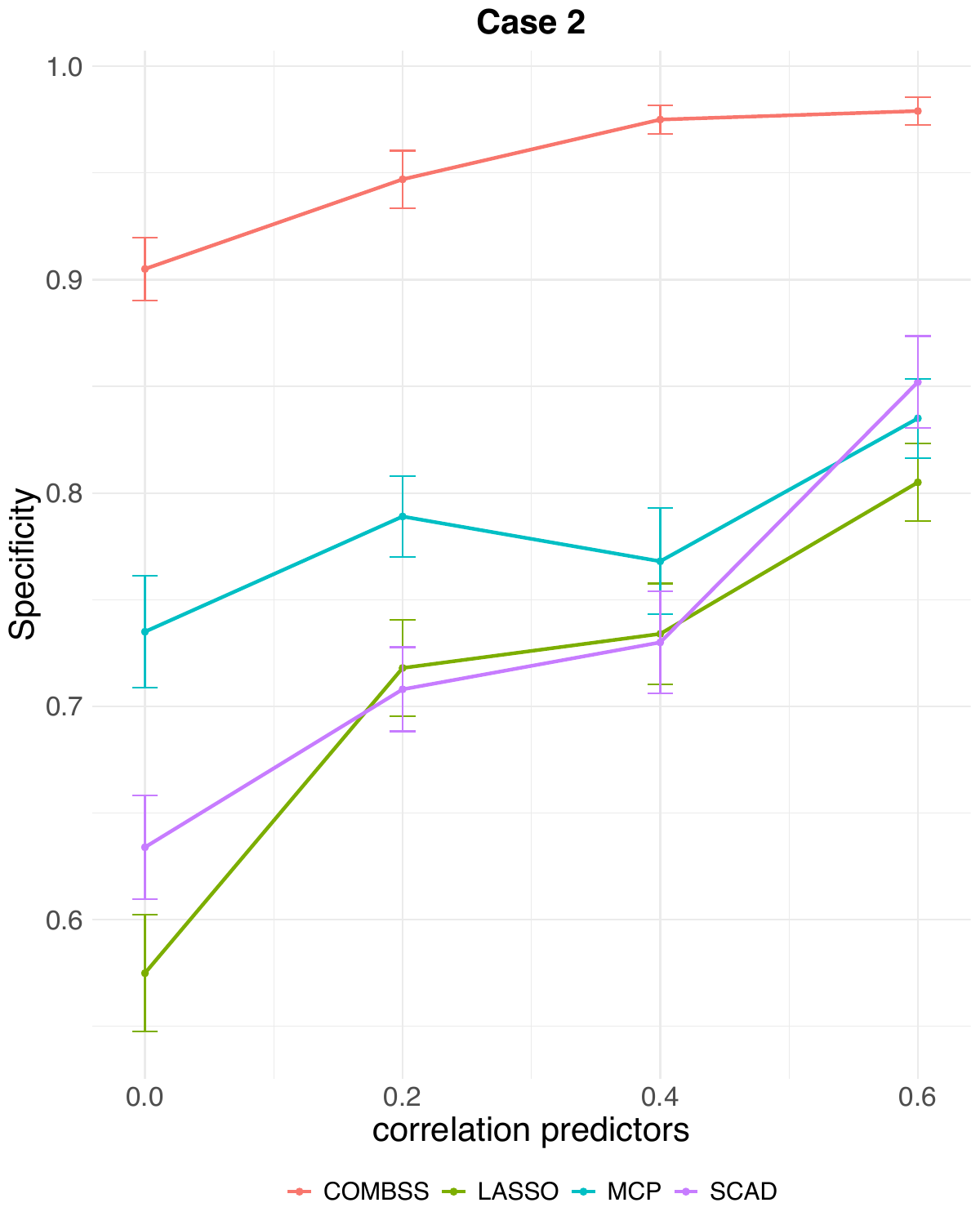}
    \caption{Specificity -- Case~2}
    \label{fig:low_case2_spec}
  \end{subfigure}
  \hfill
  \begin{subfigure}[b]{0.32\textwidth}
    \centering
    \includegraphics[width=\textwidth]{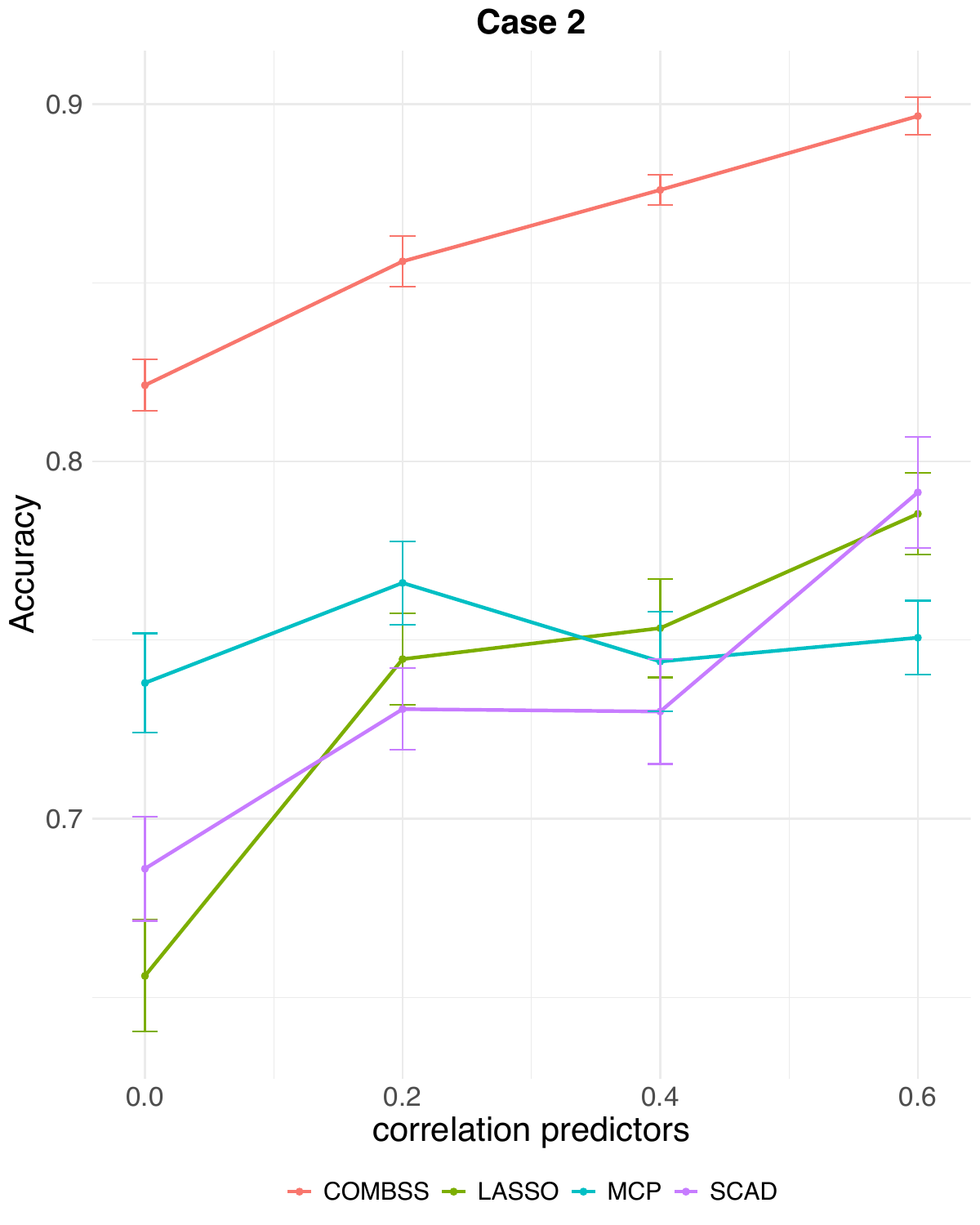}
    \caption{Selection accuracy -- Case~2}
    \label{fig:low_case2_selacc}
  \end{subfigure}
  \caption{Supplementary performance results in the low-dimensional setting
    ($n = 200$, $p = 30$) for Case~1 (top row) and Case~2 (bottom row).
    Each panel displays the average sensitivity, specificity, and selection
    accuracy over 50 replications as a function of the predictor correlation
    $\rho \in \{0, 0.2, 0.4, 0.6\}$, with vertical bars denoting one
    standard error.}
  \label{fig:supp_sim_low}
\end{figure}

\begin{figure}[H]
  \centering
  \begin{subfigure}[b]{0.32\textwidth}
    \centering
    \includegraphics[width=\textwidth]{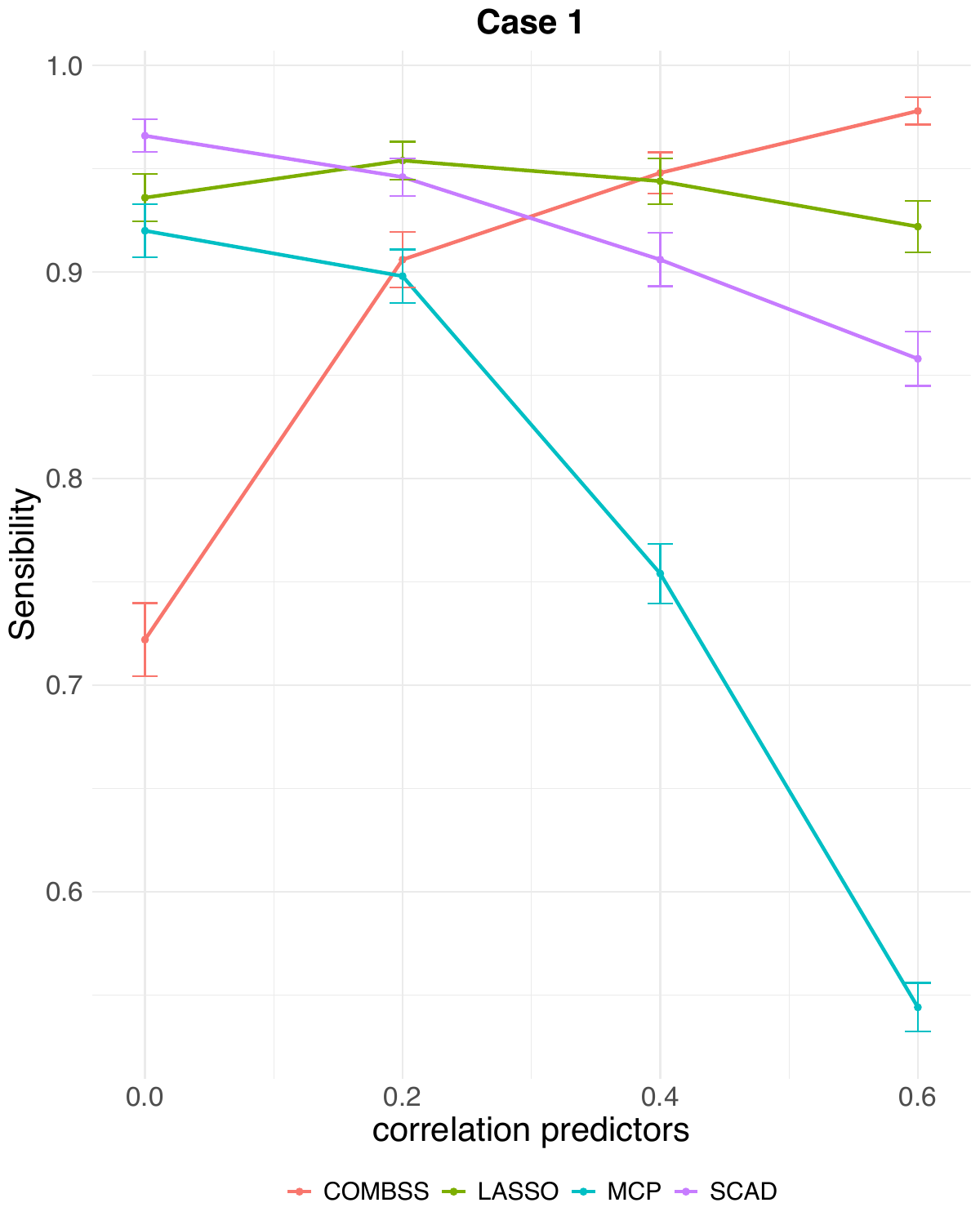}
    \caption{Sensitivity -- Case~1}
    \label{fig:high_case1_sens}
  \end{subfigure}
  \hfill
  \begin{subfigure}[b]{0.32\textwidth}
    \centering
    \includegraphics[width=\textwidth]{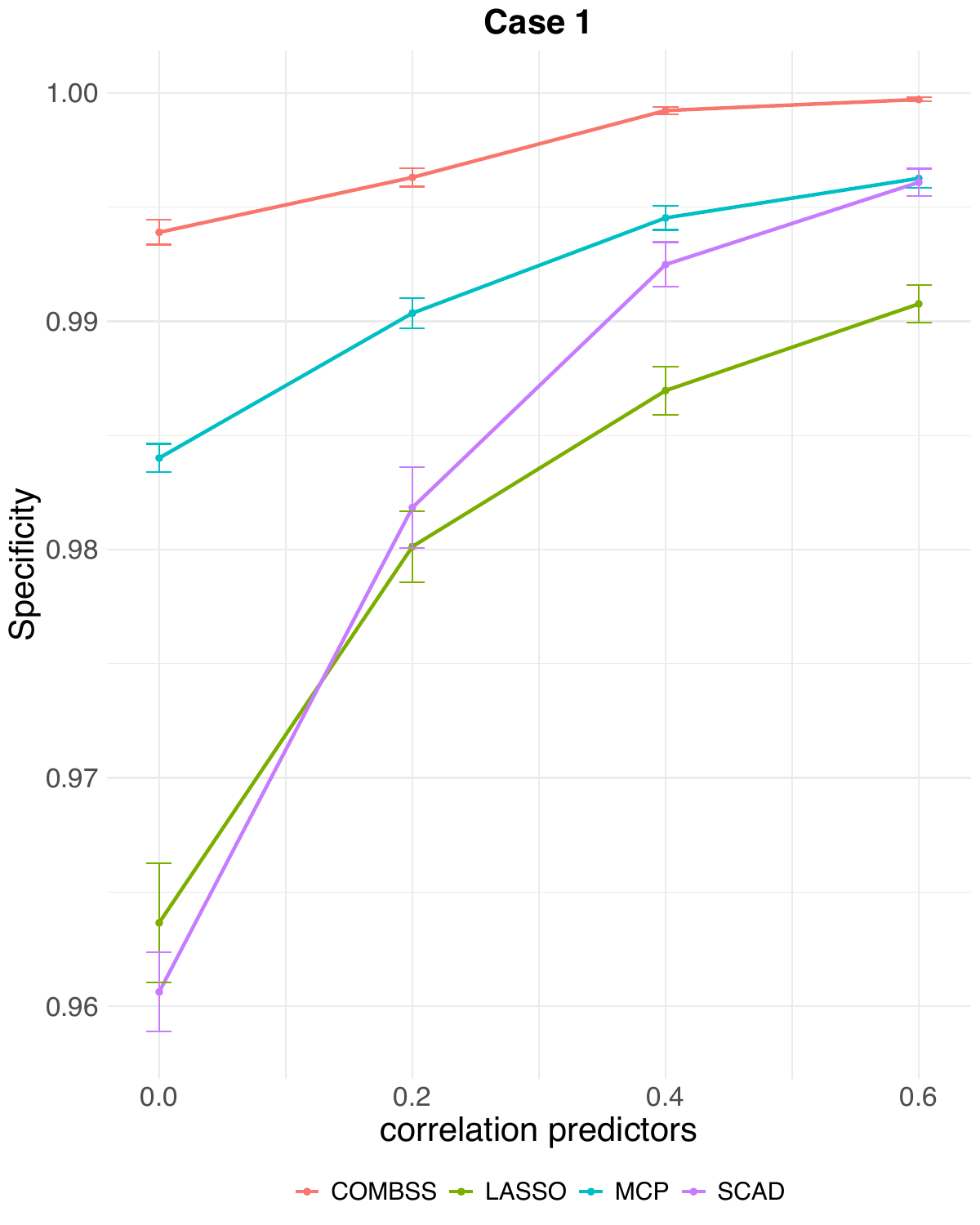}
    \caption{Specificity -- Case~1}
    \label{fig:high_case1_spec}
  \end{subfigure}
  \hfill
  \begin{subfigure}[b]{0.32\textwidth}
    \centering
    \includegraphics[width=\textwidth]{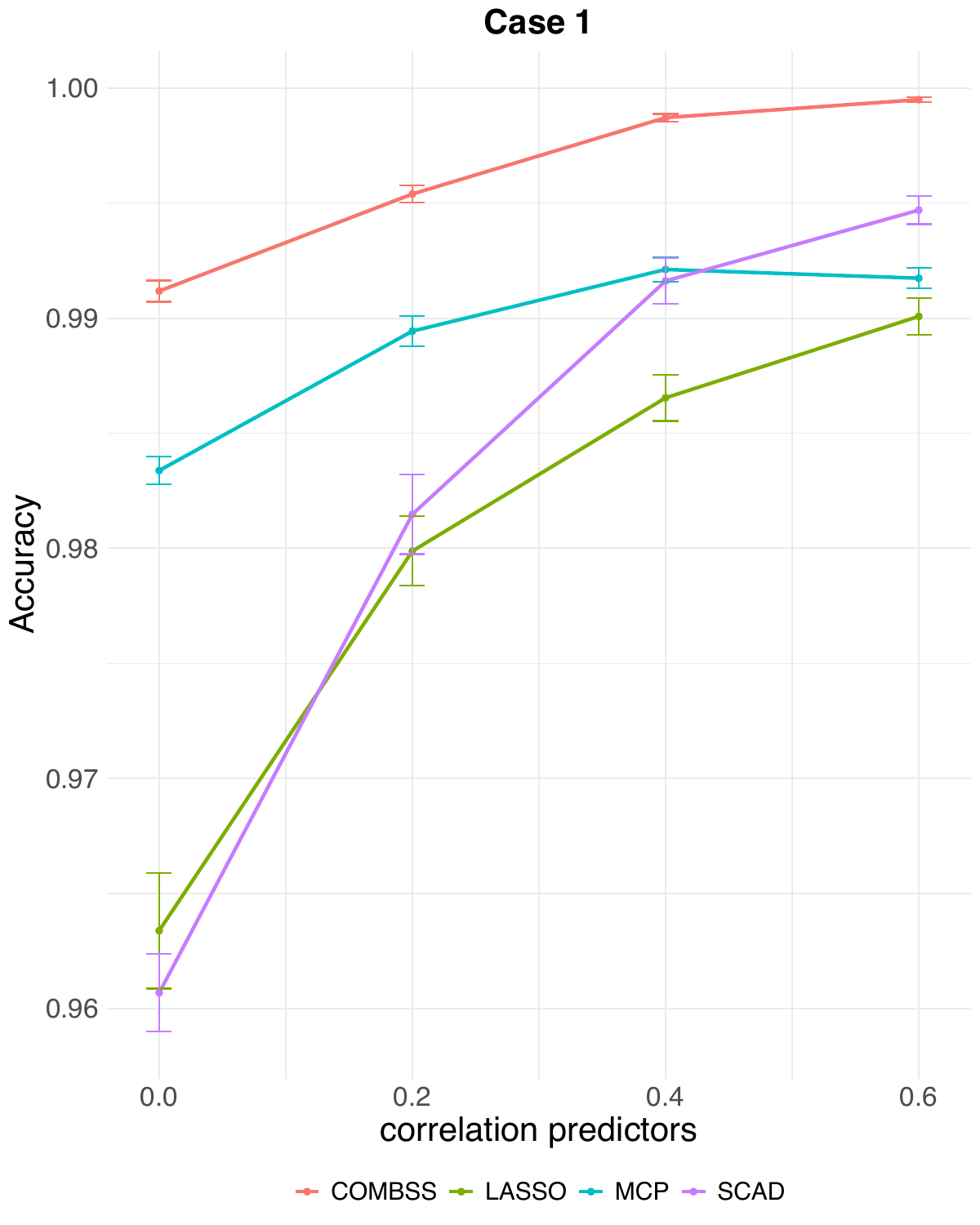}
    \caption{Selection accuracy -- Case~1}
    \label{fig:high_case1_selacc}
  \end{subfigure}

  \vspace{0.4cm}

  \begin{subfigure}[b]{0.32\textwidth}
    \centering
    \includegraphics[width=\textwidth]{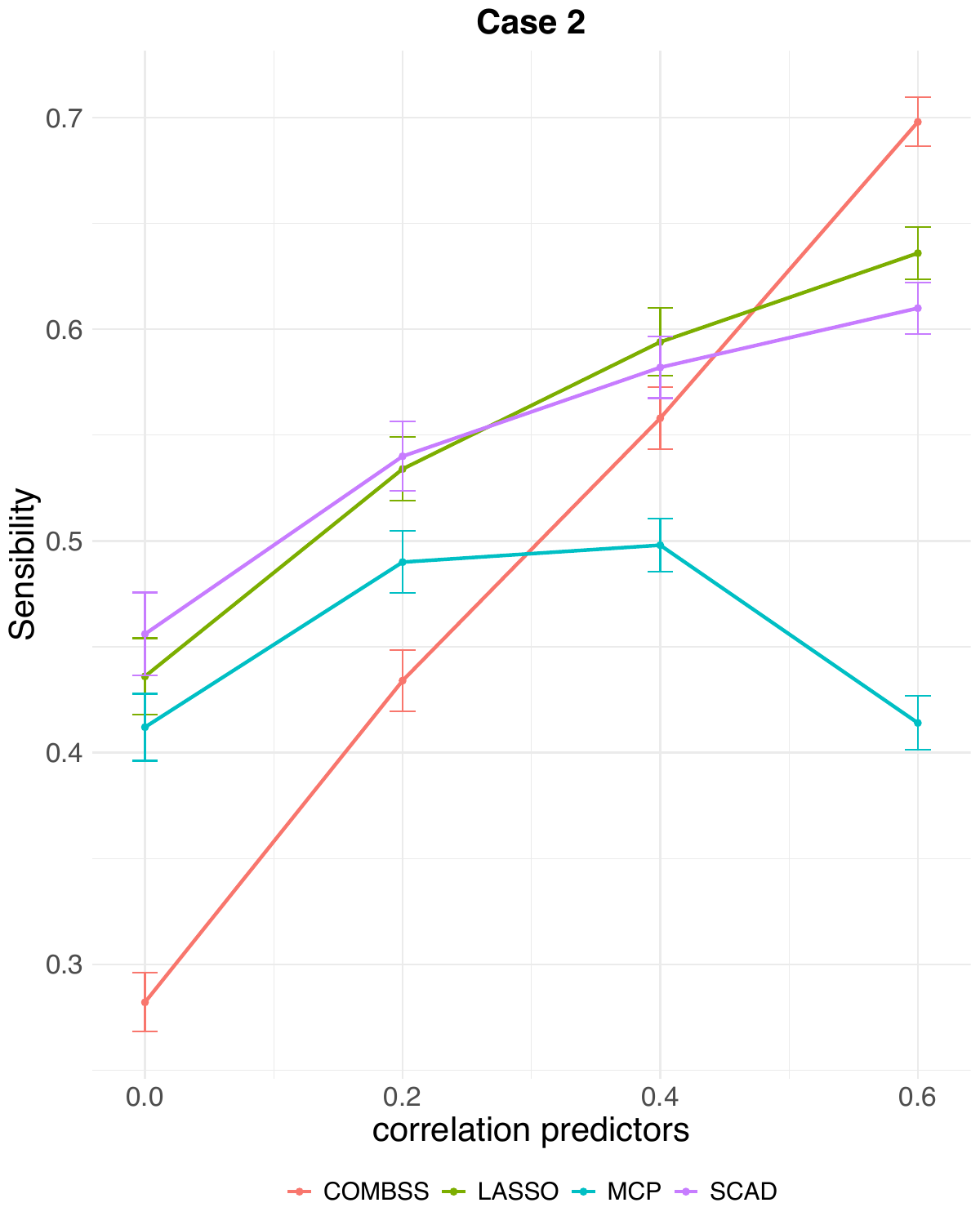}
    \caption{Sensitivity -- Case~2}
    \label{fig:high_case2_sens}
  \end{subfigure}
  \hfill
  \begin{subfigure}[b]{0.32\textwidth}
    \centering
    \includegraphics[width=\textwidth]{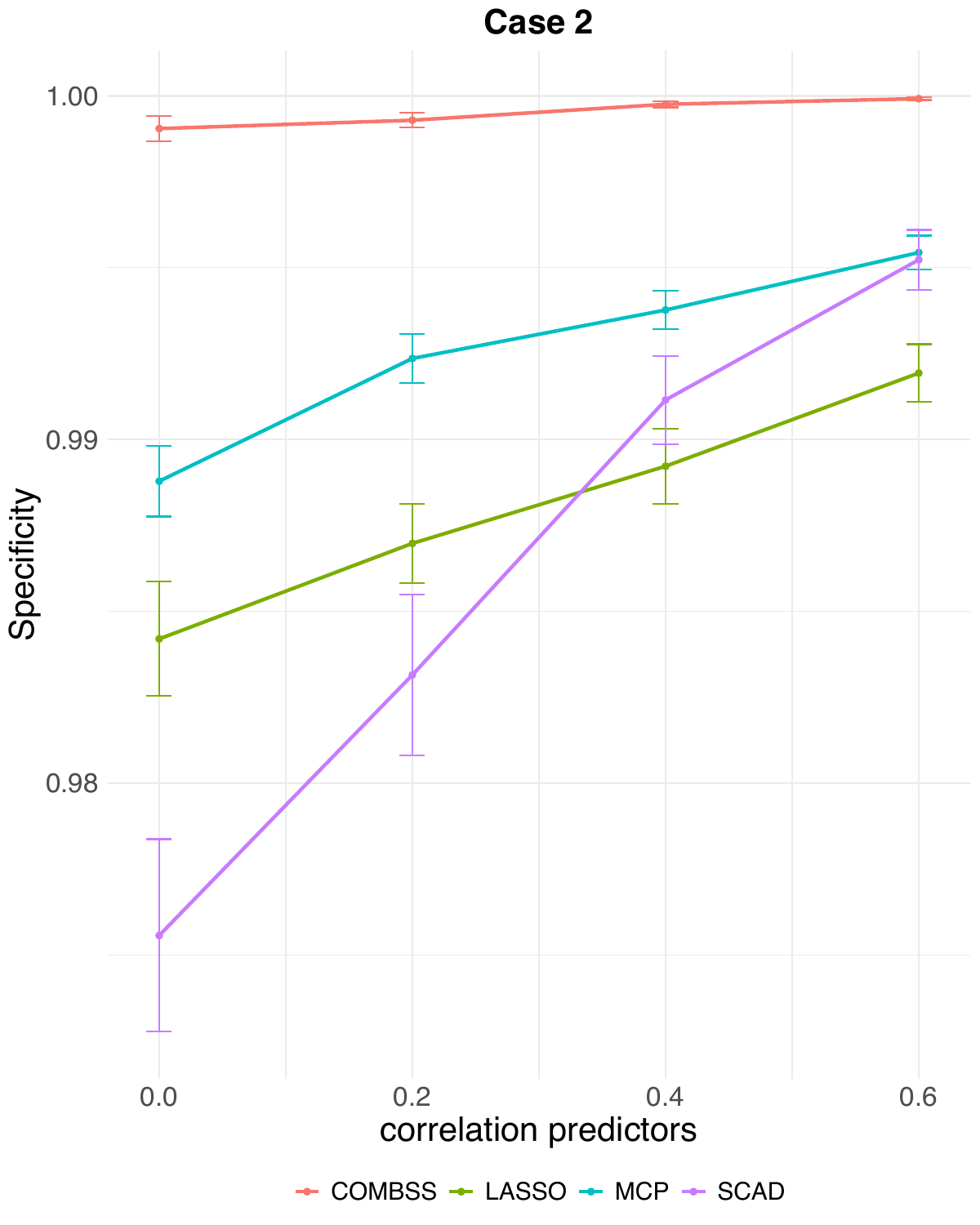}
    \caption{Specificity -- Case~2}
    \label{fig:high_case2_spec}
  \end{subfigure}
  \hfill
  \begin{subfigure}[b]{0.32\textwidth}
    \centering
    \includegraphics[width=\textwidth]{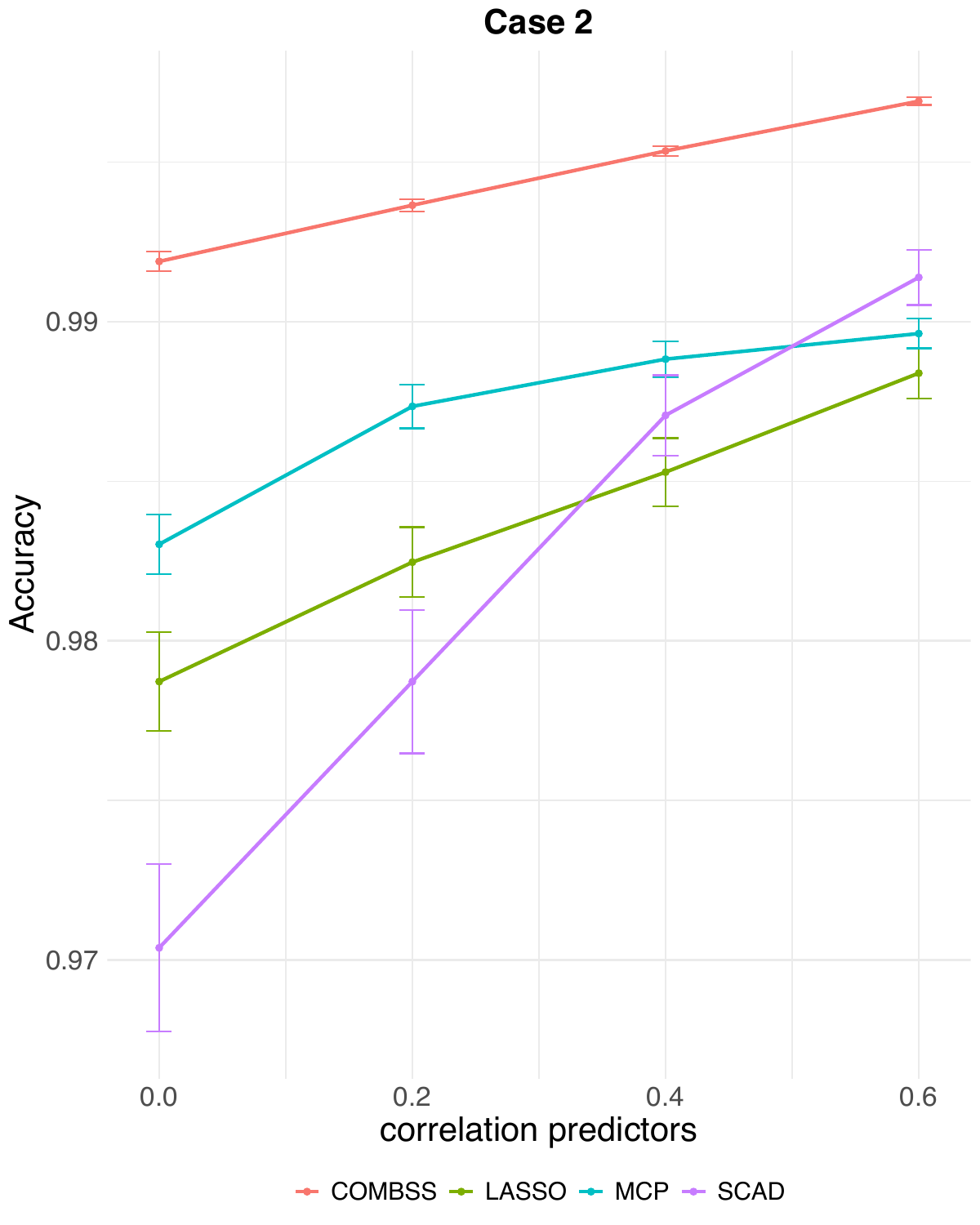}
    \caption{Selection accuracy -- Case~2}
    \label{fig:high_case2_selacc}
  \end{subfigure}
  \caption{Supplementary performance results in the high-dimensional setting
    ($n = 200$, $p = 1000$) for Case~1 (top row) and Case~2 (bottom row).
    Each panel displays the average sensitivity, specificity, and selection
    accuracy over 50 replications as a function of the predictor correlation
    $\rho \in \{0, 0.2, 0.4, 0.6\}$, with vertical bars denoting one
    standard error.}
  \label{fig:supp_sim_high}
\end{figure}

\clearpage
\section{GWAS Application: Extended Results}
\label{app:gwas-extended}

\begin{figure}[H]
  \centering
  \begin{subfigure}[b]{0.48\textwidth}
    \centering
    \includegraphics[height=5cm]{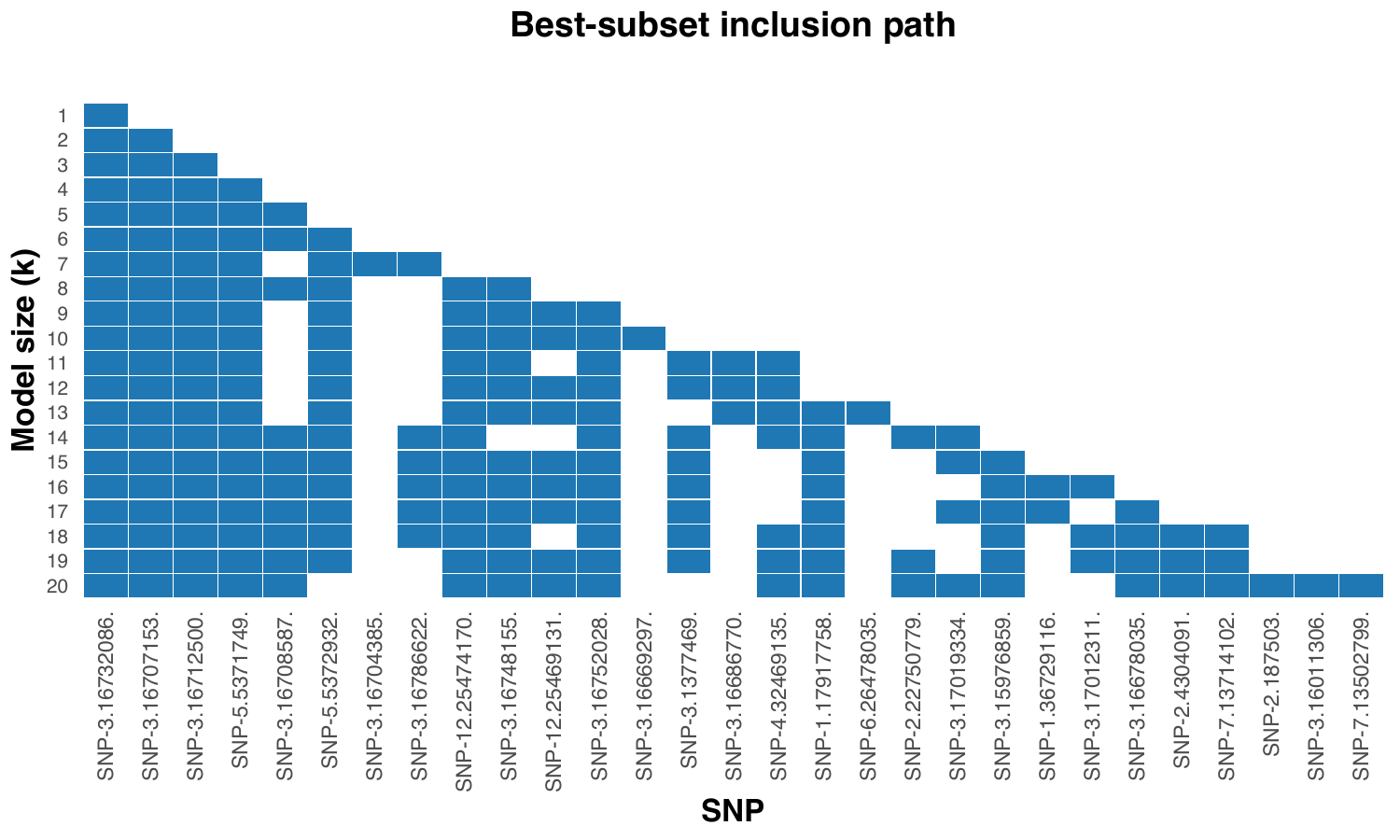}
    \caption{Best-subset inclusion path for models.}
    \label{fig:best_subset_inclusion_path_20}
  \end{subfigure}
  \hfill
  \begin{subfigure}[b]{0.48\textwidth}
    \centering
    \includegraphics[height=5cm]{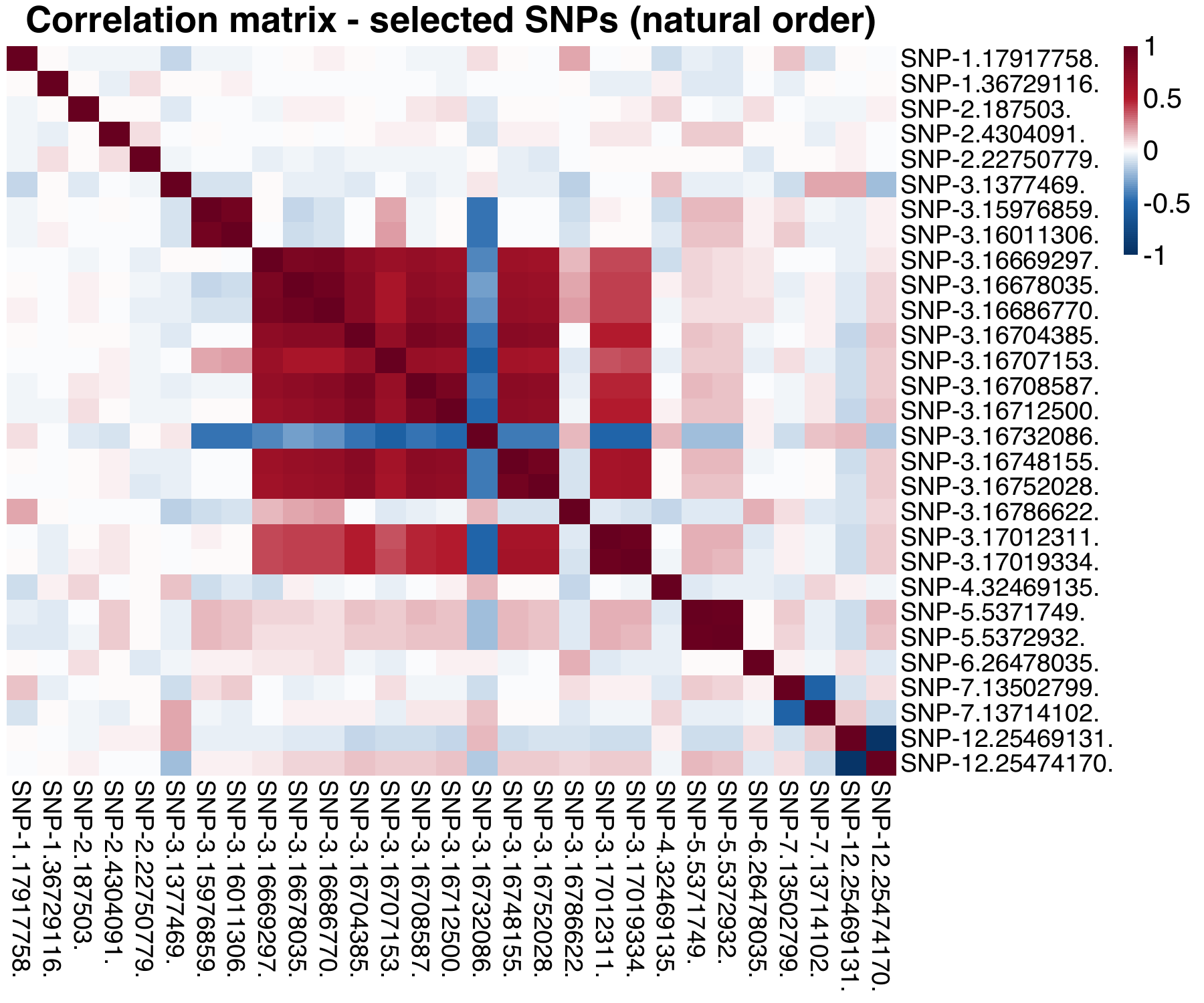}
    \caption{Correlation matrix of the selected SNPs.}
    \label{fig:correlation_matrix_20}
  \end{subfigure}
  \caption{(a) Best-subset inclusion path showing the selected SNPs for model
    sizes $k = 1$ to $k = 20$.  (b) Correlation matrix for the selected SNPs
    showcasing relationships between the predictors.}
  \label{supplefig:combined_figure}
\end{figure}

\begin{figure}[H]
  \centering
  \includegraphics[height=5cm]{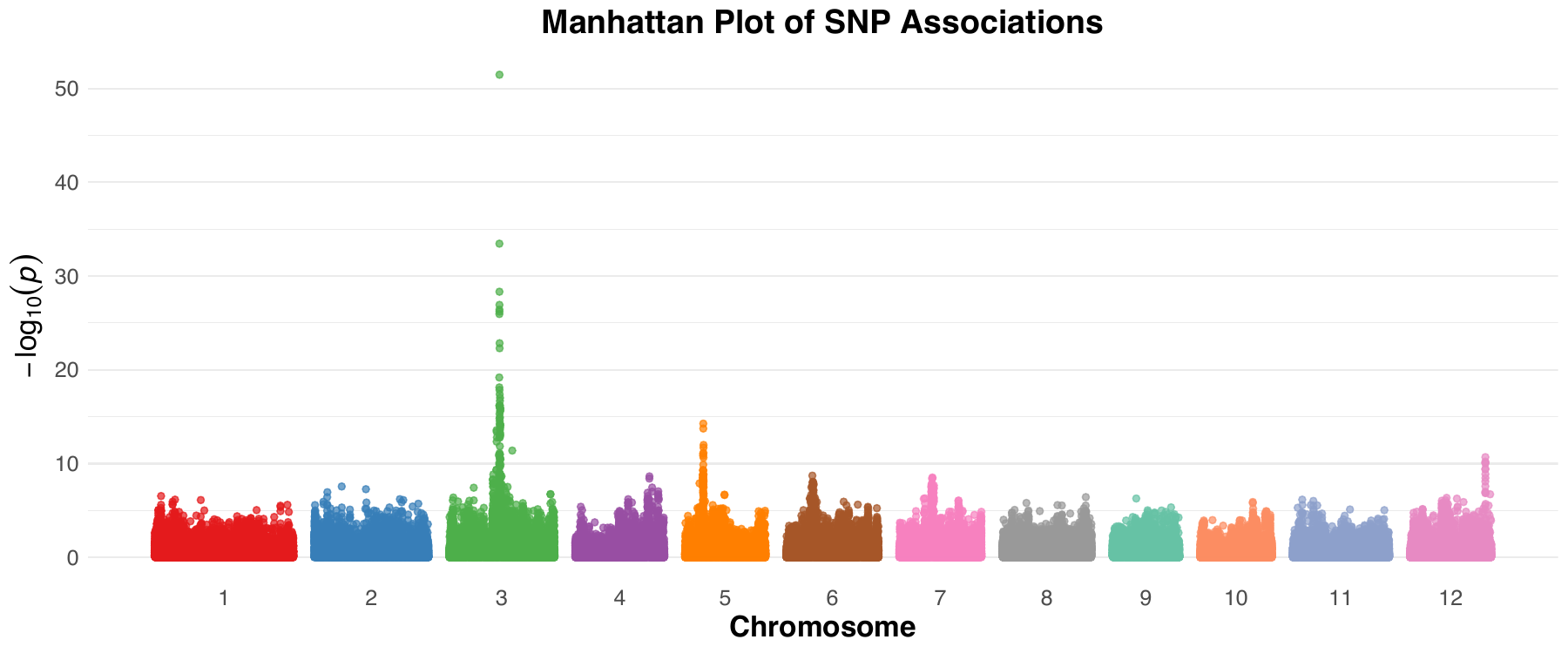}
  \caption{Manhattan plot based on unadjusted p-values using univariate student test}
  \label{manhattanplot}
\end{figure}

\clearpage
\section{Gene Annotations for the Khan SRBCT Application}
\label{app:khan_genes}

\begin{table}[H]
\centering
\caption{Annotation of the 16 genes selected by COMBSS-GLM at model size $k = 12$, which achieves perfect classification accuracy on the Khan SRBCT test set. The column ``Entry $k$'' indicates the smallest model size at which each gene first appears in the inclusion path. Gene descriptions and Image IDs are obtained from the \texttt{srbct} dataset in the \textsf{R} package \texttt{mixOmics} \citep{rohart2017mixomics}.} 
\label{tab:khan_genes}
\begingroup\small
\begin{tabular}{rrll}
  \toprule
Entry $k$ & Gene & Image ID & Gene Description \\ 
  \midrule
1 & 1954 & 814260 & follicular lymphoma variant translocation 1 \\ 
  2 & 1955 & 784224 & fibroblast growth factor receptor 4 \\ 
  3 & 246 & 377461 & caveolin 1, caveolae protein, 22kD \\ 
  4 & 1389 & 770394 & Fc fragment of IgG, receptor, transporter, alpha \\ 
  5 & 1645 & 52076 & olfactomedinrelated ER localized protein \\ 
  6 & 187 & 296448 & insulin-like growth factor 2 (somatomedin A) \\ 
  7 & 509 & 207274 & Human DNA for insulin-like growth factor II (IGF-2); exon 7 an... \\ 
  8 & 2050 & 295985 & ESTs \\ 
  9 & 545 & 1435862 & antigen identified by monoclonal antibodies 12E7, F21 and O13 \\ 
  10 & 1319 & 866702 & protein tyrosine phosphatase, non-receptor type 13 (APO-1/CD95... \\ 
  11 & 910 & 839552 & nuclear receptor coactivator 1 \\ 
  12 & 1074 & 1471841 & ATPase, Na+/K+ transporting, alpha 1 polypeptide \\ 
   \bottomrule
\end{tabular}
\endgroup
\end{table}

\clearpage
\section{Hyperparameter Tuning}
\label{sec:penalty-calibration}

The concavity threshold $\delta_{\mathrm{conc}}$ from
Theorem~\ref{prop:concavity-threshold-unified} provides a natural anchor
for the geometric penalty schedule. For logistic and multinomial regression, we can compute it using $\nu_{\max}$ as shown in \eqref{eqn:delta_conc}.
The eigenvalue $\nu_{\max}$ is not available in closed form for general $X_u$, but
can be computed efficiently by power iteration: starting from a random
unit vector $v$, one repeatedly forms $v \leftarrow X_u^\top X_u v /
\|X_u^\top X_u v\|_2$ and estimates $\nu_{\max} \approx \|X_u v\|_2^2$.
Convergence is geometric and requires only matrix--vector products with
$X_u$ and $X_u^\top$, making the cost negligible relative to the inner
solves.

Given $\delta_{\mathrm{conc}}$, in our simulations, we set
\begin{equation*}
  \delta_{\max} = \delta_{\mathrm{conc}},\qquad\delta_{\min} = 10^{-3}\,\delta_{\mathrm{conc}},
\end{equation*}
so that the schedule begins well below the concavity threshold.

\paragraph{Controlling complexity via $N$}
The iteration budget $N$ is the primary user-facing parameter governing
computational cost: by Remark~\ref{rem:complexity}, the total cost of
Algorithm~\ref{alg:fw-homotopy-joint} is exactly $N$ calls to the inner
GLM solver, so the user can trade accuracy against runtime by adjusting
$N$ alone, without changing any other hyperparameter.
Importantly, the geometric penalty schedule traverses the full range
$[\delta_{\min},\delta_{\max}]$ for any $N\ge 2$, with growth rate
$r=(\delta_{\max}/\delta_{\min})^{1/N}$; a larger $N$ takes finer steps
along this path but reaches the same endpoint $\delta_{\max}=
\delta_{\mathrm{conc}}$.
Consequently, the subset selected by the algorithm is not sensitive to
$N$ once $N$ is moderately large: the concave landscape at
$\delta_{\max}$ is identical regardless of $N$, and the Frank--Wolfe
iterates are guided toward the same binary corner.

\begin{table}[H]
\centering
\caption{$N$-sensitivity analysis (Case~1, $p=30$, $\rho=0.6$, 10 replications).}
\label{tab:N-sensitivity}
\begin{tabular}{c cc cc}
\hline
$N$ & \multicolumn{2}{c}{$k$} & \multicolumn{2}{c}{Test accuracy}\\
\cline{2-3} \cline{4-5}
 & Logistic & Multinomial & Logistic & Multinomial\\
\hline
$50$   & $9.9$  & $13.3$ & $0.89$ & $0.68$\\
$100$  & $9.9$  & $11.2$ & $0.89$ & $0.69$\\
$500$  & $9.9$  & $8.6$  & $0.89$ & $0.69$\\
$1000$ & $10.6$ & $8.4$  & $0.88$ & $0.69$\\
\hline
\end{tabular}
\end{table}


\end{document}